\providecommand{\U}[1]{\protect\rule{.1in}{.1in}}
\providecommand{\U}[1]{\protect\rule{.1in}{.1in}}
\newtheorem{assumption}{Assumption}
\newtheorem{theorem}{Theorem}
\newtheorem{lemma}{Lemma}
\newtheorem{proposition}{Proposition}
\newtheorem{remark}{Remark}
\newtheorem{definition}{Definition}
\useunder{\uline}{\ul}{}
\newcommand{\multiline}[1]{  \begin{tabularx}{\dimexpr\linewidth-\ALG@thistlm}[t]{@{}X@{}}
#1
\end{tabularx}
}
\setlist[itemize]{leftmargin=*}
\newcommand{\R}{\mathbb{R}}
\newcommand{\N}{\mathbb{N}}
\newcommand{\T}{\top}
\newcommand{\I}{\mathbf{I}}
\newcommand{\0}{\mathbf{0}}
\newcommand{\X}{\mathcal{X}}
\newcommand{\diag}{\text{diag}}
\newcommand{\tsup}[1]{\textsuperscript{#1}}
\newcommand{\mb}[1]{\mathbf{#1}}
\renewcommand{\H}{\mathcal{H}}
\newcommand{\bm}[1]{\begin{bmatrix}#1\end{bmatrix}}
\title{\LARGE \bf
Centralized and Decentralized Techniques for Analysis and Synthesis of Non-Linear Networked Systems
\vspace{-15pt}
}
\author{Shirantha Welikala, Hai Lin and Panos J. Antsaklis 
\thanks{The support of the National Science Foundation (Grant No. IIS-1724070, CNS-1830335, IIS-2007949) is gratefully acknowledged.}
\thanks{The authors are with the Department of Electrical Engineering, College of Engineering, University of Notre Dame, IN 46556, \texttt{{\small \{wwelikal,hlin1,pantsakl\}@nd.edu}}.}}
\begin{document}

\maketitle
\thispagestyle{empty}
\pagestyle{empty}


\begin{abstract}
In this paper, we develop centralized and decentralized techniques for analyzing and synthesizing networked systems comprised of interconnected sets of non-linear subsystems - only using the subsystem dissipativity properties. In particular, this paper extends a recent work that proposed dissipativity based centralized analysis techniques for non-linear networked systems as linear matrix inequality (LMI) problems. First, we consider four networked system configurations (NSCs) of interest and provide centralized stability/dissipativity \emph{analysis} techniques for them as LMI problems. Second, we show that the centralized interconnection topology \emph{synthesis} techniques for these NSCs can also be developed as LMI problems. This enables synthesizing the interconnection topology among the subsystems so that specific stability/dissipativity measures of the networked system are optimized. Next, we show that the proposed analysis and synthesis techniques can be implemented in a decentralized and compositional manner (i.e., subsystems can be added/removed conveniently). Finally, we include several numerical results to demonstrate our contributions.
\end{abstract}

\section{Introduction}\label{Sec:Introduction}

Large-scale networked systems have gained a renewed attention over the past several years due to their broad applications in infrastructure networks \cite{Agarwal2021,Arcak2022,Tang2021}, biological networks \cite{Chaves2018,Rufino2012,Arcak2008}, vehicle platooning \cite{Song2022Ax,Karafyllis2021,Antonelli2013}, electronic circuits \cite{Jafarian2019,Jeltsema2005}, mechanical networks \cite{Zhu2015,Papageorgiou2004} and so on. Related to such networked systems, the main research thrusts have been focused on addressing problems such as analysis/verification \cite{Ebihara2017,Arcak2016,Cremean2003}, optimization \cite{Welikala2020P7,Oltafi-Saber2004,Welikala2020J2}, abstraction \cite{Lavaei2020,Zamani2018,Lavaei2020b}, controller synthesis \cite{WelikalaP32022,Nejati2022,Morrison2021} and network topology synthesis \cite{Ghosh2021,Ebihara2017,Rufino2018,Ghanbari2016}. Moreover, the literature on large-scale networked systems can also be categorized based on the nature of the constituent subsystems as: discrete \cite{Welikala2022Ax2}, continuous \cite{Agarwal2021}, linear \cite{WelikalaP32022}, piece-wise linear \cite{Welikala2020J2}, non-linear \cite{Arcak2022}, stochastic \cite{Lavaei2020}, switched \cite{Lavaei2022} and hybrid \cite{Nejati2022}. 

With respect to the literature above, this paper aims to address the research problems: analysis/verification and network topology synthesis of large-scale networked systems comprised of non-linear subsystems (henceforth referred to as \emph{analysis} and \emph{synthesis} of non-linear networked systems). In particular, we are interested in \emph{analyzing} the stability/dissipativity properties of a given non-linear networked system and, when necessary, \emph{synthesizing} the interconnection topology of the non-linear networked system to guarantee/optimize such stability/dissipativity properties. 

To achieve these goals, we only assume the knowledge of the involved subsystem dissipativity properties such as their passivity, passivity indices, $L_2$-stability and $L_2$-gain values. Note that identifying such dissipativity properties is far more convenient, efficient and reliable than having to identify entire dynamic models \cite{Koch2021,WelikalaP42022,Arcak2022,Xia2014,Zakeri2021}. Further, we formulate the interested non-linear networked system analysis and synthesis problems as linear matrix inequality (LMI) problems. Thus, they can be conveniently implemented and efficiently solved \cite{Boyd1994,Lofberg2004}. Furthermore, we formulate these LMI problems so that they can be implemented in a decentralized and compositional manner. Consequently, new subsystems can be added/removed to/from a networked system conveniently without having to re-design the existing interconnections \cite{Agarwal2021,WelikalaP32022}. Due to these unique qualities, the proposed solutions in this paper can be used for higher-level design of large-scale networked systems. To summarize, we provide dissipativity-based centralized and decentralized LMI techniques to analyze and synthesize non-linear networked systems.

This paper was inspired by the well-known work \cite{Arcak2016,Arcak2022} where dissipativity based centralized LMI techniques have been developed to analyze non-linear networked systems. Compared to \cite{Arcak2016,Arcak2022}, we consider two special non-linear networked system configurations (NSCs) with multiple applications. Moreover, as mentioned earlier, we not only provide techniques for analysis but also address the corresponding synthesis and decentralization problems.  

It should be noted that even though the analysis techniques proposed in \cite{Arcak2016,Arcak2022} have been derived in a compositional manner, they can only be evaluated in a centralized setting. To address this challenge, inspired by Sylvester's criterion \cite{Antsaklis2006}, the work in \cite{Agarwal2021,WelikalaP32022} propose a decentralized and compositional iterative process to analyze linear networked systems. In this paper, we adopt this idea to analyze non-linear networked systems in a decentralized and compositional manner.

A critical difference between this work and \cite{Agarwal2021,WelikalaP32022} is that we only use dissipativity properties of the subsystems, whereas \cite{Agarwal2021,WelikalaP32022} use subsystem dynamic models. Note, however, that this additional information and the linear dynamics assumed in \cite{Agarwal2021,WelikalaP32022} also enable the synthesis of standard feedback controllers and observers for linear networked systems. In contrast, due to the said complexities that we consider, here we do not aim to synthesize controllers. Instead, we aim to synthesize appropriate interconnection topologies (in a centralized or decentralized and compositional manner) for non-linear networked systems when the analysis fails.  

Nevertheless, there is a wealth of literature that focuses on the synthesis of controllers for networked systems. Some recent examples are as follows. Considering a linear networked system, a decentralized observer based controller is synthesized in \cite{Elmahdi2015}. For stochastic hybrid networked systems, to enforce a particular class of global specifications, a compositional framework is proposed in \cite{Nejati2022} to construct local control barrier functions. This solution is extended in \cite{Jahanshahi2022} for discrete-time partially observable stochastic networked systems. For non-linear dynamical systems, a feed-forward control procedure is proposed in \cite{Morrison2021} using model reduction and bifurcation theory. A decentralized controller synthesis approach is proposed in \cite{Liu2019b} for discrete-time linear networked systems exploiting approximations of robust controlled invariant sets \cite{Rungger2017}. Compositionally constructed abstractions of networked systems are used to synthesize controllers for discrete-time stochastic \cite{Lavaei2019,Lavaei2022,Lavaei2017,Lavaei2020}, continuous-time hybrid \cite{Nejati2021,Awan2020} and nonlinear \cite{Zamani2018} networked systems.

As mentioned earlier, these controller synthesis techniques require additional information and assumptions regarding the involved subsystems and their interconnection topology. In contrast, in this paper, we only use subsystem dissipativity properties and focus on synthesizing the interconnection topology - which describes how each subsystem output is connected to the other subsystem inputs in the considered networked system (characterized by a block matrix called the ``interconnection matrix''). Note that this approach is more reasonable when there are ready-made standard controllers (perhaps with tunable parameters that may change their dissipativity properties). For example, see the scenarios considered in \cite{Cremean2003,Ghanbari2016,Ebihara2017,Rufino2018,Arcak2022,Xia2014,Zakeri2019}.

However, literature focusing directly on synthesizing this interconnection matrix of networked systems are few and far between (especially for non-linear networked systems). In fact, the leading paper that motivated us to address this particular synthesis problem is \cite{Xia2014} (see also \cite{Xia2018}). In particular, \cite{Xia2014,Xia2018} consider a networked system comprised only of one or two subsystems and propose a set of non-linear inequalities to guide the synthesis of the corresponding interconnection matrix. In contrast, we consider a networked system with an arbitrary number of subsystems and provide LMI conditions to efficiently and conveniently synthesize the interconnection matrix (centrally or decentrally). Therefore, the proposed synthesis results in this paper can be seen as a generalization of \cite{Xia2014,Xia2018}. Note also that the techniques proposed in \cite{Xia2014,Xia2018} have influenced several recent works on problems such as designing switched controllers \cite{Ghanbari2017}, networked controllers \cite{Rahnama2018} and adaptive controllers \cite{Zakeri2019}. Therefore, it is reasonable to expect that the proposed techniques in this paper may also be able to generalize the techniques proposed in \cite{Ghanbari2017,Rahnama2018,Zakeri2019}.

In addition, the work in \cite{Ebihara2017} has considered the interconnection topology synthesis problem limited to linear and positive networked systems. However, it requires the explicit knowledge of the involved linear positive subsystems. In \cite{Rufino2018}, symmetries in the interconnection matrix are exploited to reduce the computational complexity of the networked system analysis. However, this approach only allows the interconnection matrix to be diagonally scaled from a predefined value (to recover symmetries). For stability analysis of large-scale interconnected systems, similar symmetry based techniques have been used in \cite{Ghanbari2016,Goodwine2013}. However, they assume the interconnection matrix (hence the topology) as a given - rather than treating it as a decision variable. Taking a graph theoretic approach, necessary conditions for the stability of a class of non-linear networked systems are given in terms of the interconnection matrix in \cite{Cremean2003}. While this solution leads to identify several types of interconnection topologies that guarantee stability, it is only applicable to a small class of non-linear systems with known dynamics.

\paragraph{\textbf{Contributions}}
With respect to the literature mentioned above, our contributions can be summarized as follows: 
(1) We consider several networked system configurations (NSCs) that can be used to model several exciting and widely used non-linear networked system configurations;
(2) For the centralized analysis (stability/dissipativity) of such NSCs, we propose linear matrix inequality (LMI) based techniques;
(3) For the centralized synthesis (interconnection topology) of such NSCs, we propose LMI based approaches;
(4) We propose decentralized and compositional (i.e., resilient to subsystem additions and removals) counterparts for the proposed centralized analysis and synthesis approaches; and 
(5) We provide several numerical results to support our theoretical results.

\paragraph{\textbf{Organization}} 
This paper is organized as follows. In Section \ref{Sec:Preliminaries}, we provide several preliminary concepts and results related to dissipativity, network matrices and their positive definiteness analysis. Different networked system configurations and applicable centralized techniques to analyze their stability and dissipativity are discussed in Sec. \ref{Sec:NetworkedSystems}. In Sec. \ref{Sec:CentralizedSynthesis}, we propose centralized techniques to synthesize the interconnection matrices involved in the considered networked systems. Section  \ref{Sec:DecentralizedAnalysisAndSynthesis} provide the decentralized counterparts of the proposed centralized analysis and centralized synthesis techniques. Finally, Sec. \ref{Sec:NumericalResults} discusses several numerical examples before concluding the paper in Sec. \ref{Sec:Conclusion}.

\paragraph{\textbf{Notation}}
The sets of real and natural numbers are denoted by $\R$ and $\N$, respectively. We define $\N_N\triangleq\{1,2,\ldots,N\}$ where $N\in\N$. 
An $n\times m$ block matrix $A$ can be represented as $A=[A_{ij}]_{i\in\N_n, j\in\N_m}$ where $A_{ij}$ is the $(i,j)$\tsup{th} block of $A$. $[A_{ij}]_{j\in \N_m}$ and $\diag(A_{ii}:i\in\N_n)$ represent a block row matrix and a block diagonal matrix, respectively. We also define $\{A_i\} \triangleq \{A_{ii}\}\cup\{A_{ij},j\in\N_{i-1}\}\cup\{A_{ji}:j\in\N_i\}$. If $\Psi\triangleq[\Psi^{kl}]_{k,l \in \N_m}$, its block element-wise form \cite{WelikalaP32022} is denoted as $\mbox{BEW}_n(\Psi) \triangleq [[\Psi^{kl}_{ij}]_{k,l\in\N_m}]_{i,j\in\N_n}$ (note that, when indexing, subscripts and superscripts are used interchangeably, e.g., $A^{ij} \equiv A_{ij}$). The transpose of a matrix $A$ is denoted by $A^\T$ and $(A^\T)^{-1} = A^{-\T}$. The zero and identity matrices are denoted by $\0$ and $\I$, respectively (dimensions will be clear form the context). A symmetric positive definite (semi-definite) matrix $A\in\R^{n\times n}$ is represented as $A=A^\T>0$ ($A=A^\T \geq 0$). Unless stated otherwise, $A>0 \iff A=A^\T>0$. The symbol $\star$ is used to represent redundant conjugate matrices (e.g., 
$\scriptsize \bm{A & B \\ \star & C} \equiv \bm{A & B \\ B^\T & C}$, $\scriptsize \bm{A_{ij} & B_{ij} \\ \star & C_{ij}} \equiv \bm{A_{ij} & B_{ij} \\ B_{ji}^\T & C_{ij}}$ and $A^\T B\, \star \equiv A^\T B A$). 
The symmetric part of a matrix $A$ is defined as $\H(A) \triangleq A+A^\T$ and $\H(A_{ij}) \triangleq A_{ij}+A_{ji}^\T$. $\mb{1}_{\{\cdot\}}$ represents the indicator function and $e_{ij} \triangleq \I \cdot \mb{1}_{\{i=j\}}$.

\section{Preliminaries}\label{Sec:Preliminaries}

\subsection{Equilibrium Independent Dissipativity}

Consider the non-linear dynamical system  
\begin{equation}\label{Eq:GeneralSystem}
\begin{aligned}
    \dot{x}(t) = f(x(t),u(t)),\\
    y(t) = h(x(t),u(t)),
    \end{aligned}
\end{equation}
where $x(t)\in\R^{n}$, $u(t)\in \R^{q}$, $y(t)\in\R^{m}$, and $f:\R^{n}\times \R^{q} \rightarrow \R^{n}$ and $h:\R^{n}\times \R^{q}\rightarrow \R^{m}$ are continuously differentiable. Suppose there exists a set $\X\subset \R^{n}$ where for every $\tilde{x}\in\X$ there is a unique $\tilde{u}\in\R^{q}$ satisfying $f(\tilde{x},\tilde{u})=\0$, and both $\tilde{u}$ and $\tilde{y}\triangleq h(\tilde{x},\tilde{u})$ are implicit functions of $\tilde{x}$. 

Note that when $\X=\{x^*\}$ (a singleton), $x^*$ is the unique equilibrium of \eqref{Eq:GeneralSystem} (and $x^*=\0$ is a conventional assumption that we avoid making here). The \emph{equilibrium-independent dissipativity} (EID) property \cite{Arcak2022} defined next (which includes the conventional dissipativity property \cite{Willems1972a}) enables stability and dissipativity analysis of \eqref{Eq:GeneralSystem} without having explicit knowledge regarding the equilibrium point(s) of \eqref{Eq:GeneralSystem}.  

\begin{definition}
The system \eqref{Eq:GeneralSystem} is equilibrium-independent dissipative (EID) with supply rate $s:\R^{q}\times\R^{m}\rightarrow \R$ if there exists a continuously differentiable storage function $V:\R^{n}\times \X \rightarrow \R$ satisfying 
$\forall (x,\tilde{x},u)\in\R^{n}\times \X \times \R^{q}$:
$V(x,\tilde{x})>0$, $V(\tilde{x},\tilde{x})=0$ and 
\begin{equation}
    \dot{V}(x,\tilde{x}) = \nabla_x V(x,\tilde{x})f(x,u)
    \leq  s(u-\tilde{u},y-\tilde{y}).
\end{equation}
\end{definition}

Note that the above EID concept can be specialized based on the form of the used storage function $s(\cdot,\cdot)$. 
For example, the well-known $(Q,S,R)$-dissipativity property \cite{WelikalaP32022} is defined with respect to a quadratic supply function. An equivalent dissipativity property (defined next) is used in the remainder of this paper for notational convenience.  

\begin{definition}\label{Def:X-EID}
The system \eqref{Eq:GeneralSystem} is $X$-equilibrium independent dissipative ($X$-EID) if it is EID with respect to the quadratic supply rate
\begin{equation}
    s(u-\tilde{u},y-\tilde{y}) \triangleq 
    \bm{u-\tilde{u}\\y-\tilde{y}}^\T 
    \underbrace{\bm{X^{11} & X^{12}\\X^{21} & X^{22}}}_{\triangleq X}
    \bm{u-\tilde{u}\\y-\tilde{y}}.
\end{equation}
\end{definition}

If $\X = \{x^*\}$ where $x^*=\0$ and corresponding equilibrium control and output values are $u^* = \0$ and $y^* = \0$, respectively, $X$-EID and $(Q,S,R)$-dissipativity are equivalent and   
\begin{equation}
X \triangleq \bm{X^{11} & X^{12}\\ X^{21} & X^{22}} \equiv 
\bm{R & S^\T\\ S & Q}.
\end{equation}
Similar to the $(Q,S,R)$-dissipativity, $X$-EID covers several properties of interest based on the choice of $X$.

\begin{remark}\label{Rm:X-DissipativityVersions}
If the system \eqref{Eq:GeneralSystem} is $X$-EID with:  
\begin{enumerate}
    \item $X = \bm{\0 & \frac{1}{2}\I \\ \frac{1}{2}\I & \0}$, then it is \emph{passive};
    \item $X = \bm{-\nu\I & \frac{1}{2}\I \\ \frac{1}{2}\I & -\rho\I}$, then it is \emph{strictly passive} ($\nu$ and $\rho$ are input and output passivity indices, respectively \cite{WelikalaP42022});
    \item $X = \bm{\gamma^2\I & \0 \\ \0 & -\I}$, then it is finite-gain \emph{$L_2$-stable} ($\gamma$ is the $L_2$-gain);
\end{enumerate}
in an equilibrium-independent manner.
\end{remark}

\subsection{Network Matrices}

Consider a directed network $\mathcal{G}_n=(\mathcal{V},\mathcal{E})$ where $\mathcal{V} \triangleq \{\Sigma_i:i\in\N_n\}$ is the set of subsystems (nodes), $\mathcal{E} \subset \mathcal{V}\times \mathcal{V}$ is the set of inter-subsystem interconnections (edges), and $n\in\N$. We next define a class of matrices named ``network matrices'' \cite{WelikalaP32022} that corresponds to a such network topology $\mathcal{G}_n$.

\begin{definition}\cite{WelikalaP32022}\label{Def:NetworkMatrices}
	Given a network $\mathcal{G}_n=(\mathcal{V},\mathcal{E})$, any $n\times n$ block matrix $\Theta = \bm{\Theta_{ij}}_{i,j\in\N_n}$ is a \emph{network matrix} if: 
	(1) any information specific to the subsystem $i$ is embedded only in its $i$\tsup{th} block row or block column, and 
	(2) $(\Sigma_i,\Sigma_j) \not\in \mathcal{E}$ and $(\Sigma_j,\Sigma_i)\not\in \mathcal{E}$ implies $\Theta_{ij}=\Theta_{ji}=\0$ for all $i,j\in\N_n$.
\end{definition}


Based on this definition, any $n \times n$ block matrix $\Theta=[\Theta_{ij}]_{i,j\in\N_n}$ is a network matrix of $\mathcal{G}_n$ if $\Theta_{ij}$ is a coupling weight matrix corresponding to the edge $(\Sigma_i,\Sigma_j)\in\mathcal{V}$. Note also that any $n \times n$ block diagonal matrix $\Theta=\diag(\Theta_{ii}:i\in \N_n)$ will be a network matrix of any network with $n\in\N$ subsystems if $\Theta_{ii}$ is specific only to the subsystem $i$.
The following lemma provides several useful properties of such network matrices - first established in \cite{WelikalaP32022}.

\begin{lemma}
	\label{Lm:NetworkMatrixProperties}
	\cite{WelikalaP32022}
	Given a network $\mathcal{G}_n$, a few corresponding block network matrices $\Theta,\Phi,\{\Psi^{kl}:k,l\in\N_m\}$, and some arbitrary block-block matrix $\Psi\triangleq[\Psi^{kl}]_{k,l \in \N_m}$:
	\begin{enumerate}
		\item $\Theta^\T$, \ $\alpha \Theta + \beta \Phi$ are network matrices for any $\alpha,\beta \in \R$.
		\item $\Phi \Theta$, $\Theta\Phi$ are network matrices whenever $\Phi$ is a block diagonal network matrix.
		\item $\mbox{BEW}(\Psi)\triangleq [[\Psi^{kl}_{ij}]_{k,l\in\N_m}]_{i,j\in\N_n}$ is a network matrix.
	\end{enumerate}
\end{lemma}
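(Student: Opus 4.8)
The plan is to verify each of the three claims directly against Definition \ref{Def:NetworkMatrices}, checking both the localization condition (1) (subsystem-$i$ information confined to block row/column $i$) and the sparsity condition (2) (off-diagonal blocks of absent edge pairs vanishing). Since every operation involved acts block-by-block, I would reduce each claim to an explicit expression for the $(i,j)$th block and then read off both conditions.

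For claim (1), I would write $(\Theta^\T)_{ij} = \Theta_{ji}^\T$ and $(\alpha\Theta + \beta\Phi)_{ij} = \alpha\Theta_{ij} + \beta\Phi_{ij}$. For the transpose, condition (2) is immediate, since $(\Sigma_i,\Sigma_j),(\Sigma_j,\Sigma_i)\not\in\mathcal{E}$ gives $\Theta_{ij}=\Theta_{ji}=\0$ and hence $(\Theta^\T)_{ij}=\Theta_{ji}^\T=\0$ and $(\Theta^\T)_{ji}=\Theta_{ij}^\T=\0$; condition (1) holds because transposition only swaps block row $i$ with block column $i$, leaving subsystem-$i$ information within them. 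For the linear combination, both conditions pass through term-by-term, as each summand already satisfies them and vanishing plus vanishing is vanishing.

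For claim (2), the essential observation is that block-diagonality of $\Phi=\diag(\Phi_{ii}:i\in\N_n)$ collapses the matrix product to a single surviving term, $(\Phi\Theta)_{ij}=\Phi_{ii}\Theta_{ij}$ and $(\Theta\Phi)_{ij}=\Theta_{ij}\Phi_{jj}$. Condition (2) then follows at once from $\Theta_{ij}=\0$ on absent-edge pairs, and condition (1) holds because the subsystem-$i$ factor $\Phi_{ii}$ premultiplies a block already sitting in block row $i$ (respectively $\Phi_{jj}$ postmultiplies in block column $j$), introducing no cross-subsystem information. I expect this to be the step demanding the most care: without the block-diagonal hypothesis the general product $(\Phi\Theta)_{ij}=\sum_k\Phi_{ik}\Theta_{kj}$ would neither vanish on absent-edge pairs nor keep subsystem information localized, so the argument must make explicit that block-diagonality is precisely what is needed.

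For claim (3), I would note that $\mbox{BEW}(\Psi)$ is merely a reindexing: its $(i,j)$th block is the $m\times m$ array $[\Psi^{kl}_{ij}]_{k,l\in\N_m}$ that gathers, for fixed $(i,j)$, the $(i,j)$-blocks across the whole family $\{\Psi^{kl}\}$. Since each $\Psi^{kl}$ is itself a network matrix, an absent edge pair forces $\Psi^{kl}_{ij}=\Psi^{kl}_{ji}=\0$ for every $k,l$, so the full $(i,j)$ and $(j,i)$ blocks of $\mbox{BEW}(\Psi)$ are zero, giving condition (2); condition (1) follows because the reindexing preserves which block row/column each entry belongs to. The three parts are independent, so no claim needs to invoke another.
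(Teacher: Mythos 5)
Your verification is correct: each of the three claims does reduce to the explicit block formulas you give ($(\Theta^\T)_{ij}=\Theta_{ji}^\T$, $(\Phi\Theta)_{ij}=\Phi_{ii}\Theta_{ij}$, and the reindexing for $\mbox{BEW}(\Psi)$), and you rightly flag that block-diagonality of $\Phi$ is exactly what collapses the sum $\sum_k\Phi_{ik}\Theta_{kj}$ so that both conditions of Definition~\ref{Def:NetworkMatrices} survive the product. The paper itself states this lemma without proof, deferring to the cited reference, so there is no in-paper argument to compare against; your direct block-level check is the standard proof and is complete.
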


The above lemma enables claiming custom block matrices as ``network matrices'' by enforcing additional conditions. For example, if $M,P$ are block network matrices and $P$ is block diagonal, then: (1) $M^\T P, P M$ and $M^T P + PM$ are network matrices, and (2) if $\scriptsize \Psi\triangleq\bm{P & M^\T P\\ PM & P}$ is some block-block matrix, its \emph{block element-wise} (BEW) form $\mbox{BEW}(\Psi)\triangleq$ $\scriptsize \bm{\bm{P_{ii}e_{ij} & M_{ji}^\T P_{jj} \\ P_{ii}M_{ij} & P_{ii}e_{ij}}}_{i,j\in\N_N}$ is a network matrix.

\subsection{Positive Definiteness}
We next provide several useful lemmas on the positive (or negative) definiteness of a few matrix expressions of interest. 

\begin{lemma}\label{Lm:AlternativeLMI_Schur}
	Let $W \triangleq \Phi^\T \Theta \Phi - \Gamma$ where $\Phi,\Theta$ and $\Gamma$ are arbitrary matrices such that $\Theta > 0$ and $\Gamma = \Gamma^\T$. 
	Then: 
	$$W < 0 \iff 
	\bm{\Theta & \Theta \Phi \\ \Phi^\T \Theta & \Gamma} > 0.$$
\end{lemma}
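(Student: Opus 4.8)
The plan is to recognize this as a Schur-complement statement and to establish it directly through a congruence transformation that block-diagonalizes the $2\times 2$ block matrix on the right-hand side. Write $M \triangleq \bm{\Theta & \Theta \Phi \\ \Phi^\T \Theta & \Gamma}$.

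First I would check that $M$ is symmetric, so that the condition $M>0$ is even meaningful. This uses $\Theta = \Theta^\T$ (which is implied by $\Theta > 0$) and $\Gamma = \Gamma^\T$ (assumed): the $(2,1)$ block $\Phi^\T \Theta$ is precisely the transpose of the $(1,2)$ block $\Theta\Phi$ since $(\Theta\Phi)^\T = \Phi^\T \Theta^\T = \Phi^\T \Theta$, and both diagonal blocks are symmetric.

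Next I would introduce the block lower-triangular matrix $L \triangleq \bm{\I & \0 \\ -\Phi^\T & \I}$, which is nonsingular (unit block-triangular), and compute the congruence $L M L^\T$. Multiplying on the left performs the block row operation ``$-\Phi^\T\times(\text{row }1) + (\text{row }2)$'', which annihilates the $(2,1)$ block and converts the $(2,2)$ block into $\Gamma - \Phi^\T \Theta \Phi$; multiplying on the right by $L^\T$ performs the mirrored column operation and clears the $(1,2)$ block. The outcome is the block-diagonal matrix $\bm{\Theta & \0 \\ \0 & \Gamma - \Phi^\T \Theta \Phi}$. The one computation I must get right is that the surviving lower block equals $\Gamma - \Phi^\T \Theta \Phi = -W$.

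Finally, since $L$ is nonsingular, congruence preserves definiteness, so $M > 0 \iff L M L^\T > 0$, and a symmetric block-diagonal matrix is positive definite exactly when each diagonal block is. Hence $M > 0 \iff (\Theta > 0 \text{ and } -W > 0)$; because the hypothesis $\Theta > 0$ is already in force, this collapses to $M > 0 \iff W < 0$, which is the claim. There is no serious obstacle here: the only place demanding care is the congruence product, where the off-diagonal blocks must cancel exactly and the factor $\Theta^{-1}$ appearing in the textbook Schur complement $\Gamma - (\Phi^\T\Theta)\,\Theta^{-1}\,(\Theta\Phi)$ must simplify to $\Phi^\T\Theta\Phi$ via $\Theta\Theta^{-1}\Theta = \Theta$. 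The congruence route is preferable precisely because it never writes $\Theta^{-1}$ and so sidesteps that simplification altogether.
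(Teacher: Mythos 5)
Your proof is correct and in substance the same as the paper's: both establish the equivalence by a Schur-complement/congruence argument. The only difference is cosmetic --- the paper passes through the intermediate matrix $\bm{\Theta^{-1} & \Phi \\ \Phi^\T & \Gamma}$ and then applies a congruence by $\diag(\Theta,\I)$, whereas you block-diagonalize the target matrix directly with the unit triangular congruence $L=\bm{\I & \0 \\ -\Phi^\T & \I}$ and thereby never write $\Theta^{-1}$.
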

\begin{proof}	
Using the Schur's complement theory and the congruence principle \cite{Bernstein2009}, we can directly obtain:\\
$
	W<0 \iff \bm{\Theta^{-1} & \Phi \\ \Phi^\T & \Gamma} > 0 \iff 
	\bm{\Theta & \Theta \Phi \\ \Phi^\T \Theta & \Gamma} > 0.
$ 
\end{proof}

\begin{lemma}\label{Lm:AlternativeLMI_LowerBound}
	Let $W \triangleq \Phi^\T \Theta \Phi - \Gamma$ where $\Phi,\Theta$ and $\Gamma$ are arbitrary matrices such that $\Theta < 0$ and $\Gamma = \Gamma^\T$. 
	Then: 
	$$W < 0 \impliedby 
	- \Phi^\T \Theta \Psi - \Psi^\T \Theta \Phi + \Psi^\T \Theta \Psi + \Gamma > 0,$$
	for any matrix $\Psi$ with appropriate dimensions.
\end{lemma}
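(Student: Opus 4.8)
The plan is to establish the single stated implication by a completion-of-squares argument that trades on the sign hypothesis $\Theta<0$. Since $\Theta=\Theta^\T<0$ and $\Gamma=\Gamma^\T$, the expression $-W=\Gamma-\Phi^\T\Theta\Phi$ is symmetric, so $W<0$ is equivalent to $-W>0$. The idea is to show that the hypothesized positive-definite matrix differs from $-W$ only by a $\Theta$-weighted quadratic form in $(\Phi-\Psi)$, which is automatically sign-definite once we know $\Theta<0$. Equivalently, one can read the claim as the statement that the concave map $\Phi\mapsto\Phi^\T\Theta\Phi$ lies below its tangent plane at the (free) linearization point $\Psi$.

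Concretely, I would abbreviate the hypothesis matrix as $G\triangleq -\Phi^\T\Theta\Psi-\Psi^\T\Theta\Phi+\Psi^\T\Theta\Psi+\Gamma$ and then simply compute the difference $G-(-W)$. Substituting $-W=\Gamma-\Phi^\T\Theta\Phi$ and cancelling the common $\Gamma$, the four remaining terms assemble into a perfect square:
\[
G-(\Gamma-\Phi^\T\Theta\Phi)=\Phi^\T\Theta\Phi-\Phi^\T\Theta\Psi-\Psi^\T\Theta\Phi+\Psi^\T\Theta\Psi=(\Phi-\Psi)^\T\Theta(\Phi-\Psi).
\]
Hence $-W=G-(\Phi-\Psi)^\T\Theta(\Phi-\Psi)$. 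The concluding step is pure sign tracking: because $\Theta<0$, the form $(\Phi-\Psi)^\T\Theta(\Phi-\Psi)$ is negative semidefinite, so $-(\Phi-\Psi)^\T\Theta(\Phi-\Psi)\geq 0$. Therefore $-W$ is the sum of the positive-definite matrix $G$ (the hypothesis) and a positive-semidefinite matrix, which is positive definite; that is, $W<0$.

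I do not expect a genuine analytic obstacle here — once the right regrouping is spotted, everything reduces to the elementary fact that a positive-definite matrix plus a positive-semidefinite matrix is positive definite. The only real ``work'' is recognizing which completion of squares to use, i.e.\ guessing that the awkward cross-terms in $G$ are exactly those produced by expanding $(\Phi-\Psi)^\T\Theta(\Phi-\Psi)$. It is worth remarking why this lemma yields only the one-directional $\impliedby$, in contrast to the equivalence in Lemma~\ref{Lm:AlternativeLMI_Schur}: there $\Theta>0$ permits a reversible Schur-complement manipulation, whereas here $\Theta<0$ makes $\Phi^\T\Theta\Phi$ concave, so bounding it by a tangent introduces slack (the discarded semidefinite term) and leaves $\Psi$ as a free parameter that can be optimized to tighten the resulting sufficient condition.
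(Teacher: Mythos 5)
Your proof is correct and uses essentially the same argument as the paper: both hinge on the completion of squares $(\Phi-\Psi)^\T\Theta(\Phi-\Psi)=\Phi^\T\Theta\Phi-\Phi^\T\Theta\Psi-\Psi^\T\Theta\Phi+\Psi^\T\Theta\Psi$ together with the fact that $\Theta<0$ makes this term sign-definite. If anything, your version is slightly more careful, since you correctly note that $(\Phi-\Psi)^\T\Theta(\Phi-\Psi)$ is only negative \emph{semi}definite (the paper writes $<0$), which is exactly what is needed for the positive-definite-plus-positive-semidefinite conclusion.
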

\begin{proof}
Note that $\Theta<0 \implies (\Phi-\Psi)^\T \Theta (\Phi-\Psi)<0 
\iff \Phi^\T \Theta \Phi - \Gamma < \Phi^\T \Theta \Psi + \Psi^\T \Theta \Phi - \Psi^\T \Theta \Psi -\Gamma$.
Thus, $W<0 \impliedby -\Phi^\T \Theta \Psi - \Psi^\T \Theta \Phi + \Psi^\T \Theta \Psi + \Gamma>0$. 
\end{proof}

\begin{remark}
	 Note that, the use of $\Psi=\0$ in the above lemma yields $W <0 \impliedby \Gamma > 0$. However, the use of $\Psi=\alpha \I$ (assuming $\Phi$ is a square matrix) yields 
	\begin{equation}
		W<0 \impliedby -\alpha\left( \Phi^\T \Theta + \Theta \Phi \right) + \alpha^2 \Theta + \Gamma  > 0.
	\end{equation}
	Hence the latter choice for $\Psi$ (i.e., $\Psi = \alpha \I$) is more effective whenever 
	$-\alpha\left( \Phi^\T \Theta + \Theta \Phi \right) + \alpha^2 \Theta > 0$. Since $\alpha^2 \Theta < 0$, this condition holds whenever $\Phi^\T \Theta + \Theta \Phi>0$ with $\alpha < 0$ or $\Phi^\T \Theta + \Theta \Phi<0$ with $\alpha>0$.  
\end{remark}

\begin{lemma}\label{Lm:AlternativeLMI_BEW}
	\cite{WelikalaP32022}
	Let $\Psi = [\Psi^{kl}]_{k,l\in\N_m}$ be an $m\times m$ block-block matrix where $\Psi^{kl}, \forall  k,l\in\N_m$ are $n \times n$ block matrices. Then, $\Psi > 0 \iff \text{BEW}(\Psi) \triangleq [[\Psi^{kl}_{ij}]_{k,l\in\N_m}]_{i,j\in\N_n}>0$.
\end{lemma}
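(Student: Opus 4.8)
The plan is to recognize the block-element-wise (BEW) operation as nothing more than a reindexing of the entries of $\Psi$ --- a simultaneous permutation of its rows and columns --- and then to invoke the fact that a congruence transformation by an orthogonal (permutation) matrix preserves positive definiteness. Viewing $\Psi$ as an $(mn)\times(mn)$ matrix, each entry $\Psi^{kl}_{ij}$ occupies the row addressed by the pair $(k,i)$ and the column addressed by $(l,j)$, where the outer coordinate $k$ (resp.\ $l$) selects the $m$-level block and the inner coordinate $i$ (resp.\ $j$) selects the $n$-level block within it. In $\text{BEW}(\Psi)$ the same entry $\Psi^{kl}_{ij}$ is placed in the row addressed by $(i,k)$ and the column addressed by $(j,l)$. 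Thus passing from $\Psi$ to $\text{BEW}(\Psi)$ is exactly the index swap $(k,i)\mapsto(i,k)$ applied identically to rows and columns.

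First I would make this precise by introducing the associated commutation (perfect-shuffle) permutation matrix $P$ on the index set $\N_m\times\N_n$ and verifying, entry by entry, the identity
$$ \text{BEW}(\Psi) = P\,\Psi\,P^\T. $$
Once this is established, the equivalence is immediate. Since $P$ is a permutation matrix, $P^\T P = \I$ and $P$ is invertible, so for every vector $x$ one has $x^\T P \Psi P^\T x = (P^\T x)^\T \Psi (P^\T x)$, and as $x$ ranges over all nonzero vectors so does $P^\T x$. Therefore $P\Psi P^\T > 0 \iff \Psi > 0$, i.e.\ $\text{BEW}(\Psi) > 0 \iff \Psi > 0$. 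Equivalently, one may simply cite Sylvester's law of inertia: congruence by the invertible matrix $P$ leaves the inertia of $\Psi$ unchanged. Symmetry is preserved as well, since $\text{BEW}(\Psi)^\T = P\Psi^\T P^\T$, so the claim is consistent with the convention $A>0 \iff A = A^\T > 0$ adopted in the paper.

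The only genuine work lies in the middle step: pinning down $P$ and checking $\text{BEW}(\Psi)=P\Psi P^\T$ at the level of individual entries, which is where the double-index ordering must be tracked carefully --- and, if each $\Psi^{kl}_{ij}$ is itself a matrix rather than a scalar, $P$ should be read as a block-permutation matrix acting on the corresponding blocks. Everything after that is a one-line consequence of congruence invariance, so I expect this indexing verification to be the main (and essentially only) obstacle.
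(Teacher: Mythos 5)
Your proof is correct. The paper itself states this lemma without proof (deferring to the cited reference \cite{WelikalaP32022}), and your argument --- identifying $\text{BEW}(\Psi)=P\Psi P^\T$ for a perfect-shuffle (block-)permutation matrix $P$ and invoking congruence invariance of positive definiteness --- is precisely the standard argument used there, so it takes essentially the same approach.
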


\begin{remark}
Lemmas \ref{Lm:AlternativeLMI_Schur} and \ref{Lm:AlternativeLMI_LowerBound} can be used to replace a particular form of a tri-linear matrix inequality condition with a bi-linear matrix inequality condition. This will be useful later on when deriving LMI conditions. Moreover, if $\Phi,\Theta,\Psi$ and $\Gamma$ used in Lemmas \ref{Lm:AlternativeLMI_Schur} and \ref{Lm:AlternativeLMI_LowerBound} are block network matrices of some network and $\Phi$ and $\Psi$ are also block diagonal, then, with the help of Lemmas \ref{Lm:NetworkMatrixProperties} and \ref{Lm:AlternativeLMI_BEW}, it is easy to see that the aforementioned bi-linear matrix inequality conditions can also be replaced with a positive-definiteness condition of a network matrix. This will be useful later on when decentralizing different LMIs of interest.
\end{remark}

Inspired by the Sylvester's criterion \cite{Antsaklis2006}, the following lemma provides a decentralized and compositional testing criterion to evaluate the positive definiteness of an $N\times N$ block matrix $W=[W_{ij}]_{i,j\in\N_N}$ (for more details, see  \cite{Welikala2022Ax2}).  

\begin{lemma}\label{Lm:MainLemma}
	\cite{WelikalaP32022}
	A symmetric $N \times N$ block matrix $W = [W_{ij}]_{i,j\in\N_N} > 0$ if and only if
	\begin{equation}\label{Eq:Lm:MainLemma1}
		\tilde{W}_{ii} \triangleq W_{ii} - \tilde{W}_i \mathcal{D}_i \tilde{W}_i^\T > 0,\ \ \ \ \forall i\in\N_N,
	\end{equation}
	where
	\begin{equation}\label{Eq:Lm:MainLemma2}
		\begin{aligned}
			\tilde{W}_i \triangleq&\ [\tilde{W}_{ij}]_{j\in\N_{i-1}} \triangleq W_i(\mathcal{D}_i\mathcal{A}_i^\T)^{-1},\\
			W_i \triangleq&\  [W_{ij}]_{j\in\N_{i-1}}, \ \ \ 
			\mathcal{D}_i \triangleq \diag(\tilde{W}_{jj}^{-1}:j\in\N_{i-1}),\\
			\mathcal{A}_i \triangleq&\ 
			\bm{
				\tilde{W}_{11} & \0 & \cdots & \0 \\
				\tilde{W}_{21} & \tilde{W}_{22} & \cdots & \0\\
				\vdots & \vdots & \vdots & \vdots \\
				\tilde{W}_{i-1,1} & \tilde{W}_{i-1,2} & \cdots & \tilde{W}_{i-1,i-1}
			}.
		\end{aligned}
	\end{equation}
\end{lemma}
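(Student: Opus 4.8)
The plan is to recognize the blocks $\tilde{W}_{ii}$ as the successive block pivots (Schur complements) produced by a block $LDL^\T$ factorization of $W$, and then to invoke the classical fact that a symmetric matrix is positive definite if and only if every such pivot is. I would argue by induction on the number of leading blocks, writing $W^{(k)} \triangleq [W_{ij}]_{i,j\in\N_k}$ for the $k\times k$ leading principal block submatrix of $W$ and repeatedly applying the Schur complement criterion for positive definiteness \cite{Bernstein2009}.

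First I would establish the central algebraic identity, namely that each leading submatrix factors as
\[
W^{(i-1)} = \mathcal{A}_i\,\mathcal{D}_i\,\mathcal{A}_i^\T,
\]
with $\mathcal{A}_i$ block lower triangular having diagonal blocks $\tilde{W}_{jj}$ (as in \eqref{Eq:Lm:MainLemma2}) and $\mathcal{D}_i=\diag(\tilde{W}_{jj}^{-1}:j\in\N_{i-1})$. This follows by a direct block expansion: the defining relation $\tilde{W}_i = W_i(\mathcal{D}_i\mathcal{A}_i^\T)^{-1}$ rearranges to $\tilde{W}_i\,\mathcal{D}_i\,\mathcal{A}_i^\T = W_i = [W_{ij}]_{j\in\N_{i-1}}$, which reproduces the off-diagonal blocks of $\mathcal{A}_i\mathcal{D}_i\mathcal{A}_i^\T$, while the definition $\tilde{W}_{ii}=W_{ii}-\tilde{W}_i\mathcal{D}_i\tilde{W}_i^\T$ in \eqref{Eq:Lm:MainLemma1} supplies the diagonal ones. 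A quick induction shows each $\tilde{W}_{jj}$ is symmetric (being a Schur complement of the symmetric $W$), so $\mathcal{D}_i$ is symmetric and the transpose bookkeeping closes up cleanly.

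Second, I would use this factorization to identify $\tilde{W}_{ii}$ as the Schur complement of $W^{(i-1)}$ in $W^{(i)}$. Writing $W^{(i)} = \scriptsize \bm{W^{(i-1)} & W_i^\T \\ W_i & W_{ii}}$ and substituting $(W^{(i-1)})^{-1} = \mathcal{A}_i^{-\T}\mathcal{D}_i^{-1}\mathcal{A}_i^{-1}$ from the factorization, a short computation gives $W_{ii} - W_i (W^{(i-1)})^{-1} W_i^\T = W_{ii} - \tilde{W}_i\mathcal{D}_i\tilde{W}_i^\T = \tilde{W}_{ii}$. The Schur complement criterion then yields $W^{(i)}>0 \iff W^{(i-1)}>0$ and $\tilde{W}_{ii}>0$. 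Chaining this equivalence from $i=1$ (where $\tilde{W}_{11}=W_{11}=W^{(1)}$) up to $i=N$ gives $W=W^{(N)}>0 \iff \tilde{W}_{ii}>0$ for all $i\in\N_N$, which is the assertion. For the converse packaging one can also note directly that, when all pivots are positive, $W=\mathcal{A}\mathcal{D}\mathcal{A}^\T$ is a congruence of the positive definite block-diagonal $\mathcal{D}=\diag(\tilde{W}_{ii}^{-1})$ by the invertible lower-triangular $\mathcal{A}$, hence positive definite.

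The main obstacle, and the step deserving the most care, is the \emph{well-definedness} of the recursion: each $\tilde{W}_{ii}$ is expressed through the inverses $\tilde{W}_{jj}^{-1}$ with $j<i$, so both the factorization and the Schur complement computation presuppose that the earlier pivots are invertible. I would absorb this into the induction. In the forward direction, $W^{(i)}>0$ forces $W^{(i-1)}>0$, whence the inductive hypothesis makes all $\tilde{W}_{jj}>0$ ($j<i$) invertible \emph{before} $\tilde{W}_{ii}$ is formed, legitimizing its construction; in the reverse direction the assumed positivity of all pivots immediately renders $\mathcal{A}$ invertible and $\mathcal{D}>0$. Keeping this invertibility bookkeeping synchronized with the induction is the only genuinely delicate point; everything else is routine block-matrix algebra.
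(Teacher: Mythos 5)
The paper does not prove this lemma itself --- it is imported by citation from \cite{WelikalaP32022} --- so there is no in-paper proof to compare against; but your argument is correct and is the standard one underlying the result: the identity $W^{(i-1)}=\mathcal{A}_i\mathcal{D}_i\mathcal{A}_i^\T$ does follow from the defining relations, it identifies $\tilde{W}_{ii}$ with the Schur complement $W_{ii}-W_i(W^{(i-1)})^{-1}W_i^\T$, and chaining the Schur-complement criterion over leading principal block submatrices (exactly the ``Sylvester's criterion'' viewpoint the paper alludes to) gives the equivalence. Your attention to the well-definedness of the recursion --- that $\tilde{W}_{jj}>0$ for $j<i$ must be secured by the induction before $\mathcal{D}_i$ and hence $\tilde{W}_{ii}$ can be formed --- is precisely the delicate point, and you handle it correctly in both directions.
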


\begin{remark}\label{Rm:Lm:MainLemma}
If $W$ is a \emph{symmetric block network matrix} corresponding to some network $\mathcal{G}_N$, the above lemma can be used test $W>0$ in a decentralized and compositional manner (i.e., by sequentially testing $\tilde{W}_{ii}>0$ at each subsystem $\Sigma_i, i\in\N_N$).
\end{remark}

\begin{algorithm}[!t]
	\caption{Testing/Enforcing $W>0$ in a Network Setting.}
	\label{Alg:DistributedPositiveDefiniteness}
	\begin{algorithmic}[1]
		\State \textbf{Input: } $W = [W_{ij}]_{i,j\in\N_N}$
		\State \textbf{At each subsystem $\Sigma_i, i \in \N_N$ execute:} 
		\Indent
		\If{$i=1$}
		\State Test/Enforce: $W_{11}>0$
		\State Store: $\tilde{W}_1 \triangleq [W_{11}]$ \Comment{To be sent to others.}
		\Else
		\State \textbf{From each subsystem $\Sigma_j, j\in\N_{i-1}$:}
		\Indent
		\State Receive: $\tilde{W}_j \triangleq  [\tilde{W}_{j1},\tilde{W}_{j2},\ldots,\tilde{W}_{jj}]$
		\State Receive: Required info. to compute $W_{ij}$
		\EndIndent
		\State \textbf{End receiving}
		\State Construct: $\mathcal{A}_i, \mathcal{D}_i$ and $W_i$.  
		\Comment{Using: \eqref{Eq:Lm:MainLemma2}.}
		\State Compute: $\tilde{W}_i \triangleq W_i (\mathcal{D}_i\mathcal{A}_i^\T)^{-1}$ 
		\Comment{From \eqref{Eq:Lm:MainLemma2}.}
		\State Compute: $\tilde{W}_{ii} \triangleq W_{ii} - \tilde{W}_{i}\mathcal{D}_i\tilde{W}_i^\T$ 
		\Comment{From \eqref{Eq:Lm:MainLemma1}.}
		\State Test/Enforce: $\tilde{W}_{ii} > 0$
		\State Store: $\tilde{W}_i \triangleq [\tilde{W}_i, \tilde{W}_{ii}]$
		\Comment{To be sent to others}
		\EndIf
		\EndIndent
		\State \textbf{End execution}
	\end{algorithmic}
\end{algorithm}

The above lemma shows that testing positive definiteness of an $N\times N$ block matrix $W=[W_{ij}]_{i,j\in\N_N}$ can be broken down to $N$ separate smaller sequence of tests (iterations). 

In a network setting (i.e., where $W$ is a block network matrix corresponding to a network $\mathcal{G}_N$), at the $i$\tsup{th} iteration (i.e., at the subsystem $\Sigma_i$), we now only need to test whether $\tilde{W}_{ii}>0$, where $\tilde{W}_{ii}$ can be computed using: 
(1) $\{W_{ij}:j \in \N_i\}$ (related blocks to the subsystem $\Sigma_i$ extracted from $W$), 
(2) $\{\tilde{W}_{ij}:j\in \N_{i-1}\}$ (computed using \eqref{Eq:Lm:MainLemma2} at subsystem $\Sigma_i$), and 
(3) $\{\{\tilde{W}_{jk}:k\in\N_j\}:j\in\N_{i-1}\}$ (blocks computed using \eqref{Eq:Lm:MainLemma2} at previous subsystems $\{\Sigma_j:j\in\N_{i-1}\}$). 

Therefore, it is clear that Lm. \ref{Lm:MainLemma} can be used to test/enforce the positive definiteness of a network matrix in a decentralized manner. Further, as shown in \cite{Welikala2022Ax2}, for some network topologies, this process is fully distributed (i.e., no communications are required between non-neighboring subsystems). Furthermore, it is established in \cite{WelikalaP32022} that this process is compositional (i.e., resilient to subsystem removals/additions from/to the network). This decentralized and compositional approach to test/enforce the positive-definiteness of a network matrix $W$ is summarized in the Alg. \ref{Alg:DistributedPositiveDefiniteness}.

\section{Networked Systems}\label{Sec:NetworkedSystems}

\subsection{Networked System Configurations}

\begin{figure}[!t]
    \centering
    \begin{subfigure}[t]{0.23\textwidth}
        \centering
        \captionsetup{justification=centering}
        \includegraphics[width=1.0in]{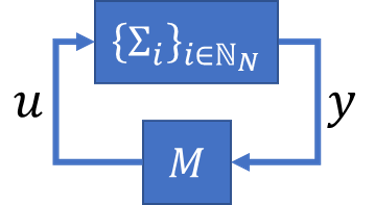}
        \caption{NSC 1}
        \label{Fig:Interconnection1}
    \end{subfigure}
    \hfill
    \begin{subfigure}[t]{0.23\textwidth}
        \centering
        \captionsetup{justification=centering}
        \includegraphics[width=1.4in]{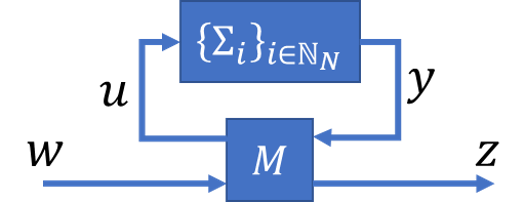}
        \caption{NSC 2}
        \label{Fig:Interconnection2}
    \end{subfigure}
    \begin{subfigure}[t]{0.23\textwidth}
        \centering
        \captionsetup{justification=centering}
        \includegraphics[width=1.0in]{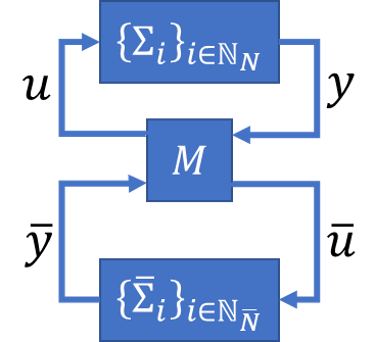}
        \caption{NSC 3}
        \label{Fig:Interconnection3}
    \end{subfigure}
    \hfill
    \begin{subfigure}[t]{0.23\textwidth}
        \centering
        \captionsetup{justification=centering}
        \includegraphics[width=1.4in]{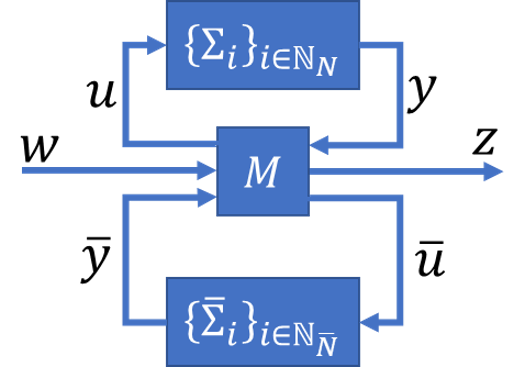}
        \caption{NSC 4}
        \label{Fig:Interconnection4}
    \end{subfigure}
    \caption{Considered four networked (interconnected) system configurations (abbreviated as NSCs 1-4).}
    \label{Fig:InterconnectedSystems}
\end{figure}

In this paper, we consider the four networked (interconnected) system configurations (NSCs) shown in Fig. \ref{Fig:InterconnectedSystems}. Each of these NSC is comprised of independent dynamic subsystems $\Sigma_i,i\in\N_N, \bar{\Sigma}_i, i\in\N_{\bar{N}}$ and a static $M$ matrix that defines how the subsystems, an exogenous input signal  $w(t) \in \R^{r}$ (e.g., disturbance) and interested output signal $z(t) \in\R^{l}$ (e.g., performance) are interconnected, over $t \geq 0$. 

Analogous to \eqref{Eq:GeneralSystem}, dynamics of subsystem $\Sigma_i, i\in\N_N$ are 
\begin{equation}
    \begin{aligned}
        \dot{x}_i(t) = f_i(x_i(t),u_i(t)),\\
        y_i(t) = h_i(x_i(t),u_i(t)),
    \end{aligned}
\end{equation}
where $x_i(t)\in\R^{n_i}, u_i(t)\in\R^{q_i}, y_i(t)\in\R^{m_i}$ and $u \triangleq \bm{u_i^\T}_{i\in\N_N}^\T, y \triangleq \bm{y_i^\T}_{i\in\N_N}^\T$. Further, each subsystem $\Sigma_i, i\in\N_N$ is assumed to have a set $\mathcal{X}_i \subset \R^{n_i}$ where for every $\tilde{x}_i \in \mathcal{X}_i$, there is a unique $\tilde{u}_i \in \R^{q_i}$ satisfying $f_i(\tilde{x}_i,\tilde{u}_i)=0$, and both $\tilde{u}_i$ and $\tilde{y}_i\triangleq h_i(\tilde{x}_i,\tilde{u}_i)$ are implicit functions of $\tilde{x}_i$. Furthermore, subsystem $\Sigma_i, i\in\N_N$ is assumed to be $X_i$-EID, i.e., there is a storage function $V_i:\R^{n_i}\times \X_i\rightarrow \R$ satisfying $\forall(x_i,\tilde{x}_i,u_i)\in \R^{n_i}\times \X_i \times \R^{q_i}:$ $V_i(x_i,\tilde{x}_i)>0$, $V_i(\tilde{x}_i,\tilde{x}_i)=0$, and 
\begin{equation*}
	\dot{V}_i(x_i,\tilde{x}_i) 
	\leq  \bm{u_i-\tilde{u}_i\\y_i-\tilde{y}_i}^\T 
	\underbrace{\bm{X_i^{11} & X_i^{12}\\ X_i^{21} & X_i^{22}}}_{X_i} \bm{u_i-\tilde{u}_i\\y_i-\tilde{y}_i}.
\end{equation*}    
Subsystems $\bar{\Sigma}_i,i\in\N_{\bar{N}}$ are defined similarly but involves quantities denoted with a bar symbol (e.g., subsystem $\bar{\Sigma}_i,i\in\N_{\bar{N}}$ is $\bar{X}_i$-EID with a storage function $\bar{V}_i:\R^{\bar{n}_i} \times \bar{\X}_i \rightarrow \R$). 
	
On the other hand, the format (block structure) of the static $M$ matrix involved in different NSC differs according to the quantities that it interconnects. In particular, for the considered four NSCs shown in Fig. \ref{Fig:InterconnectedSystems}, the interconnection relationships (that involves $M$) are respectively as follows: 

\begin{align}
	\label{Eq:NSC1Interconnection}\mbox{NSC 1}:&\ \ \ u \ \ = M y \equiv M_{uy} y,\\
	\label{Eq:NSC2Interconnection}\mbox{NSC 2}:&\ \bm{u\\z} \, = M \bm{y\\w} \equiv 
	\bm{M_{uy} & M_{uw} \\ M_{zy} & M_{zw}}\bm{y\\w},\\
	\label{Eq:NSC3Interconnection}\mbox{NSC 3}:&\ \bm{u \\ \bar{u}} \ = M \bm{y\\ \bar{y}} \equiv 
	\bm{M_{uy} & M_{u\bar{y}} \\ M_{\bar{u}y} & M_{\bar{u}\bar{y}}}\bm{y\\ \bar{y}},\\
	\label{Eq:NSC4Interconnection}\mbox{NSC 4}:&\ \bm{u \\ \bar{u} \\ z} \, = M \bm{y\\ \bar{y} \\ w} \equiv
	\bm{M_{uy} & M_{u\bar{y}} & M_{uw} \\ 
		M_{\bar{u}y} & M_{\bar{u}\bar{y}} & M_{\bar{u}w}\\
		M_{zy} & M_{z\bar{y}} & M_{zw}}
	\bm{y\\ \bar{y} \\ w}.
\end{align}

\subsection{Connection to Related Work on NSCs}

The work in \cite{Arcak2016,Arcak2022} (and references therein) have already studied the NSCs 1-2 inspired by their applicability in different application domains such as in resource allocation in communications networks, multi-agent motion coordination and biochemical reaction networks. In particular, \cite{Arcak2016,Arcak2022} provide linear matrix inequality (LMI) conditions to analyze the EI-stability of NSC 1 and $X$-EID of NSC 2. However, these LMIs are only applicable for analysis (i.e., when the interconnection matrix $M$ is given) but not for synthesis (i.e., when $M$ needs to be designed), of NSCs. Moreover, even though \cite{Arcak2016,Arcak2022} derive the said LMI conditions in a compositional manner, they can only be evaluated in a centralized setting. These limitations of \cite{Arcak2016,Arcak2022} regarding NSCs 1-2 motivated this paper to focus on these synthesis and decentralization aspects. 

Compared to \cite{Arcak2016,Arcak2022}, the earlier work in \cite{Xia2014,Zakeri2022} has actually considered the problem of synthesizing the interconnection matrix $M$. However, the proposed synthesizing strategy in \cite{Xia2014,Zakeri2022} is limited to the NSC 2 with a single subsystem (i.e., $N=1$). Moreover, this strategy \cite[Th. 3]{Xia2014} involves non-linear inequality conditions, and thus, it is not straightforward. In this paper, we will address these limitations while generalizing the result in \cite[Th. 3]{Xia2014}.

Another interesting contribution of \cite{Xia2014} is that it not only considers the problem of synthesizing $M$ for the NSC 2 (albeit with $N=1$), it also provide conditions under which a synthesized NSC 2 can be used as its controller for another system. This motivated the study of NSCs 3-4 as they can be seen as instants where a NSC 2 has been used as a controller to govern another networked system (comprised of  subsystems $\bar{\Sigma}_i,i\in\bar{B}$). In this paper, we will generalize the result in \cite[Th. 4]{Xia2014} by providing LMI conditions for EI-stability of NSC 3 and $X$-EID of NSC 4, in both centralized and decentralized settings. Moreover, as shown in Tab. \ref{Tab:Configurations}, the NSC 4 (similarly, the NSC 3) can be used to represent several standard control system configurations.

Finally, we also point out that our recent work in \cite{WelikalaP32022} has addressed the decentralization of several standard LMI based control solutions associated with networked systems comprised of linear subsystems. Therefore, this paper can also be viewed as a extension of \cite{WelikalaP32022} for networked systems comprised of non-linear systems.

\begin{table}[!h]
    \centering
    \resizebox{\linewidth}{!}{
    \begin{tabular}{|c|c|}\hline
    Control System Configuration &  \begin{tabular}{@{}c@{}} Format of $M$ \eqref{Eq:NSC4Interconnection} 
    \\ in the NSC 4 
    \end{tabular}\\ \hline
    \begin{tabular}{@{}c@{}} \raisebox{-0.5\totalheight}{\includegraphics[width=2.6in]{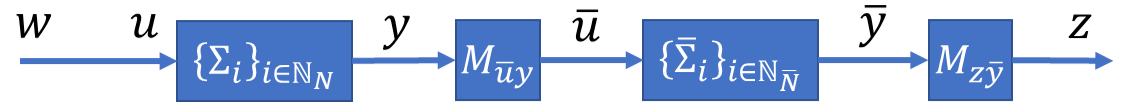}} 
    \\ (Series Configuration) 
    \end{tabular} 
    & $\scriptsize \bm{\0 & -M_{\tilde{u}\bar{y}} & \I \\ M_{\bar{u}y} & \0 & \0 \\ \0 & M_{z\bar{y}} & \0}$ \\ \hline
    \begin{tabular}{@{}c@{}}
    \raisebox{-0.5\totalheight}{\includegraphics[width=2.2in]{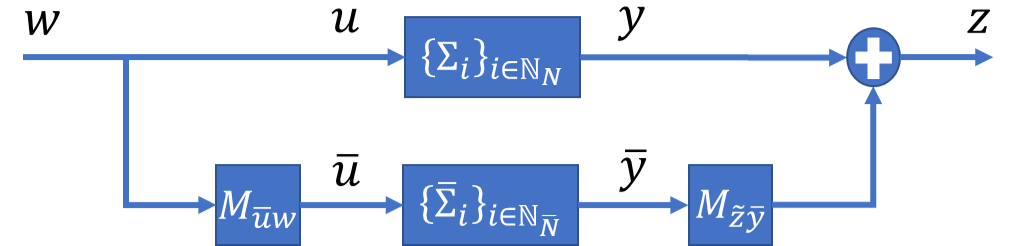}}
    \\ (Parallel Configuration) 
    \end{tabular} 
    & $\scriptsize \bm{\0 & \0 & \I \\ \0 & \0 & M_{\bar{u}w} \\ \I & M_{\tilde{z}y} & \0}$ \\ \hline
    \begin{tabular}{@{}c@{}}
    \raisebox{-0.5\totalheight}{\includegraphics[width=2.1in]{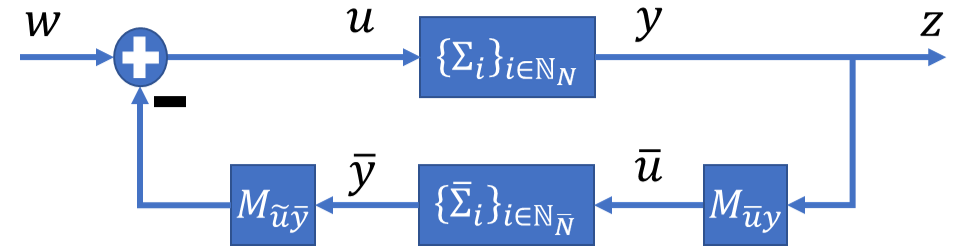}}
    \\ (Feedback Configuration) 
    \end{tabular} 
    & $\scriptsize \bm{\0 & -M_{\tilde{u}\bar{y}} & \I \\ M_{\bar{u}y} & \0 & \0 \\ \I & \0 & \0}$ \\ \hline
    \begin{tabular}{@{}c@{}}
    \raisebox{-0.5\totalheight}{\includegraphics[width=1.5in]{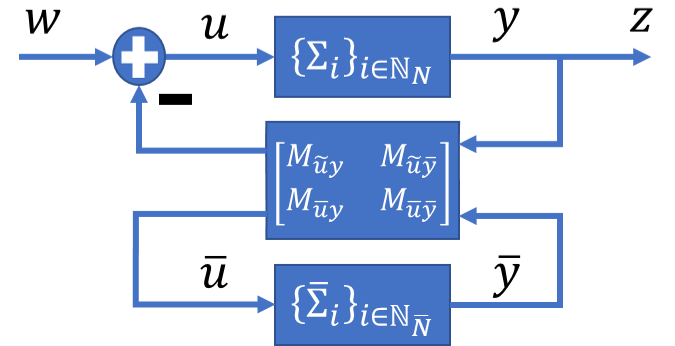}}
    \\ (Feedback Reconfiguration Configuration \cite{Xia2014}) 
    \end{tabular} 
    & $\scriptsize \bm{-M_{\tilde{u}y} & -M_{\tilde{u}\bar{y}} & \I \\ M_{\bar{u}y} & M_{\bar{u}\bar{y}} & \0 \\ \I & \0 & \0}$ \\ \hline
    \begin{tabular}{@{}c@{}}
    \raisebox{-0.5\totalheight}{\includegraphics[width=1.8in]{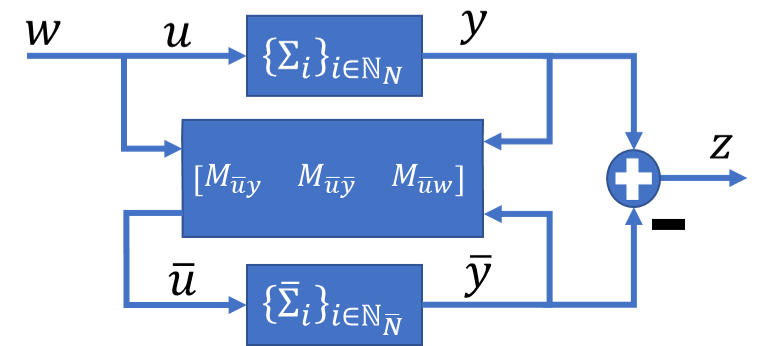}}
    \\ (Approximate Simulation Configuration \cite{Song2022}) 
    \end{tabular} 
    & $\scriptsize \bm{\0 & \0 & \I \\ M_{\bar{u}y} & M_{\bar{u}\bar{y}} & M_{\bar{u}w} \\ \I & -\I & \0}$ \\ \hline
    \end{tabular}}
    \caption{Some standard control system configurations and the corresponding format of $M$ \eqref{Eq:NSC4Interconnection} in the NSC 4.}
    \label{Tab:Configurations}
\end{table}

\subsection{Centralized Stability and Dissipativity Analysis of NSCs}

The following four propositions provide centralized LMI conditions for stability and $\textbf{Y}$-dissipativity analysis of NSCs 1-4 (using the EID properties of the subsystems). It is noteworthy that these analysis techniques are respectively independent of the equilibrium points of the NSCs 1-4.

 
\begin{proposition}\label{Pr:NSC1Stability}\hspace{-2mm}\cite{Arcak2006}
Let $x^* \triangleq [x_i^{*\T}]_{i\in\N_N}^\T$ be an equilibrium point of NSC 1 with $x_i^* \in\mathcal{X}_i, \forall i\in\N_N$.  Then, $x^*$ is stable for NSC 1 if there exist scalars $p_i>0, \forall i \in \N_N$ such that 
\begin{equation}\label{Eq:Pr:NSC1Stability1}
    \bm{M_{uy}\\ \I}^\T \textbf{X}_p \bm{M_{uy}\\ \I} \leq 0,
\end{equation}
where components of $M$ are from \eqref{Eq:NSC3Interconnection} and 
\begin{equation}\label{Eq:Pr:NSC1Stability2}
    \textbf{X}_p \triangleq \scriptsize
    \bm{
    \diag(p_i X_i^{11}:i\in\N_N) & \diag(p_i X_i^{12}:i\in\N_N)\\ 
    \diag(p_i X_i^{21}:i\in\N_N) & \diag(p_i X_i^{22}:i\in\N_N)\\ 
    }.\normalsize
\end{equation}
\end{proposition}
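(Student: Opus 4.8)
The plan is to construct a composite Lyapunov function as a positively weighted sum of the subsystem storage functions and to show that the LMI \eqref{Eq:Pr:NSC1Stability1} forces its derivative along the interconnected trajectories to be non-positive. Concretely, I would take $V(x,\tilde{x}) \triangleq \sum_{i\in\N_N} p_i V_i(x_i,\tilde{x}_i)$, where $\tilde{x} \triangleq [\tilde{x}_i^\T]_{i\in\N_N}^\T$ collects the subsystem equilibrium states associated with $x^*$ (i.e. $\tilde{x}_i = x_i^* \in \X_i$). Since each $V_i$ satisfies $V_i(x_i,\tilde{x}_i) > 0$ and $V_i(\tilde{x}_i,\tilde{x}_i) = 0$, and since every $p_i > 0$, the function $V$ is positive definite about $x^*$ and is a legitimate Lyapunov candidate.

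Next I would differentiate $V$ along the closed-loop trajectories and apply the per-subsystem $X_i$-EID bound to each term, obtaining
\[\dot{V} = \sum_{i\in\N_N} p_i \dot{V}_i \leq \sum_{i\in\N_N} p_i \bm{u_i-\tilde{u}_i\\y_i-\tilde{y}_i}^\T X_i \bm{u_i-\tilde{u}_i\\y_i-\tilde{y}_i}.\]
Stacking the subsystem signals into the aggregate vectors $u,y$ (with equilibrium counterparts $\tilde{u},\tilde{y}$) and regrouping the weighted dissipativity blocks reproduces exactly the block-diagonal structure of $\textbf{X}_p$ in \eqref{Eq:Pr:NSC1Stability2}, yielding
\[\dot{V} \leq \bm{u-\tilde{u}\\y-\tilde{y}}^\T \textbf{X}_p \bm{u-\tilde{u}\\y-\tilde{y}}.\]

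The crucial step is to invoke the interconnection constraint. For NSC 1 we have $u = M_{uy}y$, and because the interconnection is static the network equilibrium likewise satisfies $\tilde{u} = M_{uy}\tilde{y}$; subtracting gives $u - \tilde{u} = M_{uy}(y-\tilde{y})$, so that $\bm{u-\tilde{u}\\y-\tilde{y}} = \bm{M_{uy}\\\I}(y-\tilde{y})$. Substituting this factorization produces
\[\dot{V} \leq (y-\tilde{y})^\T \bm{M_{uy}\\\I}^\T \textbf{X}_p \bm{M_{uy}\\\I} (y-\tilde{y}),\]
and the hypothesis \eqref{Eq:Pr:NSC1Stability1} makes the middle matrix negative semi-definite, so $\dot{V} \leq 0$. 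A standard Lyapunov argument then delivers stability of $x^*$.

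The main obstacle I anticipate is not the algebra but the equilibrium bookkeeping: I must verify that the subsystem equilibrium pairs $(\tilde{u}_i,\tilde{y}_i)$ guaranteed by the $\X_i$-membership assumption are mutually consistent with the network equilibrium $x^*$ and collectively satisfy $\tilde{u} = M_{uy}\tilde{y}$. This is precisely where the \emph{equilibrium-independent} flavour of the dissipativity assumption is essential — it permits applying the per-subsystem supply-rate bound at the relevant $\tilde{x}_i = x_i^*$ without requiring the equilibrium to lie at the origin — but it must be argued carefully that such a consistent family of equilibria exists and that the stacking operation faithfully preserves the block-diagonal form of $\textbf{X}_p$.
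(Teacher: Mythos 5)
Your proposal is correct and follows exactly the approach the paper uses: the paper cites \cite{Arcak2006} for this proposition and omits its proof, but its own proof of the analogous Proposition \ref{Pr:NSC3Stability} is precisely your argument (weighted sum of storage functions as Lyapunov candidate, EID bounds stacked into $\textbf{X}_p$, factoring the input--output deviation through $\bm{M_{uy}\\ \I}$, and concluding $\dot{V}\leq 0$), of which the NSC 1 case is the direct specialization. Your added care about the consistency of the subsystem equilibria with $u^*=M_{uy}y^*$ is a point the paper glosses over but handles identically in spirit.
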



\begin{proposition}\label{Pr:NSC2Dissipativity}\cite{Arcak2016}
Let $x^* \triangleq [x_i^{*\T}]_{i\in\N_N}^\T$ be an equilibrium point of NSC 2 (under $w(t)=\0$) with $x_i^* \in\mathcal{X}_i, \forall i\in\N_N$ and $z^*$ be the corresponding output. Then, the NSC 2 is $\textbf{Y}$-dissipative, i.e., dissipative with respect to the supply rate:
\begin{equation}\label{Eq:Pr:NSC2Dissipativity0}
    \bm{w\\z-z^*}^\T \textbf{Y} \bm{w\\z-z^*}, 
\end{equation}
if there exist scalars $p_i \geq 0, \forall i \in \N_N$ such that 
\begin{equation}\label{Eq:Pr:NSC2Dissipativity1}
\bm{M_{uy} & M_{uw} \\ \I & \0 \\ 
\0 & \I \\ M_{zy} & M_{zw}}^\T
\bm{\textbf{X}_p & \0 \\ \0 & -\textbf{Y}}
\bm{M_{uy} & M_{uw} \\ \I & \0 \\ 
\0 & \I \\ M_{zy} & M_{zw}} \leq 0,
\end{equation}
where components of $M$ are from \eqref{Eq:NSC2Interconnection} and $\textbf{X}_p$ is as in \eqref{Eq:Pr:NSC1Stability2}.
\end{proposition}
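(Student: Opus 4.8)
The plan is to certify $\textbf{Y}$-dissipativity directly from the definition by exhibiting a composite storage function built as a weighted sum of the subsystem storage functions. First I would set $V \triangleq \sum_{i\in\N_N} p_i V_i$, where $V_i$ certifies the $X_i$-EID property of $\Sigma_i$ and $p_i \geq 0$ are the scalars in the statement; since $p_i \geq 0$, $V$ is a nonnegative, admissible storage function that vanishes at the equilibrium $x^*$. Weighting each subsystem EID inequality by $p_i$ and summing, then regrouping the per-subsystem pairs $(u_i-\tilde u_i,\,y_i-\tilde y_i)$ into the all-inputs-then-all-outputs order (a block permutation), I would obtain
\[
\dot V \leq \bm{u-\tilde u\\ y-\tilde y}^\T \textbf{X}_p \bm{u-\tilde u\\ y-\tilde y},
\]
with $\textbf{X}_p$ exactly as in \eqref{Eq:Pr:NSC1Stability2}.

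Next I would inject the interconnection \eqref{Eq:NSC2Interconnection}. Evaluating it at the equilibrium, where $w=\0$ so that $\tilde u = M_{uy}\tilde y$ and $z^* = M_{zy}\tilde y$, and subtracting from the running relations $u = M_{uy}y + M_{uw}w$, $z = M_{zy}y + M_{zw}w$, gives $u-\tilde u = M_{uy}(y-\tilde y)+M_{uw}w$ and $z-z^* = M_{zy}(y-\tilde y)+M_{zw}w$. Writing everything against the free vector $\bm{(y-\tilde y)^\T & w^\T}^\T$, this reads
\[
\bm{u-\tilde u\\ y-\tilde y} = \bm{M_{uy} & M_{uw}\\ \I & \0}\bm{y-\tilde y\\ w}, \quad \bm{w\\ z-z^*} = \bm{\0 & \I\\ M_{zy} & M_{zw}}\bm{y-\tilde y\\ w}.
\]

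Substituting these into the bound on $\dot V$ and into the target supply rate \eqref{Eq:Pr:NSC2Dissipativity0}, the required dissipation inequality $\dot V \leq \bm{w\\ z-z^*}^\T \textbf{Y} \bm{w\\ z-z^*}$ will hold for every $(y-\tilde y,\,w)$ as soon as
\[
\bm{M_{uy} & M_{uw}\\ \I & \0}^\T \textbf{X}_p \bm{M_{uy} & M_{uw}\\ \I & \0} - \bm{\0 & \I\\ M_{zy} & M_{zw}}^\T \textbf{Y} \bm{\0 & \I\\ M_{zy} & M_{zw}} \leq 0.
\]
The final step is to recognize this difference of two congruence products as precisely $T^\T \bm{\textbf{X}_p & \0\\ \0 & -\textbf{Y}} T$, where $T$ is the tall matrix stacking the four row-blocks $M_{uy}\,|\,M_{uw}$, $\I\,|\,\0$, $\0\,|\,\I$, $M_{zy}\,|\,M_{zw}$; this is exactly the left-hand side of the LMI \eqref{Eq:Pr:NSC2Dissipativity1}, so that condition is the sought certificate.

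The main obstacle is bookkeeping rather than anything conceptual: I must carry out the block permutation that turns the subsystem-indexed supply rates into the grouped matrix $\textbf{X}_p$ consistently, and I must use $w=\0$ correctly at the equilibrium so that the $M_{uw}$ and $M_{zw}$ columns drop out of the equilibrium relations yet survive in the $w$-dependent terms of the running signals. Everything else is a single S-procedure-style observation — pre- and post-multiplying a negative-semidefinite matrix by the common factor $\bm{(y-\tilde y)^\T & w^\T}^\T$ preserves the sign — which delivers the dissipation inequality and hence $\textbf{Y}$-dissipativity.
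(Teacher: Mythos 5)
Your proposal is correct and follows essentially the same route the paper uses: the paper cites this result from \cite{Arcak2016} without reproving it, but its own proofs of the analogous Propositions \ref{Pr:NSC3Stability} and \ref{Pr:NSC4Dissipativity} use exactly your argument --- a weighted-sum storage function $V=\sum_i p_i V_i$, the summed EID bound $\dot V \leq \bm{u-u^*\\ y-y^*}^\T \textbf{X}_p \star$, substitution of the interconnection relations against the free vector $\bm{(y-y^*)^\T & w^\T}^\T$, and factorization of the resulting quadratic form into the stacked LMI \eqref{Eq:Pr:NSC2Dissipativity1}. Your handling of $w=\0$ at the equilibrium and the sufficiency of $p_i\geq 0$ (rather than $p_i>0$) for a storage function are both consistent with the paper.
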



\begin{proposition}\label{Pr:NSC3Stability}
Let $\hat{x}^* \triangleq \left[[x_i^{*\T}]_{i\in\N_N}, \  [\bar{x}_i^{*\T}]_{i\in\N_{\bar{N}}}\right]^\T$ be an equilibrium point of NSC 3 with $x_i^* \in \mathcal{X}_i, \forall i\in\N_N$ and $\bar{x}_i^* \in \bar{\mathcal{X}}_i, \forall i\in\N_{\bar{N}}$. Then, $\hat{x}^*$ is stable for the NSC 3 if there exist scalars $p_i > 0, \forall i \in \N_N$ and $\bar{p}_i > 0, \forall i \in \N_{\bar{N}}$ such that
\begin{equation}\label{Eq:Pr:NSC3Stability1}
    \bm{M_{uy} & M_{u\bar{y}} \\ \I & \0 \\ M_{\bar{u}y} & M_{\bar{u}\bar{y}} \\ \0 & \I}^\T 
    \bm{\textbf{X}_p & \0 \\ \0 & \bar{\textbf{X}}_{\bar{p}}} 
    \bm{M_{uy} & M_{u\bar{y}} \\ \I & \0 \\ M_{\bar{u}y} & M_{\bar{u}\bar{y}} \\ \0 & \I} \leq 0,
\end{equation}
where components of $M$ are from \eqref{Eq:NSC3Interconnection}, $\textbf{X}_p$ is as in \eqref{Eq:Pr:NSC1Stability2} and 
\begin{equation}\label{Eq:Pr:NSC3Stability2}
    \bar{\textbf{X}}_{\bar{p}} \triangleq  \scriptsize
    \bm{
    \diag(\bar{p}_i\bar{X}_i^{11}:i\in\N_N) & \diag(\bar{p}_i\bar{X}_i^{12}:i\in\N_N)\\ 
    \diag(\bar{p}_i\bar{X}_i^{21}:i\in\N_N) & \diag(\bar{p}_i\bar{X}_i^{22}:i\in\N_N)
    }.\normalsize
\end{equation}
\end{proposition}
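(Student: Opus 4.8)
The plan is to mirror the weighted-storage-function (Lyapunov) argument behind Propositions~\ref{Pr:NSC1Stability} and \ref{Pr:NSC2Dissipativity}, now carried out simultaneously over the two families of subsystems $\Sigma_i,\,i\in\N_N$ and $\bar{\Sigma}_i,\,i\in\N_{\bar{N}}$. First I would introduce the candidate
\[
V(\hat{x},\hat{x}^*) \triangleq \sum_{i\in\N_N} p_i\, V_i(x_i,x_i^*) + \sum_{i\in\N_{\bar{N}}} \bar{p}_i\, \bar{V}_i(\bar{x}_i,\bar{x}_i^*).
\]
Since each $V_i,\bar{V}_i$ is positive away from, and zero at, its equilibrium argument, and since $p_i>0$ and $\bar{p}_i>0$, this $V$ is positive definite about $\hat{x}^*$ and vanishes at $\hat{x}^*$, so it is an admissible Lyapunov function.

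Next I would differentiate $V$ along the trajectories of NSC~3 and invoke the $X_i$-EID and $\bar{X}_i$-EID dissipation inequalities term by term, evaluated at the fixed second arguments $\tilde{x}_i=x_i^*$ and $\tilde{\bar{x}}_i=\bar{x}_i^*$ (so that $\tilde{u}_i=u_i^*$, $\tilde{y}_i=y_i^*$, etc., by the implicit-function assumption on $\X_i,\bar{\X}_i$). This gives
\[
\dot{V} \leq \sum_{i\in\N_N} p_i \bm{u_i-u_i^*\\ y_i-y_i^*}^{\!\T}\!\! X_i \bm{u_i-u_i^*\\ y_i-y_i^*} + \sum_{i\in\N_{\bar{N}}} \bar{p}_i \bm{\bar{u}_i-\bar{u}_i^*\\ \bar{y}_i-\bar{y}_i^*}^{\!\T}\!\! \bar{X}_i \bm{\bar{u}_i-\bar{u}_i^*\\ \bar{y}_i-\bar{y}_i^*}.
\]
I would then collect these scalar quadratic forms into the two stacked quadratic forms encoded by $\textbf{X}_p$ in \eqref{Eq:Pr:NSC1Stability2} and $\bar{\textbf{X}}_{\bar{p}}$. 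The only bookkeeping step is a permutation: $\textbf{X}_p$ groups \emph{all} inputs in the first block and \emph{all} outputs in the second, so after reordering the summed supply rates become
\[
\dot{V} \leq \bm{u-u^*\\ y-y^*\\ \bar{u}-\bar{u}^*\\ \bar{y}-\bar{y}^*}^{\!\T} \bm{\textbf{X}_p & \0\\ \0 & \bar{\textbf{X}}_{\bar{p}}} \bm{u-u^*\\ y-y^*\\ \bar{u}-\bar{u}^*\\ \bar{y}-\bar{y}^*}.
\]

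Finally I would eliminate the inputs using the interconnection relation \eqref{Eq:NSC3Interconnection}. Because $\hat{x}^*$ is an equilibrium, its input/output values also satisfy \eqref{Eq:NSC3Interconnection}; subtracting yields $\bm{u-u^*\\ \bar{u}-\bar{u}^*} = M \bm{y-y^*\\ \bar{y}-\bar{y}^*}$. Splitting $M$ into its blocks $M_{uy},M_{u\bar{y}},M_{\bar{u}y},M_{\bar{u}\bar{y}}$ and inserting the trivial identities $y-y^*=\I(y-y^*)$, $\bar{y}-\bar{y}^*=\I(\bar{y}-\bar{y}^*)$, the stacked perturbation vector factors exactly as the tall matrix in \eqref{Eq:Pr:NSC3Stability1} applied to $\bm{y-y^*\\ \bar{y}-\bar{y}^*}$. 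Condition \eqref{Eq:Pr:NSC3Stability1} then forces $\dot{V}\le 0$ along all trajectories, and Lyapunov's direct method yields stability of $\hat{x}^*$. I expect no genuine difficulty here, as the argument is a direct superposition of the NSC~1 and NSC~2 proofs; the main care required is in matching the index/ordering conventions so that the summed supply rates align with the block layout of $\mathrm{diag}(\textbf{X}_p,\bar{\textbf{X}}_{\bar{p}})$ and with the column ordering $(y,\bar{y})$ of the tall interconnection matrix in \eqref{Eq:Pr:NSC3Stability1}.
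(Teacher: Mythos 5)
Your proposal is correct and follows essentially the same route as the paper's proof: the weighted sum of subsystem storage functions as a Lyapunov candidate, the EID inequalities to bound $\dot{V}$ by the block-diagonal supply rate $\mathrm{diag}(\textbf{X}_p,\bar{\textbf{X}}_{\bar{p}})$, and the interconnection relation \eqref{Eq:NSC3Interconnection} to factor the perturbation vector through the tall matrix in \eqref{Eq:Pr:NSC3Stability1}. The only cosmetic difference is that you collect everything into a single stacked quadratic form while the paper keeps the two sums as separate terms before combining them.
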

\begin{proof}
Consider the Lyapunov function 
\begin{equation}\label{Eq:Pr:NSC3StabilityStep1}
    V(\hat{x}) \triangleq \sum_{i\in\N_N} p_i V_i(x_i,x_i^*) + \sum_{i\in\N_{\bar{N}}} \bar{p}_i \bar{V}_i(\bar{x}_i,\bar{x}_i^*)
\end{equation}
constructed from the storage functions of subsystems $\Sigma_i, i\in\N_N$ and $\bar{\Sigma}_i, i\in\N_{\bar{N}}$. Due to the EID of subsystems, we get: 
\begin{align}
    &\dot{V}(\hat{x}) \leq 
    \sum_{i\in\N_N} p_i \bm{u_i-u_i^*\\y_i-y_i^*}^\T X_i \bm{u_i-u_i^*\\y_i-y_i^*}
    \nonumber \\
    &+ \sum_{i\in\N_{\bar{N}}} \bar{p}_i \bm{\bar{u}_i-\bar{u}_i^*\\\bar{y}_i-\bar{y}_i^*}^\T \bar{X}_i \bm{\bar{u}_i-\bar{u}_i^*\\\bar{y}_i-\bar{y}_i^*} \nonumber\\
    &= \bm{u-u^*\\y-y^*}^\T \hspace{-1mm}\textbf{X}_p\hspace{-1mm} \bm{u-u^*\\y-y^*}
    + \bm{\bar{u}-\bar{u}^*\\\bar{y}-\bar{y}^*}^\T \hspace{-1mm}\bar{\textbf{X}}_{\bar{p}}\hspace{-1mm} \bm{\bar{u}-\bar{u}^*\\\bar{y}-\bar{y}^*}. \label{Eq:Pr:NSC3StabilityStep2}
\end{align}
Note that the quantities with a superscript ``$*$'' correspond to the assumed equilibrium point $\hat{x}^*$. Using \eqref{Eq:NSC3Interconnection}, we can write 
\begin{equation}\label{Eq:Pr:NSC3StabilityStep3}
\begin{aligned}
    \bm{u-u^*\\y-y^*} = 
    \bm{M_{uy} & M_{u\bar{y}} \\ \I & \0}
    \bm{y-y^*\\ \bar{y}-\bar{y}^*},\\
    \bm{\bar{u}-\bar{u}^*\\\bar{y}-\bar{y}^*} = 
    \bm{M_{\bar{u}y} & M_{\bar{u}\bar{y}} \\ \0 & \I} 
    \bm{y-y^*\\ \bar{y}-\bar{y}^*}.
\end{aligned}
\end{equation}
Applying \eqref{Eq:Pr:NSC3StabilityStep3} in \eqref{Eq:Pr:NSC3StabilityStep2} and combining the two terms, we get: 
\begin{equation*}
\begin{aligned}
    \dot{V}(\hat{x}) \leq   
    \bm{y-y^*\\ \bar{y}-\bar{y}^*}^\T 
    \bm{M_{uy} & M_{u\bar{y}} \\ \I & \0 \\ M_{\bar{u}y} & M_{\bar{u}\bar{y}} \\ \0 & \I}^\T 
    \bm{\textbf{X}_p & \0 \\ \0 & \bar{\textbf{X}}_{\bar{p}}} \, \star.
\end{aligned}
\end{equation*}
Thus, \eqref{Eq:Pr:NSC3Stability1} implies $\dot{V}(\hat{x}) \leq 0$. Further, by definition \eqref{Eq:Pr:NSC3StabilityStep1}, $\dot{V}(\hat{x}^*) = 0$. Hence $\hat{x}^*$ for the NSC 3 is Lyapunov stable.  
\end{proof}


\begin{proposition}\label{Pr:NSC4Dissipativity}
Let $\hat{x}^* \triangleq \left[[x_i^{*\T}]_{i\in\N_N}, [\bar{x}_i^{*\T}]_{i\in\N_{\bar{N}}}\right]^\T$ be an equilibrium point of NSC 4 (under $w(t)=0$) with $x_i^* \in \mathcal{X}_i, \forall i\in\N_N$, $\bar{x}_i^* \in \bar{\mathcal{X}}_i, \forall i\in\N_{\bar{N}}$ and $z^*$ be the corresponding output. Then, the NSC 4 is $\textbf{Y}$-dissipative, i.e., dissipative with respect to the supply rate \eqref{Eq:Pr:NSC2Dissipativity0}, 
if there exist scalars $p_i \geq 0, \forall i \in \N_N$ and $\bar{p}_i, \forall i\in\N_{\bar{N}}$ such that 
\begin{equation}\label{Eq:Pr:NSC4Dissipativity}
    \bm{
    M_{uy} & M_{u\bar{y}} & M_{uw} \\ \I & \0 & \0 \\
    M_{\bar{u}y} & M_{\bar{u}\bar{y}} & M_{\bar{u}w} \\ \0 & \I & \0 \\
    \0 & \0 & \I\\ M_{zy} & M_{z\bar{y}} & M_{zw} 
    }^\T
    \bm{
    \textbf{X}_p & \0 & \0 \\ \0 & \bar{\textbf{X}}_{\bar{p}} & \0 \\ \0 & \0 & -\textbf{Y}
    }
    \, \star \leq 0,
\end{equation}
where components of $M$ are from \eqref{Eq:NSC4Interconnection}, and $\textbf{X}_p$ and $\bar{\textbf{X}}_{\bar{p}}$ are as defined in \eqref{Eq:Pr:NSC1Stability2} and \eqref{Eq:Pr:NSC3Stability2}, respectively.
\end{proposition}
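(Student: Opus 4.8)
The plan is to mirror the proof of Proposition \ref{Pr:NSC3Stability}, augmenting it with the exogenous input $w$ and the performance output $z$ exactly as Proposition \ref{Pr:NSC2Dissipativity} augments Proposition \ref{Pr:NSC1Stability}. Concretely, I would reuse the composite storage function
\[
V(\hat{x}) \triangleq \sum_{i\in\N_N} p_i V_i(x_i,x_i^*) + \sum_{i\in\N_{\bar{N}}} \bar{p}_i \bar{V}_i(\bar{x}_i,\bar{x}_i^*),
\]
which is nonnegative and vanishes at $\hat{x}^*$ once $p_i\geq 0$ and $\bar{p}_i\geq 0$, hence is a legitimate storage-function candidate for the whole NSC 4. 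Summing the subsystem EID inequalities weighted by $p_i$ and $\bar{p}_i$ yields, verbatim as in \eqref{Eq:Pr:NSC3StabilityStep2},
\[
\dot{V}(\hat{x}) \leq
\bm{u-u^*\\y-y^*}^\T \textbf{X}_p \bm{u-u^*\\y-y^*}
+ \bm{\bar{u}-\bar{u}^*\\\bar{y}-\bar{y}^*}^\T \bar{\textbf{X}}_{\bar{p}} \bm{\bar{u}-\bar{u}^*\\\bar{y}-\bar{y}^*}.
\]

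Next I would peel off the $-\textbf{Y}$ term by recasting the target dissipation inequality $\dot{V}(\hat{x}) \leq \bm{w\\z-z^*}^\T \textbf{Y} \bm{w\\z-z^*}$ as the requirement that the right-hand side of the displayed bound minus the $\textbf{Y}$-supply rate be nonpositive. The crucial observation is that, since $w(t)=\0$ defines the equilibrium, $w^*=\0$ and therefore $w-w^*=w$; this lets me gather $y-y^*$, $\bar{y}-\bar{y}^*$ and $w$ into a single vector $\zeta \triangleq [\,(y-y^*)^\T,\ (\bar{y}-\bar{y}^*)^\T,\ w^\T\,]^\T$. Using the interconnection relation \eqref{Eq:NSC4Interconnection}, each of the six stacked quantities ($u-u^*$, $y-y^*$, $\bar{u}-\bar{u}^*$, $\bar{y}-\bar{y}^*$, $w$, $z-z^*$) becomes a linear image of $\zeta$ through exactly the six block rows of the tall matrix in \eqref{Eq:Pr:NSC4Dissipativity}, with the three identity rows accounting for $y-y^*$, $\bar{y}-\bar{y}^*$ and $w$.

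Substituting these identities and grouping the three quadratic forms under the block-diagonal weight $\diag(\textbf{X}_p,\bar{\textbf{X}}_{\bar{p}},-\textbf{Y})$ collapses everything into the single quadratic form $\zeta^\T(\cdot)\zeta$, where $(\cdot)$ is precisely the matrix product on the left-hand side of \eqref{Eq:Pr:NSC4Dissipativity}. Hence the left-hand side of \eqref{Eq:Pr:NSC4Dissipativity} being $\leq 0$ forces $\dot{V}(\hat{x}) - \bm{w\\z-z^*}^\T \textbf{Y} \bm{w\\z-z^*} \leq 0$ for all $\zeta$, which is exactly $\textbf{Y}$-dissipativity with respect to the supply rate \eqref{Eq:Pr:NSC2Dissipativity0}.

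I do not expect a genuine obstacle: the argument is a bookkeeping merge of the NSC 2 and NSC 3 proofs, and the only point demanding care is the equilibrium convention $w^*=\0$ (valid because the equilibrium is taken under $w(t)=\0$), which legitimizes replacing $w-w^*$ by $w$ and thereby aligns the exogenous channel with the identity block row of \eqref{Eq:Pr:NSC4Dissipativity}. A secondary check is that the six row blocks appear in the stated interleaved order so that the pairing with $\diag(\textbf{X}_p,\bar{\textbf{X}}_{\bar{p}},-\textbf{Y})$ is correct; this is the only place where an indexing error could slip in.
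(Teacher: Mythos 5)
Your proposal is correct and follows essentially the same route as the paper's proof: the same composite storage function, the same weighted summation of the subsystem EID inequalities, and the same substitution of the interconnection relations \eqref{Eq:NSC4Interconnection} to collapse everything into the single quadratic form in $[\,(y-y^*)^\T,\ (\bar{y}-\bar{y}^*)^\T,\ w^\T\,]^\T$. Your explicit remark that $w^*=\0$ (so $w-w^*=w$) simply makes precise a step the paper leaves implicit.
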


\begin{proof}
Consider $V(\hat{x})$ in \eqref{Eq:Pr:NSC3StabilityStep1} now as a storage function. Using the EID of subsystems, we can obtain an upper-bound for $\dot{V}(\hat{x})$ as in \eqref{Eq:Pr:NSC3StabilityStep2}. Thus, the NSC 4 is $\textbf{Y}$-dissipative if  
\begin{equation}\label{Pr:NSC4DissipativityStep1}
    \bm{u-u^*\\y-y^*}^\T \textbf{X}_p \,\star 
    + \bm{\bar{u}-\bar{u}^*\\\bar{y}-\bar{y}^*}^\T \bar{\textbf{X}}_{\bar{p}} \,\star  
    \leq 
    \bm{w\\z-z^*}^\T \textbf{Y} \,\star.
\end{equation}

On the other hand, using \eqref{Eq:NSC4Interconnection}, we can write  
\begin{equation}\label{Pr:NSC4DissipativityStep2}
\begin{aligned}
    \bm{u-u^*\\y-y^*} = 
    \bm{M_{uy} & M_{u\bar{y}} & M_{uw} \\ \I & \0 & \0}
    \bm{y-y^*\\ \bar{y}-\bar{y}^* \\ w},\\
    \bm{\bar{u}-\bar{u}^*\\\bar{y}-\bar{y}^*} = 
    \bm{M_{\bar{u}y} & M_{\bar{u}\bar{y}} & M_{\bar{u}w} \\ \0 & \I & \0 } 
    \bm{y-y^*\\ \bar{y}-\bar{y}^* \\ w},\\
    \bm{w\\z-z^*} = 
    \bm{\0 & \0 & \I \\ M_{zy} & M_{z\bar{y}} & M_{zw}} 
    \bm{y-y^*\\ \bar{y}-\bar{y}^* \\ w}.
\end{aligned}
\end{equation}
Applying \eqref{Pr:NSC4DissipativityStep2} in \eqref{Pr:NSC4DissipativityStep1} and doing some simplifications, we can obtain the condition for NSC 4 to be $\textbf{Y}$-dissipative as: 
\begin{equation*}
    \bm{y-y^* \\ \bar{y}-\bar{y}^* \\ w}^\T 
    \hspace{-1mm}
    \bm{
    M_{uy} & M_{u\bar{y}} & M_{uw} \\ \I & \0 & \0 \\
    M_{\bar{u}y} & M_{\bar{u}\bar{y}} & M_{\bar{u}w} \\ \0 & \I & \0 \\
    \0 & \0 & \I\\ M_{zy} & M_{z\bar{y}} & M_{zw} 
    }^\T 
    \hspace{-1mm}
    \bm{
    \textbf{X}_p & \0 & \0 \\ \0 & \bar{\textbf{X}}_{\bar{p}} & \0 \\ \0 & \0 & -\textbf{Y}
    } \star \leq 0.
\end{equation*}
Therefore, \eqref{Eq:Pr:NSC4Dissipativity} implies that  $\dot{V}(\hat{x})$ is bounded by the storage function \eqref{Eq:Pr:NSC2Dissipativity0}, in other words, NSC 4 is $\textbf{Y}$-dissipative.
\end{proof}

\begin{remark}\label{Rm:StrictNegativeDefiniteness}
    For the analysis of NSCs 1-4, Props. \ref{Pr:NSC1Stability}-\ref{Pr:NSC4Dissipativity} provide LMI conditions of the form $W \leq 0$. In the sequel, we address rather practical problems such as the synthesis of $M$ and the decentralized analysis/synthesis of NSCs 1-4. Thus, henceforth, we consider the LMI conditions given in Props. \ref{Pr:NSC1Stability}-\ref{Pr:NSC4Dissipativity} to be of the form $W<0$ (stricter than before). 
\end{remark}

\section{Centralized Synthesis of Networked Systems} \label{Sec:CentralizedSynthesis}

In this section, we consider the problem of synthesizing the interconnection matrix $M$ \eqref{Eq:NSC1Interconnection}-\eqref{Eq:NSC4Interconnection} associated with the NSCs 1-4 so as to enforce stability or $\textbf{Y}$-dissipativity. To this end, we first introduce some additional notations as follows.

Let us denote the components of: 
$\textbf{X}_p$ in \eqref{Eq:Pr:NSC1Stability2} as $\textbf{X}_p \equiv [\textbf{X}_p^{kl}]_{k,l\in\N_2}$ (i.e., $\textbf{X}_p^{kl} \triangleq \diag(p_i X_i^{kl}:i\in\N_N),\forall k,l \in\N_2$), 
$\bar{\textbf{X}}_{\bar{p}}$ in \eqref{Eq:Pr:NSC3Stability2} as $\bar{\textbf{X}}_{\bar{p}} \equiv [\bar{\textbf{X}}_{\bar{p}}^{kl}]_{k,l\in\N_2}$ (i.e., $\bar{\textbf{X}}_{\bar{p}}^{kl} \triangleq \diag(\bar{p}_i \bar{X}_i^{kl}:i\in\N_N), \forall k,l \in\N_2$), and 
$\textbf{Y}$ in \eqref{Eq:Pr:NSC2Dissipativity0} as $\textbf{Y} \equiv  [\textbf{Y}^{kl}]_{k,l\in\N_2}$. Let us also denote: 
$\textbf{X}^{12} \triangleq \diag((X_i^{11})^{-1}X_i^{12}:i\in\N_N)$, 
$\textbf{X}^{21} \triangleq \diag(X_i^{21}(X_i^{11})^{-1}:i\in\N_N)$, 
$\bar{\textbf{X}}^{12} \triangleq \diag((\bar{X}_i^{11})^{-1}\bar{X}_i^{12}:i\in\N_{\bar{N}})$, and 
$\bar{\textbf{X}}^{21} \triangleq \diag(\bar{X}_i^{21}(\bar{X}_i^{11})^{-1}:i\in\N_{\bar{N}})$.

Further, we require the following assumption regarding the dissipativity properties of the subsystems $\{\Sigma_i: i\in\N_N\}$.   
 
\begin{assumption}\label{As:PositiveDissipativity}
Each subsystem $\Sigma_i, i\in\N_N$ is $X_i$-EID with $X_i^{11}>0$. Also, each subsystem $\bar{\Sigma}_i, i\in\N_{\bar{N}}$ is $\bar{X}_i$-EID with $\bar{X}_i^{11}>0$.
\end{assumption}

\begin{remark}\label{Rm:As:PositiveDissipativity}
According to Rm. \ref{Rm:X-DissipativityVersions}, the above assumption directly holds if each subsystem (say $\Sigma_i, i\in\N_N$) is: (1) L2G($\gamma_i$) (as $X_i^{11} = \gamma_i^2 \I > 0$), or (2) IFP($\nu_i$) with $\nu_i<0$ (i.e., non-passive, as $X_i^{11}=-\nu_i\I > 0$). Therefore, As.  \ref{As:PositiveDissipativity} holds in many cases of interest. 
Note also that, even if a subsystem (say $\Sigma_i, i\in\N_N$) is IFP($\nu_i$) with $\nu_i \geq 0$ (i.e., passive), using the fact that IFP($\nu_i$) $\implies$ IFP($\nu_i-\epsilon_i$) for any $\epsilon_i>0$ \cite[Rm. 1]{Welikala2022Ax1}, As. \ref{As:PositiveDissipativity} can still be upheld by re-assigning the passivity indices. Another less conservative approach to handle this ``passive'' case is given later on in Rm. \ref{Rm:As:PositiveDissipativityFailure}.        
\end{remark}

Finally, we also require the following assumption on $\textbf{Y}$.

\begin{assumption}\label{As:NegativeDissipativity}
The dissipativity specification for the NSCs 2 and 4 (i.e., $\textbf{Y}$-dissipativity) is such that $\textbf{Y}^{22}<0$. 
\end{assumption} 

\begin{remark}\label{Rm:As:NegativeDissipativity}
According to Rm. \ref{Rm:X-DissipativityVersions}, the above assumption holds if we expect the NSCs 2 and 4 to be: (1) L2G($\gamma$) (as $\textbf{Y}^{22} = -\I < 0$), or (2) OFP($\rho$) with $\rho>0$ (i.e., passive, as $\textbf{Y}^{22} = -\rho \I < 0$). Note that it is always desirable to make a system $L_2$-stable or passive. Therefore, As. \ref{As:NegativeDissipativity} is mild.
\end{remark}

\subsection{Centralized Synthesis of NSCs}

The following four propositions provide centralized LMI problems that can be used to synthesize the interconnection matrix $M$ \eqref{Eq:NSC1Interconnection}-\eqref{Eq:NSC4Interconnection}  for the NSCs 1-4, respectively.


\begin{proposition}\label{Pr:NSC1Synthesis}
Under As. \ref{As:PositiveDissipativity}, a stabilizing interconnection matrix $M_{uy}$ \eqref{Eq:NSC1Interconnection} for the NSC 1 can be found via the LMI:
\begin{equation}\label{Eq:Pr:NSC1Synthesis}
\begin{aligned}
    \mbox{Find: }& L_{uy}, \{p_i:  i\in\N_N\}\\
    \mbox{ Sub to: }& p_i > 0, \forall i\in\N_N,\\
    & \bm{ \textbf{X}_p^{11} &  L_{uy} \\ L_{uy}^\T & -(L_{uy}^\T \textbf{X}^{12} + \textbf{X}^{21}L_{uy} + \textbf{X}_p^{22})
	}>0,
\end{aligned}
\end{equation}
as $M_{uy} \triangleq (\textbf{X}_p^{11})^{-1} L_{uy}$.
\end{proposition}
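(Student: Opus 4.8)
The plan is to begin from the (now strict, via Remark \ref{Rm:StrictNegativeDefiniteness}) stability condition of Proposition \ref{Pr:NSC1Stability}, i.e., \eqref{Eq:Pr:NSC1Stability1} with ``$<0$'', which upon expansion reads
$$W \triangleq M_{uy}^\T \textbf{X}_p^{11} M_{uy} + M_{uy}^\T \textbf{X}_p^{12} + \textbf{X}_p^{21} M_{uy} + \textbf{X}_p^{22} < 0,$$
and to massage it into the LMI \eqref{Eq:Pr:NSC1Synthesis}. The obstruction is twofold: $W$ is quadratic in the unknown $M_{uy}$, and it is also bilinear because $\textbf{X}_p$ in \eqref{Eq:Pr:NSC1Stability2} carries the unknown scalars $p_i$; hence $W<0$ is not an LMI in $(M_{uy},\{p_i\})$ as it stands.

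First I would record, from the block-diagonal forms in \eqref{Eq:Pr:NSC1Stability2} and the definitions of $\textbf{X}^{12},\textbf{X}^{21}$, the factorizations $\textbf{X}_p^{12}=\textbf{X}_p^{11}\textbf{X}^{12}$ and $\textbf{X}_p^{21}=\textbf{X}^{21}\textbf{X}_p^{11}$. Under Assumption \ref{As:PositiveDissipativity} together with $p_i>0$, the block $\textbf{X}_p^{11}=\diag(p_iX_i^{11}:i\in\N_N)$ is positive definite and hence invertible. I would then perform the change of variables $L_{uy}\triangleq\textbf{X}_p^{11}M_{uy}$ on the two cross terms only, leaving the quadratic term intact; using the factorizations this turns them into the affine expressions $L_{uy}^\T\textbf{X}^{12}$ and $\textbf{X}^{21}L_{uy}$, so that
$$W = M_{uy}^\T\textbf{X}_p^{11}M_{uy} + L_{uy}^\T\textbf{X}^{12} + \textbf{X}^{21}L_{uy} + \textbf{X}_p^{22}.$$

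Finally I would invoke Lemma \ref{Lm:AlternativeLMI_Schur} with the identifications $\Theta=\textbf{X}_p^{11}>0$, $\Phi=M_{uy}$, and $\Gamma=-(L_{uy}^\T\textbf{X}^{12}+\textbf{X}^{21}L_{uy}+\textbf{X}_p^{22})$, noting that $\Gamma=\Gamma^\T$ since $\textbf{X}^{21}=(\textbf{X}^{12})^\T$ and $\textbf{X}_p^{22}$ is symmetric. The lemma gives $W<0\iff\bm{\Theta & \Theta\Phi\\ \Phi^\T\Theta & \Gamma}>0$, and since $\Theta\Phi=\textbf{X}_p^{11}M_{uy}=L_{uy}$ (and $\Phi^\T\Theta=L_{uy}^\T$), this is exactly the block matrix in \eqref{Eq:Pr:NSC1Synthesis}; its entries are affine in the decision variables $(L_{uy},\{p_i\})$, so it is a genuine LMI, and $M_{uy}=(\textbf{X}_p^{11})^{-1}L_{uy}$ recovers the interconnection matrix. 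The main obstacle is the middle step: spotting that the single substitution $L_{uy}=\textbf{X}_p^{11}M_{uy}$ simultaneously absorbs the $p_i$-bilinearity in the cross terms and produces precisely the off-diagonal block $\Theta\Phi$ demanded by Lemma \ref{Lm:AlternativeLMI_Schur}, so that the quadratic term is disposed of by the Schur complement rather than by hand. Everything else is routine verification of the chain of equivalences read backwards from feasibility of \eqref{Eq:Pr:NSC1Synthesis} to the stability certificate of Proposition \ref{Pr:NSC1Stability}.
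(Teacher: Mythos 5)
Your proposal is correct and follows essentially the same route as the paper's proof: rewrite the strict stability condition of Prop.~\ref{Pr:NSC1Stability} in the quadratic form \eqref{Pr:NSC1SynthesisStep1}, apply Lm.~\ref{Lm:AlternativeLMI_Schur} using $\textbf{X}_p^{11}>0$ from As.~\ref{As:PositiveDissipativity}, and absorb the $p_i$-bilinearity via the change of variables $L_{uy}=\textbf{X}_p^{11}M_{uy}$ together with the factorizations $\textbf{X}_p^{12}=\textbf{X}_p^{11}\textbf{X}^{12}$ and $\textbf{X}_p^{21}=\textbf{X}^{21}\textbf{X}_p^{11}$. The only (immaterial) difference is that you substitute into the cross terms before invoking the Schur-complement lemma, whereas the paper does it afterwards; your explicit check that $\Gamma=\Gamma^\T$ is a welcome extra detail.
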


\begin{proof}
Using the new notations, we can restate the stability condition for the NSC 1  \eqref{Eq:Pr:NSC1Stability1} as (see also Rm. \ref{Rm:StrictNegativeDefiniteness}):
\begin{equation}\label{Pr:NSC1SynthesisStep1}
    M_{uy}^\T \textbf{X}_p^{11} M_{uy} + M^\T_{uy} \textbf{X}_p^{12} + \textbf{X}_p^{21} M_{uy} + \textbf{X}_p^{22} < 0.
\end{equation}
Under As. \ref{As:PositiveDissipativity}, $\textbf{X}_p^{11} > 0$, and thus, Lm. \ref{Lm:AlternativeLMI_Schur} can be used to write an equivalent condition to \eqref{Pr:NSC1SynthesisStep1} as: 
\begin{equation}\label{Pr:NSC1SynthesisStep2}
    \bm{ \textbf{X}_p^{11} &  \textbf{X}_p^{11} M_{uy} \\ M_{uy}^\T \textbf{X}_p^{11} & -(M^\T_{uy} \textbf{X}_p^{12} + \textbf{X}_p^{21} M_{uy} + \textbf{X}_p^{22})}>0.
\end{equation}
Since the objective is to design $M_{uy}$, we need the change of variables 
$L_{uy} \triangleq \textbf{X}_p^{11}M_{uy}$. Consequently, $\textbf{X}_p^{21}M_{uy} = \textbf{X}_p^{21}(\textbf{X}_p^{11})^{-1} L_{uy} = \textbf{X}^{21}L_{uy}$  and 
$M^\T_{uy} \textbf{X}_p^{12} = L_{uy}^\T (\textbf{X}_p^{11})^{-1}\textbf{X}_p^{12} = L_{uy}^\T\textbf{X}^{12}$. Thus, \eqref{Pr:NSC1SynthesisStep2} can be formulated as the LMI (in $\{p_i:i\in\N_N\}$ and $L_{uy}$) given in \eqref{Eq:Pr:NSC1Synthesis}.
\end{proof}



\begin{proposition}\label{Pr:NSC2Synthesis}
Under As. \ref{As:PositiveDissipativity} and \ref{As:NegativeDissipativity}, the NSC 2 can be made $\textbf{Y}$-dissipative by synthesizing the interconnection matrix $M \equiv \scriptsize \bm{M_{uy} & M_{uw}\\ M_{zy} & M_{zw}}$  \eqref{Eq:NSC2Interconnection} via the LMI:
\begin{equation}\label{Eq:Pr:NSC2Synthesis}
    \begin{aligned}
    \mbox{Find: }& L_{uy}, L_{uw}, M_{zy}, M_{zw}, \{p_i: i\in\N_N\}\\
    \mbox{ Sub. to: }& p_i > 0, \forall i\in\N_N, \mbox{ and  \eqref{Eq:Pr:NSC2Synthesis2}},
    \end{aligned}
\end{equation}
with $M_{uy} \triangleq (\textbf{X}_p^{11})^{-1} L_{uy}$ and $M_{uw} \triangleq  (\textbf{X}_p^{11})^{-1} L_{uw}$.
\end{proposition}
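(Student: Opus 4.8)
The plan is to mirror the strategy of Prop.~\ref{Pr:NSC1Synthesis}, but now accounting for the fact that the decision variable $M \equiv \bm{M_{uy} & M_{uw}\\ M_{zy} & M_{zw}}$ enters the analysis condition \eqref{Eq:Pr:NSC2Dissipativity1} through \emph{both} the subsystem supply-rate block $\textbf{X}_p$ (via the $u$-rows) and the specification block $-\textbf{Y}$ (via the $z$-rows). First I would invoke Rm.~\ref{Rm:StrictNegativeDefiniteness} to treat \eqref{Eq:Pr:NSC2Dissipativity1} as the strict condition $\Phi^\T\Theta\Phi<0$, where $\Phi$ denotes the tall coefficient matrix appearing in \eqref{Eq:Pr:NSC2Dissipativity1} and $\Theta \triangleq \bm{\textbf{X}_p & \0\\ \0 & -\textbf{Y}}$.

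The key structural step is to split $\Phi = \Phi_0 + B_M M$, where $\Phi_0$ collects the constant $\I$-blocks (the $y$- and $w$-rows) and $B_M$ is the constant selector placing $M$ into the $u$- and $z$-rows. Expanding gives $\Phi^\T\Theta\Phi = M^\T(B_M^\T\Theta B_M)M + M^\T(B_M^\T\Theta\Phi_0) + (B_M^\T\Theta\Phi_0)^\T M + \Phi_0^\T\Theta\Phi_0$. The crucial observation---and what makes the whole scheme work---is that the quadratic coefficient collapses to the block-diagonal matrix $B_M^\T\Theta B_M = \bm{\textbf{X}_p^{11} & \0\\ \0 & -\textbf{Y}^{22}}$, which is positive definite precisely under As.~\ref{As:PositiveDissipativity} ($\textbf{X}_p^{11}>0$) and As.~\ref{As:NegativeDissipativity} ($-\textbf{Y}^{22}>0$). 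Hence a single application of Lemma~\ref{Lm:AlternativeLMI_Schur} (identifying its $\Theta$ with $B_M^\T\Theta B_M$, its $\Phi$ with $M$, and its $\Gamma$ with the negated linear-plus-constant remainder, which is symmetric) converts the quadratic inequality in $M$ into an equivalent bilinear $2\times2$-block LMI.

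It then remains to remove the bilinearity between the scalars $\{p_i\}$ and $M$. I would set $L_{uy}\triangleq \textbf{X}_p^{11}M_{uy}$ and $L_{uw}\triangleq \textbf{X}_p^{11}M_{uw}$ exactly as in Prop.~\ref{Pr:NSC1Synthesis}; notably, \emph{no} substitution is needed for $M_{zy},M_{zw}$ because they are multiplied only by the fixed specification block $\textbf{Y}$, so they already appear affinely. The remaining terms linearize via the identities $(\textbf{X}_p^{11})^{-1}\textbf{X}_p^{12}=\textbf{X}^{12}$ and $\textbf{X}_p^{21}(\textbf{X}_p^{11})^{-1}=\textbf{X}^{21}$, while $\textbf{X}_p^{11}$, $\textbf{X}_p^{22}$ and the $\textbf{Y}$-blocks all appear affinely in the decision variables $\{p_i\},L_{uy},L_{uw},M_{zy},M_{zw}$. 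This produces the LMI \eqref{Eq:Pr:NSC2Synthesis2}, from which $M_{uy}=(\textbf{X}_p^{11})^{-1}L_{uy}$ and $M_{uw}=(\textbf{X}_p^{11})^{-1}L_{uw}$ are recovered (both well defined since $p_i>0$ and $X_i^{11}>0$), with $M_{zy},M_{zw}$ obtained directly.

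I expect the main obstacle to be the bookkeeping in the reduction: verifying that both quadratic contributions---from the $u$-rows through $\textbf{X}_p^{11}$ and from the $z$-rows through $-\textbf{Y}^{22}$---collapse into the single positive-definite block $B_M^\T\Theta B_M$, so that \emph{one} Schur step suffices, and then confirming that every surviving term is affine after the change of variables. The delicate point is that cross-terms such as $M_{uy}^\T\textbf{X}_p^{12}=L_{uy}^\T(\textbf{X}_p^{11})^{-1}\textbf{X}_p^{12}$ superficially reintroduce $(\textbf{X}_p^{11})^{-1}$; however, the diagonal $p_i$-structure $\textbf{X}_p^{kl}=\diag(p_iX_i^{kl})$ makes $(\textbf{X}_p^{11})^{-1}\textbf{X}_p^{12}=\textbf{X}^{12}$ independent of $\{p_i\}$, so the factor cancels and yields $L_{uy}^\T\textbf{X}^{12}$. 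Checking all such cancellations is exactly what turns the bilinear condition into the genuine LMI \eqref{Eq:Pr:NSC2Synthesis2}.
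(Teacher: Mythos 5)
Your proposal is correct and follows essentially the same route as the paper's proof: expanding the quadratic form so that the decision-variable block $M$ is sandwiched by the positive-definite matrix $\diag(\textbf{X}_p^{11},-\textbf{Y}^{22})$ (guaranteed by As.~\ref{As:PositiveDissipativity} and \ref{As:NegativeDissipativity}), applying Lemma~\ref{Lm:AlternativeLMI_Schur}, and linearizing via $L_{uy}=\textbf{X}_p^{11}M_{uy}$, $L_{uw}=\textbf{X}_p^{11}M_{uw}$ together with the cancellations $(\textbf{X}_p^{11})^{-1}\textbf{X}_p^{12}=\textbf{X}^{12}$ and $\textbf{X}_p^{21}(\textbf{X}_p^{11})^{-1}=\textbf{X}^{21}$. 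The split $\Phi=\Phi_0+B_M M$ is just a tidier bookkeeping of the expansion the paper writes out explicitly in \eqref{Eq:Pr:NSC2SynthesisStep1}.
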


\begin{proof}
Using the introduced notations, we can restate the condition for $\textbf{Y}$-dissipativity of the NSC 2 (i.e., \eqref{Eq:Pr:NSC2Dissipativity1}) as
\begin{equation}\label{Eq:Pr:NSC2SynthesisStep1}
    \begin{aligned}
    \bm{M_{uy} & M_{uw}\\ M_{zy} & M_{zw}}^\T 
    \bm{\textbf{X}_p^{11} & \0 \\ \0 & -\textbf{Y}^{22}}
    \bm{M_{uy} & M_{uw}\\ M_{zy} & M_{zw}}\\
    + \bm{ M_{uy}^\T \textbf{X}_p^{12} + \textbf{X}_p^{21}M_{uy} + \textbf{X}_p^{22} &  \textbf{X}_p^{21}M_{uw} \\ M_{uw}^\T \textbf{X}_p^{12} & \0}\\
    - \bm{\0  & M_{zy}^\T \textbf{Y}^{21} \\ \textbf{Y}^{12}M_{zy} & M_{zw}^\T \textbf{Y}^{21} +  \textbf{Y}^{12}M_{zw} + \textbf{Y}^{11}}
    < 0.     
    \end{aligned}
\end{equation}
Under As. \ref{As:PositiveDissipativity} and \ref{As:NegativeDissipativity}, $\scriptsize \bm{\textbf{X}_p^{11} & \0 \\ \0 & -\textbf{Y}^{22}}>0$, and thus, Lm. \ref{Lm:AlternativeLMI_Schur} can be used to write an equivalent condition to \eqref{Eq:Pr:NSC2SynthesisStep1}. This equivalent condition, under the change of variables $L_{uy} \triangleq \textbf{X}_p^{11}M_{uy}$ and $L_{uw} \triangleq \textbf{X}_p^{11}M_{uw}$, can be re-formulated as the LMI (in $\{p_i:i\in\N_N\},L_{uy},L_{uw},M_{zy}$ and $M_{zw}$) given in \eqref{Eq:Pr:NSC2Synthesis}.
\end{proof}

\begin{proposition}\label{Pr:NSC3Synthesis}
	Under As. \ref{As:PositiveDissipativity}, a stabilizing interconnection matrix $M \equiv \scriptsize \bm{M_{uy} & M_{u\bar{y}} \\ M_{\bar{u}y} & M_{\bar{u}\bar{y}}}$ \eqref{Eq:NSC3Interconnection} for the NSC 3 can be found via the LMI:
	\begin{equation}\label{Eq:Pr:NSC3Synthesis1}
		\begin{aligned}
			\mbox{Find: } &L_{uy},L_{u\bar{y}},L_{\bar{u}y},L_{\bar{u}\bar{y}},\{p_i:i\in\N_N\},\{\bar{p}_i:i\in\N_{\bar{N}}\}\\
			\mbox{Sub. to: } &p_i>0, \forall i\in\N_N, \ \ \bar{p}_i>0, \forall i\in\N_{\bar{N}}, \mbox{ and } \eqref{Eq:Pr:NSC3Synthesis2},
		\end{aligned}
	\end{equation}
    as
    $
	\scriptsize \bm{M_{uy} & M_{u\bar{y}} \\ M_{\bar{u}y} & M_{\bar{u}\bar{y}}}
	= 
	\bm{\textbf{X}_p^{11} & \0 \\ \0 & \bar{\textbf{X}}_{\bar{p}}^{11}}^{-1}
	\bm{L_{uy} & L_{u\bar{y}} \\ L_{\bar{u}y} & L_{\bar{u}\bar{y}}}
    $.
\end{proposition}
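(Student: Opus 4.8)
The plan is to follow the template established in the proofs of Props. \ref{Pr:NSC1Synthesis} and \ref{Pr:NSC2Synthesis}: isolate the part of the stability condition \eqref{Eq:Pr:NSC3Stability1} that is quadratic in the interconnection blocks, use Lemma \ref{Lm:AlternativeLMI_Schur} to trade that quadratic inequality for an augmented linear one, and then apply a linearizing change of variables. First I would take \eqref{Eq:Pr:NSC3Stability1} as strict (per Rm. \ref{Rm:StrictNegativeDefiniteness}) and expand its left-hand side. Grouping the rows $(u,y)$ against $(\bar u,\bar y)$, it splits as $\scriptsize\bm{M_{uy} & M_{u\bar{y}} \\ \I & \0}^\T \textbf{X}_p \bm{M_{uy} & M_{u\bar{y}} \\ \I & \0} + \bm{M_{\bar{u}y} & M_{\bar{u}\bar{y}} \\ \0 & \I}^\T \bar{\textbf{X}}_{\bar{p}} \bm{M_{\bar{u}y} & M_{\bar{u}\bar{y}} \\ \0 & \I}$. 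Writing $\mathcal{M} \triangleq \scriptsize\bm{M_{uy} & M_{u\bar{y}} \\ M_{\bar{u}y} & M_{\bar{u}\bar{y}}}$ and collecting terms, the purely quadratic part in $\mathcal{M}$ is exactly $\mathcal{M}^\T \Theta \mathcal{M}$ with $\Theta \triangleq \scriptsize\bm{\textbf{X}_p^{11} & \0 \\ \0 & \bar{\textbf{X}}_{\bar{p}}^{11}}$, while the remaining affine-plus-constant terms assemble into a symmetric matrix I would call $-\Gamma$ (its symmetry is automatic, since both $\Phi^\T\Theta\Phi$ and $\mathcal{M}^\T\Theta\mathcal{M}$ are symmetric).

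The key observation is that under As. \ref{As:PositiveDissipativity} the kernel $\Theta$ is positive definite, being block-diagonal with blocks $p_iX_i^{11}>0$ and $\bar{p}_i\bar{X}_i^{11}>0$. Hence Lemma \ref{Lm:AlternativeLMI_Schur} applies with $\Phi=\mathcal{M}$, converting the strict condition $\mathcal{M}^\T\Theta\mathcal{M}-\Gamma<0$ into the equivalent augmented inequality $\scriptsize\bm{\Theta & \Theta\mathcal{M} \\ \mathcal{M}^\T\Theta & \Gamma}>0$. Because $\Theta$ is block-diagonal, its off-diagonal block reads $\Theta\mathcal{M}=\scriptsize\bm{\textbf{X}_p^{11}M_{uy} & \textbf{X}_p^{11}M_{u\bar{y}} \\ \bar{\textbf{X}}_{\bar{p}}^{11}M_{\bar{u}y} & \bar{\textbf{X}}_{\bar{p}}^{11}M_{\bar{u}\bar{y}}}$, which dictates the change of variables $L_{uy}\triangleq\textbf{X}_p^{11}M_{uy}$, $L_{u\bar{y}}\triangleq\textbf{X}_p^{11}M_{u\bar{y}}$, $L_{\bar{u}y}\triangleq\bar{\textbf{X}}_{\bar{p}}^{11}M_{\bar{u}y}$, $L_{\bar{u}\bar{y}}\triangleq\bar{\textbf{X}}_{\bar{p}}^{11}M_{\bar{u}\bar{y}}$ — compactly $L=\Theta\mathcal{M}$, which is precisely the recovery map $\mathcal{M}=\Theta^{-1}L$ stated in the proposition.

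It then remains to verify that this substitution also linearizes $\Gamma$. Here I would exploit the block-diagonal structure together with the cancellation of the scalars $p_i$: since $\textbf{X}_p^{21}(\textbf{X}_p^{11})^{-1}=\diag(X_i^{21}(X_i^{11})^{-1})=\textbf{X}^{21}$ and $(\textbf{X}_p^{11})^{-1}\textbf{X}_p^{12}=\textbf{X}^{12}$ (and analogously for the barred quantities), each affine term such as $\textbf{X}_p^{21}M_{uy}$ becomes $\textbf{X}^{21}L_{uy}$, each $M_{uy}^\T\textbf{X}_p^{12}$ becomes $L_{uy}^\T\textbf{X}^{12}$, and cross terms such as $M_{\bar{u}y}^\T\bar{\textbf{X}}_{\bar{p}}^{12}$ become $L_{\bar{u}y}^\T\bar{\textbf{X}}^{12}$. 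Assembling these yields all four blocks of $\Gamma$ in terms of $\{L_\bullet\}$, $\textbf{X}_p^{22}$ and $\bar{\textbf{X}}_{\bar{p}}^{22}$ alone, turning the augmented inequality into the advertised LMI \eqref{Eq:Pr:NSC3Synthesis2} in the decision variables $\{p_i\},\{\bar{p}_i\}$ and $\{L_\bullet\}$; recovering $\mathcal{M}=\Theta^{-1}L$ closes the argument. I expect the main obstacle to be the bookkeeping of the $2\times 2$ block separation, in particular tracking the cross-coupling blocks $\Gamma_{12},\Gamma_{21}$, which mix $\textbf{X}_p$ and $\bar{\textbf{X}}_{\bar{p}}$ contributions, and confirming that the $p_i$-cancellation indeed lets every affine term be rewritten with the $p$-independent matrices $\textbf{X}^{12},\textbf{X}^{21},\bar{\textbf{X}}^{12},\bar{\textbf{X}}^{21}$ so that the final inequality is genuinely linear in the decision variables.
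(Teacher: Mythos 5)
Your proposal is correct and follows essentially the same route as the paper's proof: expand the (strict) condition \eqref{Eq:Pr:NSC3Stability1} into a quadratic term $\mathcal{M}^\T\Theta\mathcal{M}$ with $\Theta=\mathrm{diag}(\textbf{X}_p^{11},\bar{\textbf{X}}_{\bar{p}}^{11})>0$ plus affine terms, apply Lemma \ref{Lm:AlternativeLMI_Schur}, and linearize via $L=\Theta\mathcal{M}$ using the cancellations $\textbf{X}_p^{21}(\textbf{X}_p^{11})^{-1}=\textbf{X}^{21}$, $(\textbf{X}_p^{11})^{-1}\textbf{X}_p^{12}=\textbf{X}^{12}$ (and their barred analogues). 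The only difference is that you spell out the block bookkeeping that the paper leaves implicit.
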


\begin{proof}	
Using the introduced notations, we can re-state the stability condition for the NSC 3 \eqref{Eq:Pr:NSC3Stability1} as  
\begin{equation}\label{Eq:Pr:NSC3SynthesisStep1}
	\begin{aligned}
		\bm{M_{uy} & M_{u\bar{y}} \\ M_{\bar{u}y} & M_{\bar{u}\bar{y}}}^\T 
		\bm{\textbf{X}_p^{11} & \0 \\ \0 & \bar{\textbf{X}}_{\bar{p}}^{11}}
		\bm{M_{uy} & M_{u\bar{y}} \\ M_{\bar{u}y} & M_{\bar{u}\bar{y}}}\\
		+ 
		\bm{M_{uy}^\T \textbf{X}_p^{12}+\textbf{X}_p^{21} M_{uy}+\textbf{X}_p^{22} & \textbf{X}_p^{21} M_{u\bar{y}} \\ M_{u\bar{y}}^\T \textbf{X}_p^{12} & \0} \\
		+
		\bm{\0 & M_{\bar{u}y}^\T \bar{\textbf{X}}_{\bar{p}}^{12} \\ \bar{\textbf{X}}_{\bar{p}}^{21} M_{\bar{u}y} & 
			M_{\bar{u}\bar{y}}^\T \bar{\textbf{X}}_{\bar{p}}^{12} + \bar{\textbf{X}}_{\bar{p}}^{21} M_{\bar{u}\bar{y}} + \bar{\textbf{X}}_{\bar{p}}^{22}} < 0.
	\end{aligned}
\end{equation}
Under As. \ref{As:PositiveDissipativity}, $\scriptsize \bm{\textbf{X}_p^{11} & \0 \\ \0 & \bar{\textbf{X}}_{\bar{p}}^{11}}>0$, and thus, Lm. \ref{Lm:AlternativeLMI_Schur} is applicable here to write an equivalent condition - which, under the change of variables: 
$
	\scriptsize 
	\bm{L_{uy} & L_{u\bar{y}} \\ L_{\bar{u}y} & L_{\bar{u}\bar{y}}} = 
	\bm{\textbf{X}_p^{11} & \0 \\ \0 & \bar{\textbf{X}}_{\bar{p}}^{11}}
	\bm{M_{uy} & M_{u\bar{y}} \\ M_{\bar{u}y} & M_{\bar{u}\bar{y}}}
$, 
can be re-formulated as the LMI given in \eqref{Eq:Pr:NSC3Synthesis1}.	
\end{proof}

\begin{proposition}\label{Pr:NSC4Synthesis}
Under As. \ref{As:PositiveDissipativity} and As. \ref{As:NegativeDissipativity}, the NSC 4 can be made $\textbf{Y}$-dissipative by synthesizing the interconnection matrix $M$ \eqref{Eq:NSC4Interconnection} via the LMI:
\begin{equation}\label{Eq:Pr:NSC4Synthesis1}
\begin{aligned}
	\mbox{Find: } 
	&L_{uy}, L_{u\bar{y}}, L_{uw}, L_{\bar{u}y}, L_{\bar{u}\bar{y}}, L_{\bar{u}w}, M_{zy}, M_{z\bar{y}}, \\
	&M_{zw}, \{p_i:i\in\N_N\}, \{\bar{p}_i:i\in\N_{\bar{N}}\}	\\
	\mbox{Sub. to: } &p_i \geq 0, \forall i\in\N_N, \ \ 
	\bar{p}_i \geq 0, \forall i\in\N_{\bar{N}},\ \  \eqref{Eq:Pr:NSC4Synthesis2},
\end{aligned}
\end{equation}
with 
$
\scriptsize
\bm{M_{uy} & M_{u\bar{y}} & M_{uw} \\ M_{\bar{u}y} & M_{\bar{u}\bar{y}} & M_{\bar{u}w}} = 
\bm{\textbf{X}_p^{11} & \0 \\ \0 & \bar{\textbf{X}}_{\bar{p}}^{11}}^{-1} \hspace{-1mm} \bm{L_{uy} & L_{u\bar{y}} & L_{uw} \\ L_{\bar{u}y} & L_{\bar{u}\bar{y}} & L_{\bar{u}w}}
$.
\end{proposition}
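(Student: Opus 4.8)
The plan is to mirror, step for step, the three-move template already used in the proofs of Props.~\ref{Pr:NSC1Synthesis}--\ref{Pr:NSC3Synthesis}, viewing NSC~4 as the simultaneous generalization of NSC~2 (the exogenous/performance channel $w,z$ weighted by $-\textbf{Y}$) and NSC~3 (the two subsystem families $\Sigma_i,\bar{\Sigma}_i$ weighted by $\textbf{X}_p,\bar{\textbf{X}}_{\bar{p}}$). First I would restate the analysis condition \eqref{Eq:Pr:NSC4Dissipativity}, now as a strict inequality per Rm.~\ref{Rm:StrictNegativeDefiniteness}, in the introduced notation and regroup it into the form $W=\Phi^\T\Theta\Phi-\Gamma<0$ with
\begin{equation*}
\Phi \triangleq \bm{M_{uy} & M_{u\bar{y}} & M_{uw} \\ M_{\bar{u}y} & M_{\bar{u}\bar{y}} & M_{\bar{u}w} \\ M_{zy} & M_{z\bar{y}} & M_{zw}}, \quad
\Theta \triangleq \diag\!\left(\textbf{X}_p^{11},\, \bar{\textbf{X}}_{\bar{p}}^{11},\, -\textbf{Y}^{22}\right).
\end{equation*}
Here $\Phi$ collects precisely the three ``synthesis rows'' of $M$. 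The point is that, since the middle matrix in \eqref{Eq:Pr:NSC4Dissipativity} is block diagonal and the three identity rows annihilate every other quadratic contribution, the terms in which $M$ appears \emph{quadratically} come only from its $(1,1)$, $(3,3)$ and $(6,6)$ blocks, giving exactly $\Phi^\T\Theta\Phi$; all remaining terms are linear or constant in $M$ and are absorbed into $\Gamma=\Gamma^\T$ (symmetry of $\Gamma$ is inherited from the symmetry of the supply-rate matrices $X_i,\bar{X}_i,\textbf{Y}$, so Lm.~\ref{Lm:AlternativeLMI_Schur}'s hypothesis $\Gamma=\Gamma^\T$ holds).

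Next I would verify $\Theta>0$: As.~\ref{As:PositiveDissipativity} together with $p_i,\bar{p}_i>0$ yields $\textbf{X}_p^{11}>0$ and $\bar{\textbf{X}}_{\bar{p}}^{11}>0$, while As.~\ref{As:NegativeDissipativity} gives $-\textbf{Y}^{22}>0$; hence $\Theta>0$. With $\Theta>0$, Lm.~\ref{Lm:AlternativeLMI_Schur} applies and converts $W<0$ into the equivalent Schur-form condition $\scriptsize \bm{\Theta & \Theta\Phi \\ \Phi^\T\Theta & \Gamma}>0$. This is the crucial step that eliminates the troublesome quadratic-in-$M$ dependence.

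Finally I would linearise via the change of variables $L_{uy}\triangleq\textbf{X}_p^{11}M_{uy}$ (and likewise $L_{u\bar{y}},L_{uw}$), together with $L_{\bar{u}y}\triangleq\bar{\textbf{X}}_{\bar{p}}^{11}M_{\bar{u}y}$ (and $L_{\bar{u}\bar{y}},L_{\bar{u}w}$), so that the block $\Theta\Phi$ becomes affine in the new variables. Crucially, the $z$-row requires \emph{no} change of variables, because $-\textbf{Y}^{22}$ is a \emph{known} matrix fixed by the specification; thus $-\textbf{Y}^{22}M_{zy},\dots$ are already affine in $M_{zy},M_{z\bar{y}},M_{zw}$, which is exactly why those blocks enter \eqref{Eq:Pr:NSC4Synthesis1} as direct decision variables. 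The linear-in-$M$ entries buried in $\Gamma$ are handled by the scalar-cancellation identities $(\textbf{X}_p^{11})^{-1}\textbf{X}_p^{12}=\textbf{X}^{12}$, $\textbf{X}_p^{21}(\textbf{X}_p^{11})^{-1}=\textbf{X}^{21}$ (and the barred analogues), which send $\textbf{X}_p^{21}M_{uy}\mapsto\textbf{X}^{21}L_{uy}$ and $M_{uy}^\T\textbf{X}_p^{12}\mapsto L_{uy}^\T\textbf{X}^{12}$, removing the last bilinear dependence on the $p_i$. Collecting all blocks yields the asserted LMI \eqref{Eq:Pr:NSC4Synthesis2}, and recovering $M$ by inverse scaling completes the argument. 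Note also that although \eqref{Eq:Pr:NSC4Synthesis1} only imposes $p_i\geq0$, the $(1,1)$ block $\Theta$ of the Schur LMI forces $\textbf{X}_p^{11}>0$, i.e.\ $p_i>0$ (and similarly $\bar{p}_i>0$), so the recovery $M=(\textbf{X}_p^{11})^{-1}L$ is automatically well defined.

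I expect the main obstacle to be purely the bookkeeping of $\Gamma$. Unlike NSCs~1--3, here three coupled input--output channels ($u$--$y$, $\bar{u}$--$\bar{y}$, $w$--$z$) interact through a full $3\times3$ block $M$, so $\Gamma$ is a $3\times3$ block symmetric matrix carrying many off-diagonal cross terms (e.g.\ mixed $\textbf{X}_p^{12}$- and $\textbf{Y}$-couplings entering through $M_{uw}$, $M_{zy}$, $M_{z\bar{y}}$). The delicate part is to assemble these terms correctly, respect the $\star$-symmetry conventions, and then confirm that \emph{every} block of the final inequality is affine in $\{p_i\}$, $\{\bar{p}_i\}$, the $L$-variables and the $M_{z\cdot}$-variables; once that is checked, the equivalence to \eqref{Eq:Pr:NSC4Synthesis2} is immediate from Lm.~\ref{Lm:AlternativeLMI_Schur}.
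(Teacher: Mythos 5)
Your proposal is correct and follows essentially the same route as the paper's proof: restate \eqref{Eq:Pr:NSC4Dissipativity} in the introduced notation as a quadratic form $\Phi^\T\Theta\Phi$ plus linear/constant terms (i.e., \eqref{Eq:Pr:NSC4SynthesisStep1}), invoke Lm.~\ref{Lm:AlternativeLMI_Schur} using $\Theta=\diag(\textbf{X}_p^{11},\bar{\textbf{X}}_{\bar{p}}^{11},-\textbf{Y}^{22})>0$ guaranteed by As.~\ref{As:PositiveDissipativity}--\ref{As:NegativeDissipativity}, and linearise via $L=\diag(\textbf{X}_p^{11},\bar{\textbf{X}}_{\bar{p}}^{11})\,[M_{uy}\ M_{u\bar y}\ M_{uw};\,M_{\bar u y}\ M_{\bar u\bar y}\ M_{\bar u w}]$ while leaving the $z$-row variables untouched. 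Your additional observations (the cancellation $\textbf{X}_p^{21}(\textbf{X}_p^{11})^{-1}=\textbf{X}^{21}$ and the fact that the $(1,1)$ block of the Schur LMI implicitly forces $p_i>0$) are consistent with, and slightly more explicit than, the paper's argument.
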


\begin{proof}
Using the introduced notations, we can re-state the condition for $\textbf{Y}$-dissipativity of the NSC 4 (i.e., \eqref{Eq:Pr:NSC4Dissipativity}) as  
\begin{equation}\label{Eq:Pr:NSC4SynthesisStep1}
	\begin{aligned}
		\bm{M_{uy} & M_{u\bar{y}} & M_{uw} \\ M_{\bar{u}y} & M_{\bar{u}\bar{y}} & M_{\bar{u}w} \\ M_{zy} & M_{z\bar{y}} & M_{zw}}^\T	
		\bm{\textbf{X}_p^{11} & \0 & \0 \\ \0 & \bar{\textbf{X}}_{\bar{p}}^{11} & \0 \\ \0 & \0 & -\textbf{Y}^{22}}\star \\
		+\bm{M_{uy}^\T \textbf{X}_p^{12}+\textbf{X}_p^{21}M_{uy}+\textbf{X}_p^{22} & \textbf{X}_p^{21}M_{u\bar{y}} & \textbf{X}_p^{21}M_{uw}\\ M_{u\bar{y}}^\T \textbf{X}_p^{12} & \0 & \0 \\ M_{uw}^\T \textbf{X}_p^{12} & \0 & \0}\\
		+\bm{\0 & M_{\bar{u}y}^\T \bar{\textbf{X}}_{\bar{p}}^{12} & \0 \\ \bar{\textbf{X}}_{\bar{p}}^{21}M_{\bar{u}y} & M_{\bar{u}\bar{y}}^\T \bar{\textbf{X}}_{\bar{p}}^{12} + \bar{\textbf{X}}_{\bar{p}}^{12}M_{\bar{u}\bar{y}}+\bar{\textbf{X}}_{\bar{p}}^{22} & \bar{\textbf{X}}_{\bar{p}}^{21} M_{\bar{u}w}\\ \0 & M_{\bar{u}w}^\T\bar{\textbf{X}}_{\bar{p}}^{12} & \0}\\
		-\bm{\0 & \0 & M_{zy}^\T \textbf{Y}^{21}\\ \0 & \0 & M_{z\bar{y}}^\T \textbf{Y}^{21}\\ \textbf{Y}^{12}M_{zy} & \textbf{Y}^{12}M_{z\bar{y}} & M_{zw}^\T\textbf{Y}^{21} + \textbf{Y}^{12}M_{zw} + \textbf{Y}^{11} } < 0.
	\end{aligned}
\end{equation}
Similar to the proof of Prop. \ref{Pr:NSC2Synthesis}, under As. \ref{As:PositiveDissipativity} and As. \ref{As:NegativeDissipativity}, Lm. \ref{Lm:AlternativeLMI_Schur} is applicable here to write an equivalent condition for \eqref{Eq:Pr:NSC4SynthesisStep1}. This condition, under the change of variables 
$
\scriptsize
\bm{L_{uy} & L_{u\bar{y}} & L_{uw} \\ L_{\bar{u}y} & L_{\bar{u}\bar{y}} & L_{\bar{u}w}} = 
\bm{\textbf{X}_p^{11} & \0 \\ \0 & \bar{\textbf{X}}_{\bar{p}}^{11}}
\bm{M_{uy} & M_{u\bar{y}} & M_{uw} \\ M_{\bar{u}y} & M_{\bar{u}\bar{y}} & M_{\bar{u}w}}
$, can be re-formulated as the LMI \eqref{Eq:Pr:NSC4Synthesis1}.
\end{proof}

\begin{figure*}[!hb]
\centering
\hrulefill
\begin{equation}\label{Eq:Pr:NSC2Synthesis2}
	\bm{
		\textbf{X}_p^{11} & \0 & L_{uy} & L_{uw} \\
		\0 & -\textbf{Y}^{22} & -\textbf{Y}^{22}M_{zy} & -\textbf{Y}^{22} M_{zw}\\ 
		L_{uy}^\T & -M_{zy}^\T\textbf{Y}^{22} & -L_{uy}^\T \textbf{X}^{12}-\textbf{X}^{21}L_{uy}-\textbf{X}_p^{22} & -\textbf{X}^{21}L_{uw}+M_{zy}^\T\textbf{Y}^{21} \\
		L_{uw}^\T & -M_{zw}^\T\textbf{Y}^{22} & -L_{uw}^\T \textbf{X}^{12}+\textbf{Y}^{12} M_{zy} &  M_{zw}^\T\textbf{Y}^{21} +  \textbf{Y}^{12}M_{zw} + \textbf{Y}^{11}
	}>0
\end{equation}
\begin{equation}\label{Eq:Pr:NSC3Synthesis2}
    \bm{
	\textbf{X}_p^{11} & \0 & L_{uy} & L_{u\bar{y}}\\ 
	\0 & \bar{\textbf{X}}_{\bar{p}}^{11} & L_{\bar{u}y} & L_{\bar{u}\bar{y}}\\
	L_{uy}^\T & L_{\bar{u}y}^\T & -L_{uy}^\T \textbf{X}^{12} - \textbf{X}^{21} L_{uy} -\textbf{X}_p^{22} & -\textbf{X}^{21}L_{u\bar{y}}-L_{\bar{u}y}^\T \bar{\textbf{X}}^{12} \\
	L_{u\bar{y}}^\T & L_{\bar{u}\bar{y}}^\T & -L_{u\bar{y}}^\T\textbf{X}^{12}-\bar{\textbf{X}}^{21} L_{\bar{u}y} & -L_{\bar{u}\bar{y}}^\T \bar{\textbf{X}}^{12} - \bar{\textbf{X}}^{21} L_{\bar{u}\bar{y}} - \bar{\textbf{X}}_{\bar{p}}^{22}
}>0,
\end{equation}
\begin{equation}\label{Eq:Pr:NSC4Synthesis2}
	\bm{
		\textbf{X}_p^{11} & \0 & \0 & L_{uy} & L_{u\bar{y}} & L_{uw} \\
		\0 & \bar{\textbf{X}}_{\bar{p}}^{11} & \0 & L_{\bar{u}y} & L_{\bar{u}\bar{y}} & L_{\bar{u}w}\\
		\0 & \0 & -\textbf{Y}^{22} & -\textbf{Y}^{22} M_{zy} & -\textbf{Y}^{22} M_{z\bar{y}} & \textbf{Y}^{22} M_{zw}\\
		L_{uy}^\T & L_{\bar{u}y}^\T & - M_{zy}^\T\textbf{Y}^{22} & -L_{uy}^\T\textbf{X}^{12}-\textbf{X}^{21}L_{uy}-\textbf{X}_p^{22} & -\textbf{X}^{21}L_{u\bar{y}}-L_{\bar{u}y}^\T \bar{\textbf{X}}^{12} & -\textbf{X}^{21}L_{uw} + M_{zy}^\T \textbf{Y}^{21} \\
		L_{u\bar{y}}^\T & L_{\bar{u}\bar{y}}^\T & - M_{z\bar{y}}^\T\textbf{Y}^{22} & -L_{u\bar{y}}^\T\textbf{X}^{12}-\bar{\textbf{X}}^{21}L_{\bar{u}y} & 		-(L_{\bar{u}\bar{y}}^\T \bar{\textbf{X}}^{12} + \bar{\textbf{X}}^{21}L_{\bar{u}\bar{y}}+\bar{\textbf{X}}_{\bar{p}}^{22}) & -\bar{\textbf{X}}^{21} L_{\bar{u}w} + M_{z\bar{y}}^\T \textbf{Y}^{21} \\ 
		L_{uw}^\T & L_{\bar{u}w}^\T & -M_{zw}^\T \textbf{Y}^{22}& -L_{uw}^\T\textbf{X}^{12}+\textbf{Y}^{12}M_{zy} & -L_{\bar{u}w}^\T\bar{\textbf{X}}^{12}+ \textbf{Y}^{12} M_{z\bar{y}} & M_{zw}^\T\textbf{Y}^{21} + \textbf{Y}^{12}M_{zw} + \textbf{Y}^{11}
	}>0
\end{equation}
\end{figure*}


\subsection{\textbf{Some General Remarks}}

The following remarks can be made on the synthesis techniques proposed for NSCs 1-4 respectively in Props. \ref{Pr:NSC1Synthesis}-\ref{Pr:NSC4Synthesis}.

\begin{figure*}[!hb]
\centering
\hrulefill
\begin{equation}\label{Eq:Pr:NSC2SynthesisOptimal2}
\bm{
\textbf{X}_p^{11} & \0 & L_{uy} & L_{uw} \\
\0 & \bar{\rho}\I & M_{zy} & M_{zw} \\ 
L_{uy}^\T & M_{zy}^\T & -\textbf{X}^{21}L_{uy} - L_{uy}^\T\textbf{X}^{12} -\textbf{X}_p^{22} & -\textbf{X}^{21}L_{uw}+\frac{1}{2}M_{zy}^\T \\
L_{uw}^\T & M_{zw}^\T & -L_{uw}^\T \textbf{X}^{12} + \frac{1}{2} M_{zy} &  \frac{1}{2}M_{zw}^\T +  \frac{1}{2}M_{zw} - \nu \I
}>0,
\end{equation}
\begin{equation}\label{Eq:Pr:NSC2Synthesis2Alternative} 
\begin{aligned}
\bm{
 -\textbf{Y}^{22} & -\textbf{Y}^{22}M_{zy} & -\textbf{Y}^{22}M_{zw} \\
 -M_{zy}^\T\textbf{Y}^{22} & -L_{uy}^\T \textbf{X}^{12} - \textbf{X}^{21}L_{uy} - \textbf{X}_p^{22} + \Theta^{11} & -\textbf{X}^{21}L_{uw} + M_{zy}^\T \textbf{Y}^{21} + \Theta^{12} \\
-M_{zw}^\T\textbf{Y}^{22} & -L_{uw}^\T \textbf{X}^{12} + \textbf{Y}^{12}M_{zy} + \Theta^{21} & M_{zw}^\T \textbf{Y}^{21} +  \textbf{Y}^{12}M_{zw} + \textbf{Y}^{11} +  \Theta^{22} 
}>0\\
\mbox{ with }
\bm{\Theta^{11} & \Theta^{12} \\ \Theta^{21} & \Theta^{22}} \equiv 
\alpha^2 \bm{\textbf{X}_p^{11} &  \textbf{X}_p^{11}\\\textbf{X}_p^{11} & \textbf{X}_p^{11}}
-\alpha \bm{
L_{uy}^\T + L_{uy} & 
L_{uy}^\T + L_{uw} \\ 
L_{uw}^\T + L_{uy} & 
L_{uw}^\T + L_{uw}
}, \mbox{ and }\alpha \in \R.
\end{aligned}
\end{equation}
\end{figure*}

\begin{remark}\textit{(Partial Topology Design)}\label{Rm:PartialTopologyDesign}
The interconnection matrix synthesizing techniques proposed in Props. \ref{Pr:NSC1Synthesis}-\ref{Pr:NSC4Synthesis}) can be used even when the interconnection matrix is partially known - as it will only reduce the number of variables in the corresponding LMI problems. In other words, the proposed techniques can be used not only to design an interconnection topology from scratch but also to fine-tune an existing interconnection topology (e.g., to determine coupling weights).    
\end{remark}

\begin{remark}\textit{(Optimal Synthesis)}\label{Rm:EstimationOFPassivityIndices}
The LMI problem \eqref{Eq:Pr:NSC2Synthesis} (and \eqref{Eq:Pr:NSC4Synthesis1}) proposed for the NSC 2 (and 4) in Props. \ref{Pr:NSC2Synthesis} (and \ref{Pr:NSC4Synthesis}) can be modified to address the problems of synthesizing the \emph{optimal} interconnection matrix $M$ that: minimizes the system stability index L2G($\gamma$) or maximizes passivity indices IFP($\nu$) and OFP($\rho$). For example, for the latter case, the modified LMI problem \eqref{Eq:Pr:NSC2Synthesis} takes the form: 
\begin{equation}\label{Eq:Pr:NSC2SynthesisOptimal1}
    \begin{aligned}
    \max_{\substack {L_{uy},L_{uw},M_{zy},M_{zw},\\ \nu, \rho, \{p_i: i\in\N_N\}}} & c_1 \nu - c_2 \bar{\rho} \\
    \mbox{Sub. To: }& \nu > 0, \bar{\rho} > 0, p_i > 0, \forall i\in\N_N, \eqref{Eq:Pr:NSC2SynthesisOptimal2},\\
    \end{aligned}
\end{equation}
where $c_1,c_2>0$ are some pre-selected cost coefficients, and $M_{uy} \triangleq (\textbf{X}_p^{11})^{-1} L_{uy}$, $M_{uw} \triangleq  (\textbf{X}_p^{11})^{-1} L_{uw}$ and $\rho = 1/\bar{\rho}$. Note that \eqref{Eq:Pr:NSC2SynthesisOptimal2} above has been obtained from \eqref{Eq:Pr:NSC2Synthesis2} via applying $\textbf{Y} = \scriptsize \bm{-\nu \I & \frac{1}{2}\I \\ \frac{1}{2} \I & -\rho \I}$, the congruence principle \cite{Bernstein2009} and the change of variables $\bar{\rho}=1/\rho$. 
\end{remark}

\begin{remark}\textit{(Estimating Stability/Passivity Indices)}
When the interconnection matrix is pre-defined, still, solving an optimal synthesis problem like \eqref{Eq:Pr:NSC2SynthesisOptimal1} (with some minor modifications) will reveal the stability/passivity indices of the networked system. The importance of such a knowledge (a dissipativity property) is evident from this paper itself. 
\end{remark}

\begin{remark}(Analysis via Synthesis)\label{Rm:AnalysisViaSynthesis}
When the interconnection matrix is predefined, all the synthesis problems formulated in Props. \ref{Pr:NSC1Synthesis}-\ref{Pr:NSC4Synthesis} (with some minor modifications) can still be used to analyze the corresponding networked systems. Note that the aforementioned ``minor modifications'' basically refer to having to reverse the used change of variables (e.g., replacing $L_{uy}$ in \eqref{Eq:Pr:NSC1Synthesis} with $\textbf{X}_p^{11}M_{uy}$). In essence, for the analysis of the NSCs, instead of Props. \ref{Pr:NSC1Stability}-\ref{Pr:NSC4Dissipativity} we can use Props. \ref{Pr:NSC1Synthesis}-\ref{Pr:NSC4Synthesis}. Note also that LMI conditions in Props. \ref{Pr:NSC1Synthesis}-\ref{Pr:NSC4Synthesis} are less convoluted than those in Props. \ref{Pr:NSC1Stability}-\ref{Pr:NSC4Dissipativity}. We will use this to our advantage later on when decentralizing the analysis results proposed in Props.  \ref{Pr:NSC1Stability}-\ref{Pr:NSC4Dissipativity}. 
\end{remark}

\begin{remark}\textit{(Single Subsystem Case: NSC 1)}\label{Rm:SingleSubsystemCaseNSC1}
Consider the NSC 1 with $N=1$ and a scalar interconnection matrix $M_{uy}=m_{uy}\I$ \eqref{Eq:NSC1Interconnection} where $m_{uy}\in\R$. Now, if the subsystem $\Sigma_1$ is L2G($\gamma_1$), As. \ref{As:PositiveDissipativity} holds and $\textbf{X}_p^{11}=p_1\gamma_1^2\I$, $\textbf{X}_p^{22}=-p_1\I$, $\textbf{X}^{12} = \textbf{X}^{12} = \0$. Thus, the LMI condition \eqref{Eq:Pr:NSC1Synthesis} in Prop. \ref{Pr:NSC1Synthesis} can be written as  
$
    \scriptsize \bm{ p_1\gamma_1^2\I &  l_{uy}\I \\ l_{uy}^\T \I& p_1\I} \normalsize >0 \iff 
    \scriptsize \bm{ p_1\gamma_1^2 &  l_{uy} \\ l_{uy} & p_1} \normalsize >0 \iff 
    \{p_1 > 0, p_1^2 \gamma_1^2 - l_{uy}^2 > 0 \} \impliedby \{p_1 = k l_{uy}/\gamma_1, l_{uy}>0, k>1\} 
$. Now, using the change of variables relationships, we get $m_{uy}=l_{uy}/(p_1\gamma_1^2)=1/k\gamma_1 \iff m_{uy}\gamma_1 = 1/k < 1$, which is the well-known small-gain condition for stability.
\end{remark}

\begin{remark}\textit{(Single Subsystem Case: NSC 2)}\label{Rm:SingleSubsystemCaseNSC2}
Consider the NSC 2 with $N=1$ and a scalar interconnection matrix $M = \scriptsize \bm{m_{uy}\I & m_{uw}\I \\ m_{zy}\I & m_{zw}\I}$ \eqref{Eq:NSC2Interconnection} where $m_{uy},m_{uw},m_{zy},m_{zw}\in\R$. Now, if the subsystem $\Sigma_1$ is L2G($\gamma_1$), As. \ref{As:PositiveDissipativity} holds and $\textbf{X}_p^{11}=p_1\gamma_1^2\I$, $\textbf{X}_p^{22}=-p_1\I$, $\textbf{X}^{12}=\textbf{X}^{21}=\0$. Let the dissipativity specification for the NSC 2 be $\textbf{Y}=\scriptsize\bm{-\nu_1\I & \frac{1}{2}\I \\ \frac{1}{2}\I & -\rho_1\I}$ with some $\nu_1, \rho_1 > 0$. Hence the As. \ref{As:NegativeDissipativity} also holds and the LMI problem \eqref{Eq:Pr:NSC2Synthesis} in Prop. \ref{Pr:NSC2Synthesis} now can be written as  
\begin{equation}\label{Eq:Pr:NSC2SynthesisSingleSubsystemCase}
    \begin{aligned}
    \mbox{Find: }& l_{uy}, l_{uw}, m_{zy}, m_{zw}, \{p_i: i\in\N_N\}\\
    \mbox{ Sub. to: }& p_i > 0, \forall i\in\N_N,\\
    &\hspace{-2mm}\bm{
    p_1\gamma_1^2 & 0 & l_{uy} & l_{uw} \\
    0 & \rho_1 & \rho_1 m_{zy} & \rho_1 m_{zw}\\ 
    l_{uy} & \rho_1 m_{zy} & p_1 & \frac{1}{2}m_{zy} \\
    l_{uw} & \rho_1 m_{zw} & \frac{1}{2} m_{zy} &  m_{zw} - \nu_1 
    }>0,
    \end{aligned}
\end{equation}
where $m_{uy}=l_{uy}/(p_1\gamma_1^2)$ and $m_{uw}=l_{uw}/(p_1\gamma_1^2)$. It can be shown that the above LMI problem is a streamlined version of the interconnection matrix synthesis approach proposed for a single subsystem in \cite[Th. 3]{Xia2014}.
\end{remark}

\begin{remark}(Assumption \ref{As:PositiveDissipativity}) \label{Rm:As:PositiveDissipativityFailure}
For situations where As. \ref{As:PositiveDissipativity} does not hold, a conservative solution (a fix) was proposed in Rm. \ref{Rm:As:PositiveDissipativity}. However, there is an alternative (and less conservative) solution to this issue based on Lm. \ref{Lm:AlternativeLMI_LowerBound}. To illustrate this solution, consider the interconnection matrix synthesis problem for the NSC 2 (originally addressed in Prop. \ref{Pr:NSC2Synthesis}). Now, without loss of generality, assume each subsystem $\Sigma_i, i \in  \N_N$ to be $X_i$-EID with $X_i^{11}<0$. Consequently, $X_p^{11} < 0$ and the corresponding LMI condition \eqref{Eq:Pr:NSC2Synthesis2} in  Prop. \ref{Pr:NSC2Synthesis} clearly does not hold. This can be resolved soon-after the step \eqref{Eq:Pr:NSC2SynthesisStep1} in the proof of Prop. \ref{Pr:NSC2Synthesis} by using both Lm. \ref{Lm:AlternativeLMI_LowerBound} (to handle $X_p^{11} < 0$) and Lm. \ref{Lm:AlternativeLMI_Schur} (to handle $-\textbf{Y}^{22}>0$) - which leads to the LMI condition \eqref{Eq:Pr:NSC2Synthesis2Alternative}. In all, when As. \ref{As:PositiveDissipativity} does not hold, the LMI condition \eqref{Eq:Pr:NSC2Synthesis2} in Prop. \ref{Pr:NSC2Synthesis} should be replaced by \eqref{Eq:Pr:NSC2Synthesis2Alternative}. Using the same steps, for situations where As. \ref{As:PositiveDissipativity} does not hold, similar solutions can be developed to the other proposed synthesis techniques in Props. \ref{Pr:NSC1Synthesis}, \ref{Pr:NSC3Synthesis} and \ref{Pr:NSC4Synthesis} as well.
\end{remark}

\begin{remark}
(Equilibrium Independence)
Similar to the proposed analysis techniques in Props. \ref{Pr:NSC1Stability}-\ref{Pr:NSC4Dissipativity}, the proposed synthesis techniques in Props. \ref{Pr:NSC1Synthesis}-\ref{Pr:NSC4Synthesis} are also independent of the equilibrium points of the respective NSCs 1-4. 
\end{remark}

\section{Decentralized Analysis and Synthesis of Networked Systems}\label{Sec:DecentralizedAnalysisAndSynthesis}

\begin{figure*}[!hb]
\centering
\hrulefill
\begin{equation}\label{Eq:Th:NSC1Stability2}
W_{ij} = \bm{  p_iX_i^{11}e_{ij} &  p_iX_i^{11} M_{uy}^{ij} \\ 
p_jM_{uy}^{ji\T}X_j^{11} &  
-p_jM^{ji\T}_{uy}X_j^{12} - p_iX_i^{21}M^{ij}_{uy} - p_iX_i^{22}e_{ij}}
\end{equation}
\begin{equation}\label{Eq:Th:NSC1Synthesis2}
W_{ij} = \bm{ p_iX_i^{11}e_{ij} &  L_{uy}^{ij} \\ L_{uy}^{ji\T} & -(p_jL_{uy}^{ji^\T}X_j^{12} + p_iX_i^{21}L_{uy}^{ij} + p_iX_i^{22}e_{ij})}
\end{equation}
\end{figure*}

In this section, we briefly show that the each centralized analysis and synthesis technique proposed earlier in Props. \ref{Pr:NSC1Stability}-\ref{Pr:NSC4Synthesis} for the NSCs 1-4 can be implemented in a decentralized and compositional manner. For this purpose, we will use the Lemmas \ref{Lm:NetworkMatrixProperties}, \ref{Lm:AlternativeLMI_BEW} and \ref{Lm:MainLemma}. First, to simplify the decentralization, we make the following assumption regarding the subsystems.

\begin{assumption}\label{As:SymmetryForDecentralization}
In the NSC 2, the input signal $w \equiv [w_i^\T]^\T_{i\in\N_N}$ where $w_i\in \R^{r_i}$ and the output signal $z \equiv [z_i^\T]^\T_{i\in\N_N}$ where $z_i\in\R^{l_i}$. 
Further, in the NSCs 3 and 4, $\bar{N}=N$ and each subsystem $\Sigma_i,i\in\N_N$ has a corresponding (virtual twin) subsystem $\bar{\Sigma}_i,i\in\N_{\bar{N}}$. Furthermore, in the NSC 4, the input signal $w \equiv [w_i^\T,\bar{w}_i^\T]^\T_{i\in\N_N}$ where $\bar{w}_i\in \R^{\bar{r}_i}$ and the output signal $z \equiv [z_i^\T, \bar{z}_i^\T]^\T_{i\in\N_N}$ where $\bar{z}_i\in\R^{\bar{l}_i}$. 
\end{assumption}

Note that under As. \ref{As:SymmetryForDecentralization}, each block element of any interconnection matrix $M$ in \eqref{Eq:NSC1Interconnection}-\eqref{Eq:NSC4Interconnection} is a block network matrix. For example, the block element $M_{uy} \equiv  [M_{uy}^{ij}]_{i,j\in\N_N}$ of $M$ (appearing in  \eqref{Eq:NSC1Interconnection}-\eqref{Eq:NSC4Interconnection}) is a block network matrix. Note that the matrix $M_{uy}^{ij}$ represents how the input $u_i$ of the subsystem $\Sigma_i$ is affected by the output $y_j$ of the subsystem $\Sigma_j$.

Moreover, under As. \ref{As:SymmetryForDecentralization}, each block element of the matrices $\textbf{X}_p$ \eqref{Eq:Pr:NSC1Stability2}, $\bar{\textbf{X}}_{\bar{p}}$ \eqref{Eq:Pr:NSC3Stability2} and $\textbf{Y}$ \eqref{Eq:Pr:NSC2Dissipativity0} are block network matrices. For example, the $(k,l)$\tsup{th} block elements of $\textbf{X}_p$, $\textbf{Y}$ and $\bar{X}_{\bar{p}}, \forall k,l\in\N_2$ are the block network matrices $\textbf{X}_p^{kl}\equiv \diag(p_i X_i^{kl}:i\in\N_N)$, $\bar{\textbf{X}}_{\bar{p}}^{kl}=\diag(\bar{p}_i \bar{X}_i^{kl}:i\in\N_N)$ and $\textbf{Y}^{kl}$, respectively. The matrices $\textbf{X}^{12},\textbf{X}^{21},\bar{\textbf{X}}^{12}$ and $\bar{\textbf{X}}^{21}$ are also block network matrices (recall: $\textbf{X}^{12} \triangleq \diag((X_i^{11})^{-1}X_i^{12}:i\in\N_N)$). 

It is worth noting that all the network matrices identified above are also block diagonal except for the block elements of $M$ and $\textbf{Y}$ - which are by default non-block diagonal (i,e., general) network matrices unless specified otherwise.

As pointed out in Rm. \ref{Rm:Lm:MainLemma}, in a network setting, Lm. \ref{Lm:MainLemma} can be used to decentrally analyze (test/enforce) an LMI condition of the form $W>0$ when $W$ is a network matrix. However, if we consider the main LMI conditions derived in Props. \ref{Pr:NSC1Synthesis}-\ref{Pr:NSC4Synthesis} (e.g., see \eqref{Eq:Pr:NSC2Synthesis2}), while they take the form $\Psi>0$, $\Psi$, in any of these cases, is not a network matrix. In fact, in each of these four cases, $\Psi$ is a block-block matrix. Moreover, upon close examination using Lm. \ref{Lm:NetworkMatrixProperties}, it can be seen that each block element of each $\Psi$ is a network matrix (assuming $\textbf{Y}^{12},\textbf{Y}^{21}$ and $\textbf{Y}^{22}$ to be block diagonal network matrices). Therefore, using Lm. \ref{Lm:AlternativeLMI_BEW}, we can replace the condition $\Psi>0$ with an equivalent condition $W \triangleq \text{BEW}(\Psi)>0$. Now, based on Lm. \ref{Lm:NetworkMatrixProperties}, $W$ is a network matrix. In all, using Lemmas \ref{Lm:NetworkMatrixProperties} and \ref{Lm:AlternativeLMI_BEW}, the main LMI conditions derived in Props. \ref{Pr:NSC1Synthesis}-\ref{Pr:NSC4Synthesis} can be transformed to the form $W>0$ where $W$ is a network matrix. Consequently, Props. \ref{Pr:NSC1Synthesis}-\ref{Pr:NSC4Synthesis} (centralized synthesis techniques) can be decentrally implemented exploiting Lm. \ref{Lm:MainLemma}. 

Note also that, the same argument is valid for the scenarios where As. \ref{As:PositiveDissipativity} is violated in Props. \ref{Pr:NSC1Synthesis}-\ref{Pr:NSC4Synthesis} (due to Rm. \ref{Rm:As:PositiveDissipativityFailure}). Moreover, as pointed out in Rm. \ref{Rm:AnalysisViaSynthesis},  Props. \ref{Pr:NSC1Synthesis}-\ref{Pr:NSC4Synthesis} can also be used to analyze the corresponding networked systems. Therefore, by extension, such centralized analysis techniques can also be decentrally implemented exploiting Lm. \ref{Lm:MainLemma}.

Based on the above discussion/proof, we now can formally state the decentralized versions of all the centralized analysis and synthesis techniques proposed in Props. \ref{Pr:NSC1Stability}-\ref{Pr:NSC4Synthesis} (as a set of eight theorems). However, due to space constraints, here we limit only to the NSC 1 and provide the decentralized versions of its centralized analysis and synthesis techniques proposed in Pr. \ref{Pr:NSC1Stability} and \ref{Pr:NSC1Synthesis} in the following two theorems.

\begin{theorem}\label{Th:NSC1Stability}
Under As. \ref{As:PositiveDissipativity} and \ref{As:SymmetryForDecentralization}, the NSC 1 is stable if at each subsystem $\Sigma_i, i\in\N_N$, the LMI problem 
\begin{equation}\label{Eq:Th:NSC1Stability1}
    \begin{aligned}
    \mbox{Find: } &p_i \\
    \mbox{ such that } &p_i>0, \ \tilde{W}_{ii}>0, 
    \end{aligned}
\end{equation}
is feasible, where $\tilde{W}_{ii}$ is computed from Lm. \ref{Lm:MainLemma} when analyzing $W=[W_{ij}]_{i,j\in\N_N}>0$ with $W_{ij}$ as in \eqref{Eq:Th:NSC1Stability2}. 
\end{theorem}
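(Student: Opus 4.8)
The plan is to transport the centralized stability condition of Proposition~\ref{Pr:NSC1Stability} through the decentralization pipeline built in the preliminaries, namely Lemmas~\ref{Lm:NetworkMatrixProperties}, \ref{Lm:AlternativeLMI_BEW} and \ref{Lm:MainLemma}. First I would rewrite \eqref{Eq:Pr:NSC1Stability1} in its strict form (permitted by Remark~\ref{Rm:StrictNegativeDefiniteness}) and expand the congruence product to obtain $M_{uy}^\T \textbf{X}_p^{11} M_{uy} + M_{uy}^\T \textbf{X}_p^{12} + \textbf{X}_p^{21} M_{uy} + \textbf{X}_p^{22} < 0$, precisely the opening line of the proof of Proposition~\ref{Pr:NSC1Synthesis}. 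Since $M_{uy}$ is given (this is an analysis statement, cf.\ Remark~\ref{Rm:AnalysisViaSynthesis}), no change of variables is introduced.

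Under Assumption~\ref{As:PositiveDissipativity} we have $\textbf{X}_p^{11}>0$, so Lemma~\ref{Lm:AlternativeLMI_Schur} (with $\Phi=M_{uy}$, $\Theta=\textbf{X}_p^{11}$ and $\Gamma=-(M_{uy}^\T \textbf{X}_p^{12}+\textbf{X}_p^{21}M_{uy}+\textbf{X}_p^{22})$, which is symmetric because $\textbf{X}_p^{21}=(\textbf{X}_p^{12})^\T$) converts the above into the equivalent block-block condition $\Psi>0$, where $\Psi \triangleq \scriptsize \bm{\textbf{X}_p^{11} & \textbf{X}_p^{11}M_{uy} \\ M_{uy}^\T\textbf{X}_p^{11} & -(M_{uy}^\T\textbf{X}_p^{12}+\textbf{X}_p^{21}M_{uy}+\textbf{X}_p^{22})}$. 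I would then apply Lemma~\ref{Lm:AlternativeLMI_BEW} to replace $\Psi>0$ by the equivalent $W \triangleq \mathrm{BEW}(\Psi)>0$ with $W=[W_{ij}]_{i,j\in\N_N}$. The only bookkeeping is to read off the four blocks of $W_{ij}$ from the block-diagonal structure of the $\textbf{X}_p^{kl}$ (so that $(\textbf{X}_p^{11}M_{uy})_{ij}=p_iX_i^{11}M_{uy}^{ij}$, $(M_{uy}^\T\textbf{X}_p^{11})_{ij}=p_jM_{uy}^{ji\T}X_j^{11}$, and similarly for the $(2,2)$ block), which reproduces exactly \eqref{Eq:Th:NSC1Stability2}; this step is mechanical and I would not dwell on it.

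The substantive step is to certify that $W$ is a \emph{symmetric block network matrix} of $\mathcal{G}_N$, so that Lemma~\ref{Lm:MainLemma} applies in the decentralized, compositional mode of Remark~\ref{Rm:Lm:MainLemma} rather than as a purely global recursion. Here Assumption~\ref{As:SymmetryForDecentralization} is indispensable: it renders $M_{uy}\equiv[M_{uy}^{ij}]_{i,j\in\N_N}$ a block network matrix, while each $\textbf{X}_p^{kl}=\diag(p_iX_i^{kl}:i\in\N_N)$ is a block-diagonal network matrix. Lemma~\ref{Lm:NetworkMatrixProperties} then guarantees that every block of $\Psi$ (a sum of transposes and of products having a block-diagonal factor) is a network matrix, whence its block-element-wise form $W$ is a block network matrix; symmetry of $W$ is inherited from $\Psi=\Psi^\T$. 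Lemma~\ref{Lm:MainLemma} now gives $W>0 \iff \tilde{W}_{ii}>0\ \forall i\in\N_N$, with $\tilde{W}_{ii}$ assembled sequentially via \eqref{Eq:Lm:MainLemma1}--\eqref{Eq:Lm:MainLemma2}. Reversing the chain, feasibility of \eqref{Eq:Th:NSC1Stability1} at every $\Sigma_i$ (some $p_i>0$ making $\tilde{W}_{ii}>0$) yields $W>0$, hence \eqref{Eq:Pr:NSC1Stability1}, hence stability of NSC~1 by Proposition~\ref{Pr:NSC1Stability}.

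I expect the main obstacle to be exactly this network-matrix certification, since everything downstream of it is a direct invocation of Lemma~\ref{Lm:MainLemma}. A secondary point requiring care is the genuinely decentralized/compositional character of the resulting test: the block $W_{ij}$ carries both $p_i$ and the neighbour's $p_j$, so forming $\tilde{W}_{ii}$ at $\Sigma_i$ needs only the already-fixed $\{p_j,\tilde{W}_{jk}:j<i\}$ broadcast by predecessors, leaving \eqref{Eq:Th:NSC1Stability1} a feasibility problem in the single scalar $p_i$ that never forces revision of earlier choices---precisely the information flow orchestrated by Algorithm~\ref{Alg:DistributedPositiveDefiniteness}. Confirming that this ordering is consistent (and that $p_i>0$ can be maintained throughout) is where the argument must be written out with some attention.
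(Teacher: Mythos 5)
Your proposal is correct and matches the paper's own argument: the paper likewise passes from the (strict) condition of Proposition~\ref{Pr:NSC1Stability} to the Schur-complement form via Lemma~\ref{Lm:AlternativeLMI_Schur} under Assumption~\ref{As:PositiveDissipativity}, certifies the block-element-wise form as a symmetric block network matrix via Assumption~\ref{As:SymmetryForDecentralization} and Lemmas~\ref{Lm:NetworkMatrixProperties} and \ref{Lm:AlternativeLMI_BEW}, and then invokes Lemma~\ref{Lm:MainLemma} (the only cosmetic difference being that the paper phrases this as reversing the change of variables in Proposition~\ref{Pr:NSC1Synthesis}, per Remark~\ref{Rm:AnalysisViaSynthesis}, rather than starting directly from Proposition~\ref{Pr:NSC1Stability}). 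Your closing observations about the $p_i$/$p_j$ bookkeeping and the sequential information flow are consistent with Algorithm~\ref{Alg:DistributedPositiveDefiniteness}.
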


\begin{theorem}\label{Th:NSC1Synthesis}
Under As. \ref{As:PositiveDissipativity} and \ref{As:SymmetryForDecentralization}, the NSC 1 is stable if at each subsystem $\Sigma_i, i\in\N_N$, the problem
\begin{equation}\label{Eq:Th:NSC1Synthesis1}
\begin{aligned}
    \mbox{Find } &p_i, \{L_{uy}^i\} \\
    \mbox{ such that } &p_i > 0, \tilde{W}_{ii}>0,
\end{aligned}
\end{equation}
is feasible, where $\tilde{W}_{ii}$ is computed from Lm. \ref{Lm:MainLemma} when analyzing $W=[W_{ij}]_{i,j\in\N_N}>0$ with $W_{ij}$ as in \eqref{Eq:Th:NSC1Synthesis2}. The newly synthesized interconnection matrix elements at subsystem $\Sigma_i$, i.e. $\{M_{uy}^i\}$ can be found using the relationships:
$M_{uy}^{ij} = (p_iX_i^{11})^{-1}L_{uy}^{ij}, \forall j\in\N_{i-1}$, 
$M_{uy}^{ji} = (p_jX_j^{11})^{-1}L_{uy}^{ji}, \forall j\in\N_{i-1}$, and 
$M_{uy}^{ii} = (p_iX_i^{11})^{-1}L_{uy}^{ii}$.
\end{theorem}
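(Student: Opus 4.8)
The plan is to obtain Theorem \ref{Th:NSC1Synthesis} as the decentralized, compositional realization of the centralized synthesis result in Proposition \ref{Pr:NSC1Synthesis}, by recasting its LMI \eqref{Eq:Pr:NSC1Synthesis} as a positive-definiteness test on a single \emph{network matrix} and then invoking the decentralized criterion of Lemma \ref{Lm:MainLemma}. First I would note that \eqref{Eq:Pr:NSC1Synthesis} has the form $\Psi>0$, where $\Psi$ is the $2\times 2$ block-block matrix assembled from $\textbf{X}_p^{11}$, $L_{uy}$, $\textbf{X}^{12}$, $\textbf{X}^{21}$ and $\textbf{X}_p^{22}$. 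Under As. \ref{As:SymmetryForDecentralization} each of these is a block network matrix of $\mathcal{G}_N$: the matrices $\textbf{X}_p^{11},\textbf{X}^{12},\textbf{X}^{21},\textbf{X}_p^{22}$ are block-diagonal network matrices, while $L_{uy}$ inherits the interconnection topology of $M_{uy}$. Hence, by the closure properties in Lemma \ref{Lm:NetworkMatrixProperties}, every block entry $\Psi^{kl}$ of $\Psi$ is again a network matrix of $\mathcal{G}_N$.

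Next I would apply Lemma \ref{Lm:AlternativeLMI_BEW} to replace $\Psi>0$ by the equivalent condition $W\triangleq\text{BEW}(\Psi)>0$, and then Lemma \ref{Lm:NetworkMatrixProperties} to conclude that $W$ is itself a block network matrix. The algebraic core is to read off its blocks $W_{ij}$: each $W_{ij}$ stacks the $(i,j)$ sub-blocks of $\Psi^{11},\Psi^{12},\Psi^{21},\Psi^{22}$, so that the $(1,1)$, $(1,2)$ and $(2,1)$ entries are immediately $p_iX_i^{11}e_{ij}$, $L_{uy}^{ij}$ and $L_{uy}^{ji\T}$, while the block-diagonality of $\textbf{X}_p^{11},\textbf{X}^{12},\textbf{X}^{21},\textbf{X}_p^{22}$ collapses each product in the $(2,2)$ entry to a single summand carrying the appropriate $i$- or $j$-index. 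Carrying this bookkeeping out reproduces the blocks $W_{ij}$ displayed in \eqref{Eq:Th:NSC1Synthesis2}; this matching is the step I expect to demand the most care, since it is purely index-tracking and must respect the block-indexing convention.

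With $W>0$ now expressed as a network-matrix condition, I would invoke Lemma \ref{Lm:MainLemma} (cf. Rm. \ref{Rm:Lm:MainLemma}): $W>0$ holds if and only if $\tilde{W}_{ii}>0$ for every $i\in\N_N$, where $\tilde{W}_{ii}=W_{ii}-\tilde{W}_i\mathcal{D}_i\tilde{W}_i^\T$ is built from the local hook blocks $\{W_{ij}:j\in\N_i\}$ together with the quantities $\tilde{W}_{jj}$ supplied by the earlier subsystems $\Sigma_j$, $j<i$. Because everything fixed at earlier steps renders $\mathcal{D}_i,\mathcal{A}_i$ (hence $\tilde{W}_i$) affine in the new local decision variables $p_i,\{L_{uy}^i\}$ that enter row/column $i$ of $W$, and $\mathcal{D}_i>0$, a single Schur-complement step in the spirit of Lemma \ref{Lm:AlternativeLMI_Schur} turns $\tilde{W}_{ii}>0$ into a genuine LMI in $p_i,\{L_{uy}^i\}$ — precisely the per-subsystem feasibility problem \eqref{Eq:Th:NSC1Synthesis1}. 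Solving it sequentially (enforcing, as in Alg. \ref{Alg:DistributedPositiveDefiniteness}) at $\Sigma_1,\dots,\Sigma_N$ guarantees $\tilde{W}_{ii}>0$ for all $i$, hence $W>0$, hence \eqref{Eq:Pr:NSC1Synthesis}; Proposition \ref{Pr:NSC1Synthesis} then certifies that $M_{uy}\triangleq(\textbf{X}_p^{11})^{-1}L_{uy}$ stabilizes NSC 1. Reversing the change of variables blockwise yields the stated recovery formulas $M_{uy}^{ij}=(p_iX_i^{11})^{-1}L_{uy}^{ij}$, $M_{uy}^{ji}=(p_jX_j^{11})^{-1}L_{uy}^{ji}$ and $M_{uy}^{ii}=(p_iX_i^{11})^{-1}L_{uy}^{ii}$, each using only data local to $\Sigma_i$ or already communicated from $\Sigma_j$.

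The main obstacle is not the matrix algebra but the two structural checks that make the reduction legitimate: confirming that $W=\text{BEW}(\Psi)$ is genuinely a network matrix of $\mathcal{G}_N$ (so Lemma \ref{Lm:MainLemma} applies), and confirming that $\tilde{W}_{ii}$ — and the new variables it constrains — can be assembled from information available at $\Sigma_i$ (its own hook blocks and the $\tilde{W}_{jj}$ received from $j<i$). Together these ensure the sequential test is both decentralized and compositional in the sense of Rm. \ref{Rm:Lm:MainLemma}; the stability conclusion then follows verbatim from the centralized Proposition \ref{Pr:NSC1Synthesis}.
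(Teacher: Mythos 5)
Your proposal is correct and takes essentially the same route as the paper: the paper likewise recasts the centralized LMI of Prop. \ref{Pr:NSC1Synthesis} as $W=\text{BEW}(\Psi)>0$ via Lemmas \ref{Lm:NetworkMatrixProperties} and \ref{Lm:AlternativeLMI_BEW}, notes $W$ is a network matrix, and then invokes Lm. \ref{Lm:MainLemma} to reduce it to the per-subsystem tests $\tilde{W}_{ii}>0$ before reversing the change of variables blockwise. Your extra observation that a Schur-complement step keeps each local problem an LMI in $p_i,\{L_{uy}^i\}$ is consistent with how the paper intends Alg. \ref{Alg:DistributedPositiveDefiniteness} to be used for enforcement.
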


\section{Numerical Results}\label{Sec:NumericalResults}
In this section, we provide a simple numerical example to illustrate the theoretical results. We point out that this numerical example can also be seen as a ``networked'' version of the numerical example considered in \cite{Xia2014}. 

\subsection{Subsystems}

As the set of subsystems $\{\Sigma_i:i\in\N_5\}$ (in the NSCs 1-4), we consider a set of stable, non-passive, single-input-single-output (SISO) \emph{controllers} described by the transfer functions: \begin{equation}\label{Eq:NumResSubsystems}
    \Sigma_i: G_i(s) = \frac{a_is + b_i}{s+c_i}\exp(-d_i s)
\end{equation}
where 
$[a_i]_{i\in\N_5} \equiv [-2,-0.3,-0.2,-1.3,-1.2]$, 
$[b_i]_{i\in\N_5} \equiv [1,16,0.2,0.2,1]$, 
$[c_i]_{i\in\N_5} \equiv [1,9,1,1,3]$ and 
$[d_i]_{i\in\N_5} \equiv [1,1.2,0.8,0.9,1.1]$. Despite the involved delay component in \eqref{Eq:NumResSubsystems}, the $L_2$-gain values of these subsystems can be evaluated as 
$[\gamma^2_i]_{i\in\N_5}=[4.00, 3.16, 0.04, 1.69, 1.44]$. 

As the set of subsystems $\{\bar{\Sigma}_i:i\in\N_5\}$ (in the NSCs 3-4), we consider a set of unstable, non-passive, SISO \textbf{plants} described by the transfer functions:
\begin{equation}\label{Eq:NumResSubsystemsBar}
    \bar{\Sigma}_i: H_i(s) = \frac{a_i s + b_i}{s + c_i}
\end{equation}
where 
$[a_i]_{i\in\N_5} \equiv [1,1.5,2,0.5,3]$, 
$[b_i]_{i\in\N_5} \equiv [2,3,2,1,1.5]]$ and 
$[c_i]_{i\in\N_5} \equiv [-1,-3,-5,-2,-4]$. Due to the linear nature of \eqref{Eq:NumResSubsystemsBar}, output feedback passivity indices of these subsystems can be evaluated directly as $[\bar{\rho}_i]_{i\in\N_5}=[-0.50, -1.00, -2.50, -2.00, -2.67]$. 

It is worth noting that even if the subsystems are unknown and non-linear, there are many on-line as well as off-line techniques to directly obtain their dissipativity properties such as the $L_2$-gain values and the passivity indices (e.g., see \cite{Xia2014,Zakeri2019,WelikalaP42022,Arcak2022}).

\subsection{Network Topology}

Note that the known subsystem dissipativity properties (evaluated L2G and OFP indices given above) satisfy the As. \ref{As:PositiveDissipativity}, to synthesize the interconnection matrix $M$ in the NSCs 1-4, we can respectively use the LMIs given in Props. \ref{Pr:NSC1Synthesis}-\ref{Pr:NSC4Synthesis}. 

In particular, to guide the synthesis of the interconnection matrix $M$, we assume an underlying undirected \emph{graph topology} described respectively by the adjacency matrix $A=[A_{ij}]_{i,j\in\N_5}$ and a corresponding cost matrix $C=[C_{ij}]_{i,j\in\N_5}$:
\begin{equation}\label{Eq:NumResGraphTopology}
    A = \scriptsize \bm{0 & 1 & 1 & 0 & 0 \\ 1 & 0 & 1 & 0 & 0 \\ 1 & 1 & 0 & 1 & 1 \\ 0 & 0 & 1 & 0 & 0 \\ 0 & 0 & 1 & 0 & 0},\ \ \normalsize
    C= \scriptsize \bm{100 & 1 & 1 & 10 & 10 \\ 1 & 100 & 1 & 10 & 10 \\ 1 & 1 & 100 & 1 & 1 \\ 10 & 10 & 1 & 100 & 10 \\ 10 & 10 & 1 & 10 & 100}\normalsize.
\end{equation}
Note that $A_{ij}, i,j\in\N_5$ in \eqref{Eq:NumResGraphTopology} indicates whether an interconnection between subsystems $\Sigma_i$ and $\Sigma_j$ is allowed, while $C_{ij}$ in \eqref{Eq:NumResGraphTopology} represents the cost of such an interconnection.  

Now, when synthesizing the interconnection matrix $M$ (i.e., solving the LMIs given in Props. \ref{Pr:NSC1Synthesis}, \ref{Pr:NSC2Synthesis}, \ref{Pr:NSC3Synthesis} and \ref{Pr:NSC4Synthesis}), we have two options: (1) use the adjacency matrix $A$ \eqref{Eq:NumResGraphTopology} to impose a \emph{hard graph constraint} that completely restricts the usage of certain elements in $M$, or (2) use the cost matrix $C$ \eqref{Eq:NumResGraphTopology} to impose a \emph{soft graph constraint} that penalizes the usage of certain elements in $M$. Note that these hard and soft graph constraints can simply be included in the interested LMIs problems by: (1) introducing constraints of the form $M_{uy}^{ij}=0$ $\forall i,j\in\N_5$ such that $A_{ij}=0$, or (2) using a cost function of the form $\sum_{i,j\in\N_5}\vert  C_{ij}M_{uy}^{ij}\vert^2$, respectively.

\subsection{Observations}

\subsubsection{\textbf{NSC 1}}
First, to validate the Prop. \ref{Pr:NSC1Synthesis} and to illustrate the difference between the aforementioned hard and soft graph constraints, let us consider an example networked system of the form NSC 1 comprised of the subsystems in \eqref{Eq:NumResSubsystems} interconnected via an interconnection matrix $M_{uy}$ \eqref{Eq:NSC1Interconnection}. Figure \ref{Fig:NSC1Simulink} shows the Simulink implementation of this networked system. Note that, to perturb the initial state of the subsystems, we use a step signal as shown in Fig. \ref{Fig:NSC1w}. As shown in Fig. \ref{Fig:NSC1Case1y}, when $M_{uy}=\I$ is used (i.e., unit self loops), the networked system becomes unstable. This motivates the need to synthesize a stabilizing $M_{uy}$ using Prop. \ref{Pr:NSC1Synthesis}. Figures \ref{Fig:NSC1Case3y} and \ref{Fig:NSC1Case4y} show the observed output signals when the optimal $M_{uy}$ values:  
\begin{equation*}
    M_{uy} = 10^{-1} \times \scriptsize \bm{0 & 0.497 & 1.226 & 0 & 0 \\ 0.408 & 0 & 1.059 & 0 & 0 \\ -0.229 & -0.099 & 0 & 0.318 & 0.508 \\ 0 & 0 & 0.902 & 0 & 0 \\ 0 & 0 & 0.924 & 0 & 0}
\end{equation*}
and 
\begin{equation*}
    M_{uy} = \scriptsize \bm{0 & 0.099 & 0.043 & 0 & 0 \\ 0.110 & 0 & 0.048 & 0 & 0 \\ 0.967 & 0.966 & 0 & 1.510 & 1.510 \\ 0 & 0 & 0.094 & 0 & 0.011 \\ 0 & 0 & 0.104 & 0.012 & 0}
\end{equation*}
obtained under the hard and soft graph constraint are used, respectively. It is worth noting that when the soft graph constraint is used, an extra interconnection between the subsystems $\Sigma_5$ and $\Sigma_4$ is being used. In both cases, the synthesized $M_{uy}$ successfully stabilizes the networked system. 

Note that, in the remaining numerical examples, we have used hard graph constraints to restrict new interconnections and soft graph constraints to penalize self connections.

\begin{figure}[!hb]
    \centering
    \begin{subfigure}{0.48\textwidth}
    \centering
        \includegraphics[width=0.6\linewidth]{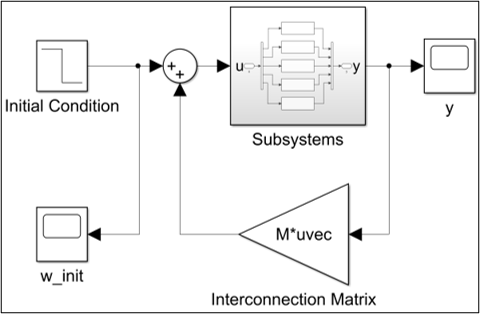}
        \caption{NSC 1: Simulink implementation using the subsystems \eqref{Eq:NumResSubsystems}.}
        \label{Fig:NSC1Simulink}
    \end{subfigure}\\
    \begin{subfigure}{0.23\textwidth}
        \includegraphics[width=\linewidth]{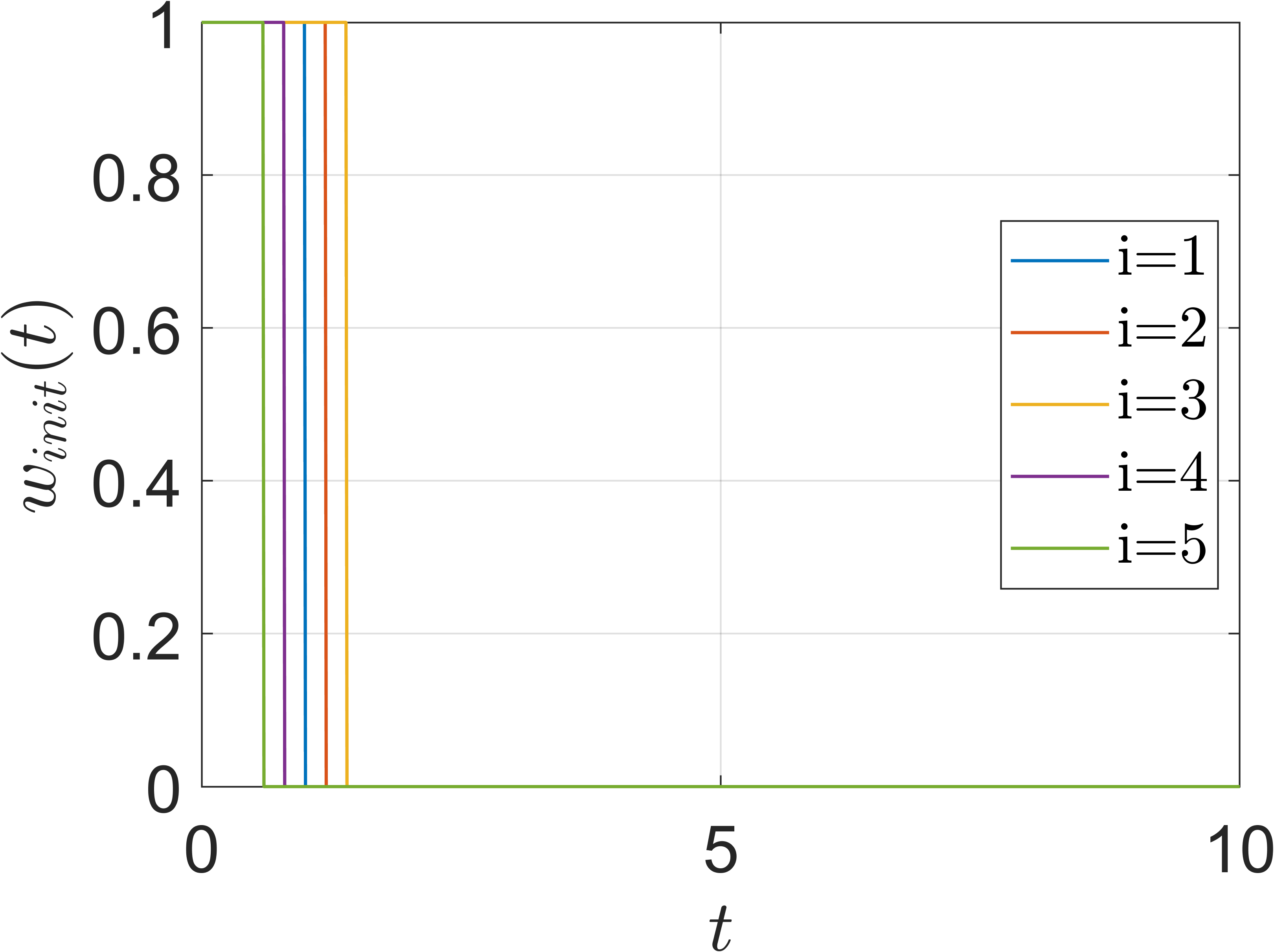}
        \caption{Initial excitation signal.}
        \label{Fig:NSC1w}
    \end{subfigure}
    \hfill
    \begin{subfigure}{0.23\textwidth}
        \includegraphics[width=\linewidth]{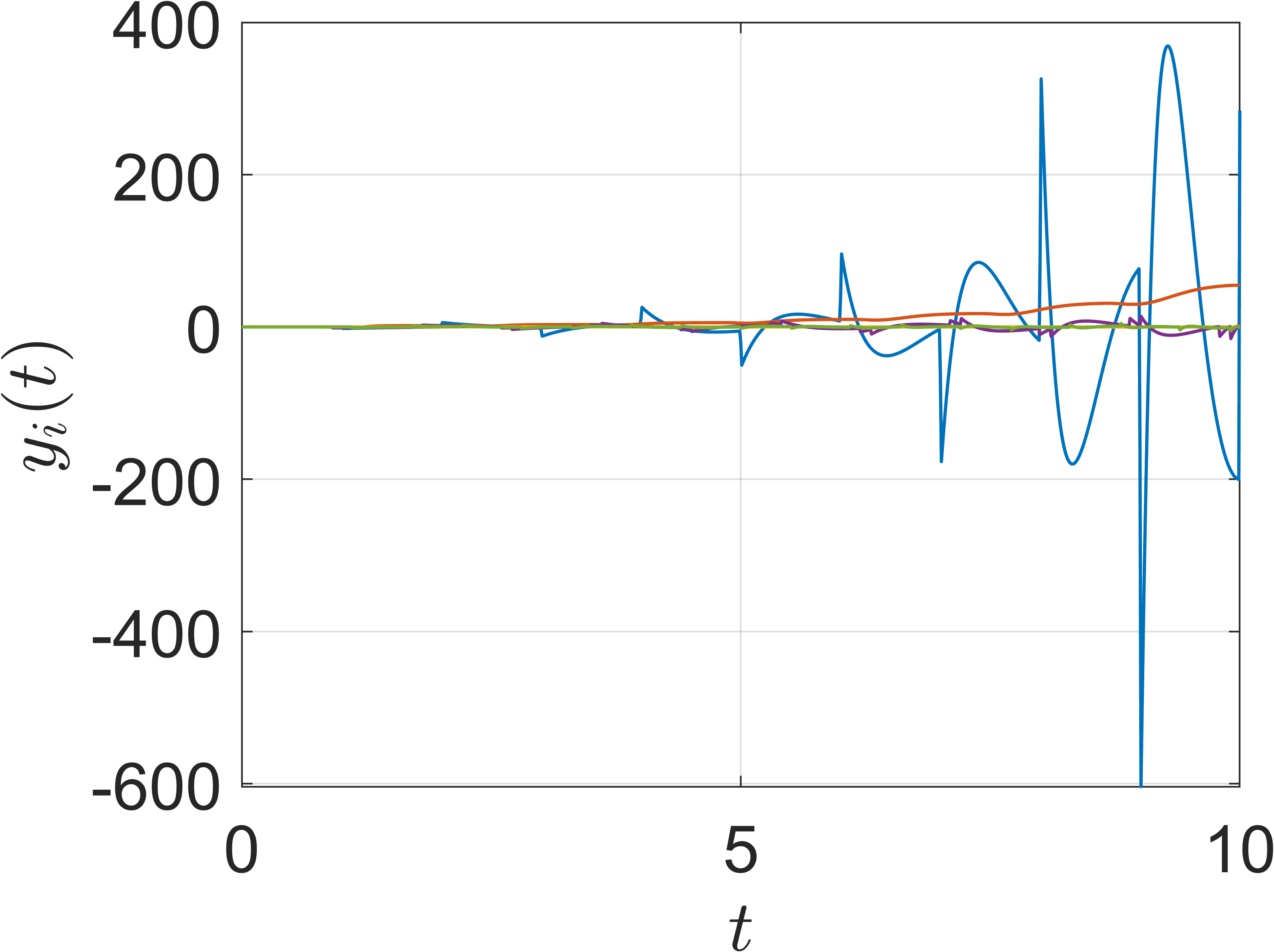}
        \caption{Output $y(t)$ with $M=\I$.}
        \label{Fig:NSC1Case1y}
    \end{subfigure}
    \begin{subfigure}{0.23\textwidth}
        \includegraphics[width=\linewidth]{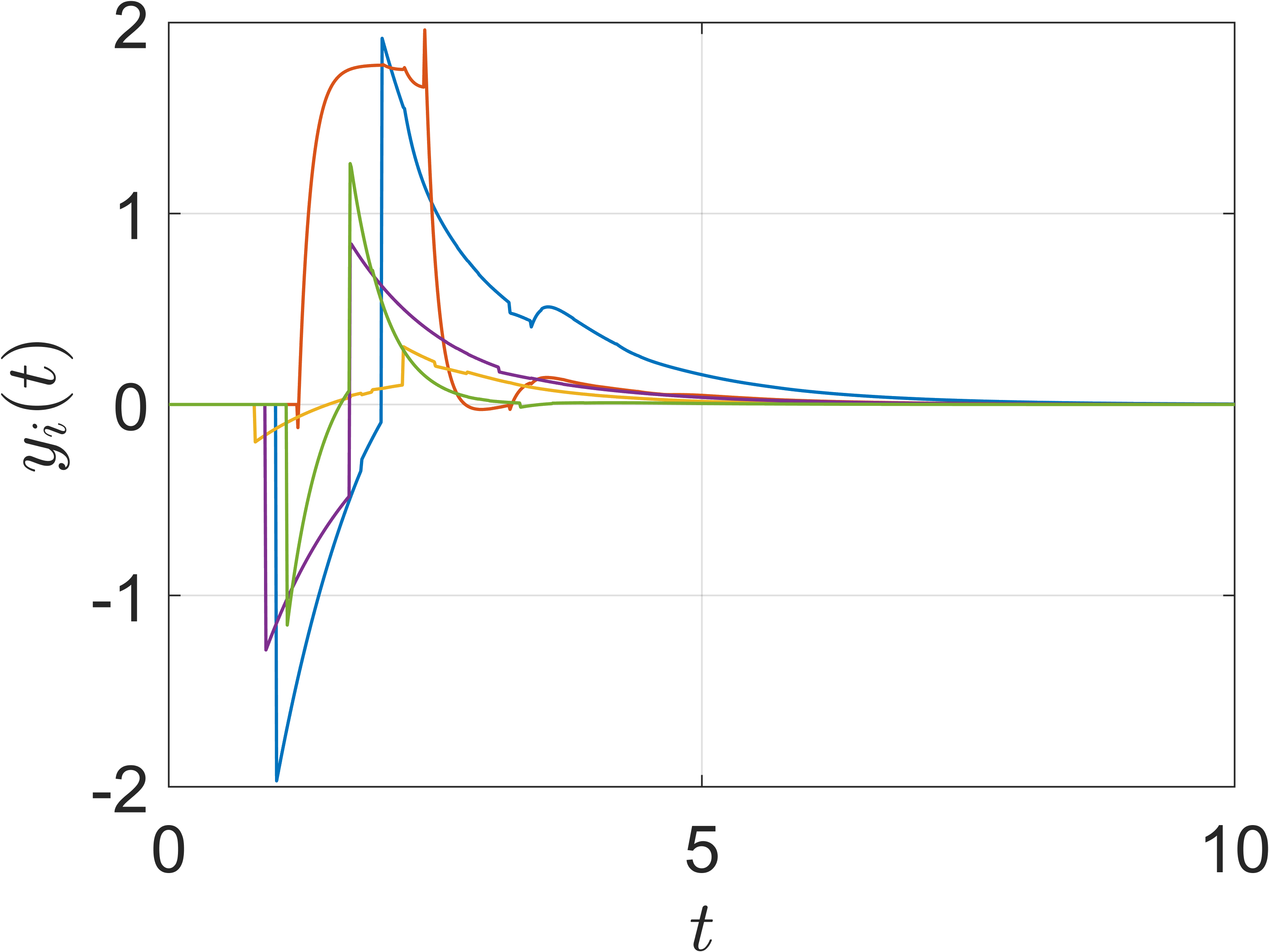}
        \caption{Output $y(t)$ with optimal $M$ under a hard graph constraint.}
        \label{Fig:NSC1Case3y}
    \end{subfigure}
    \hfill
    \begin{subfigure}{0.23\textwidth}
        \includegraphics[width=\linewidth]{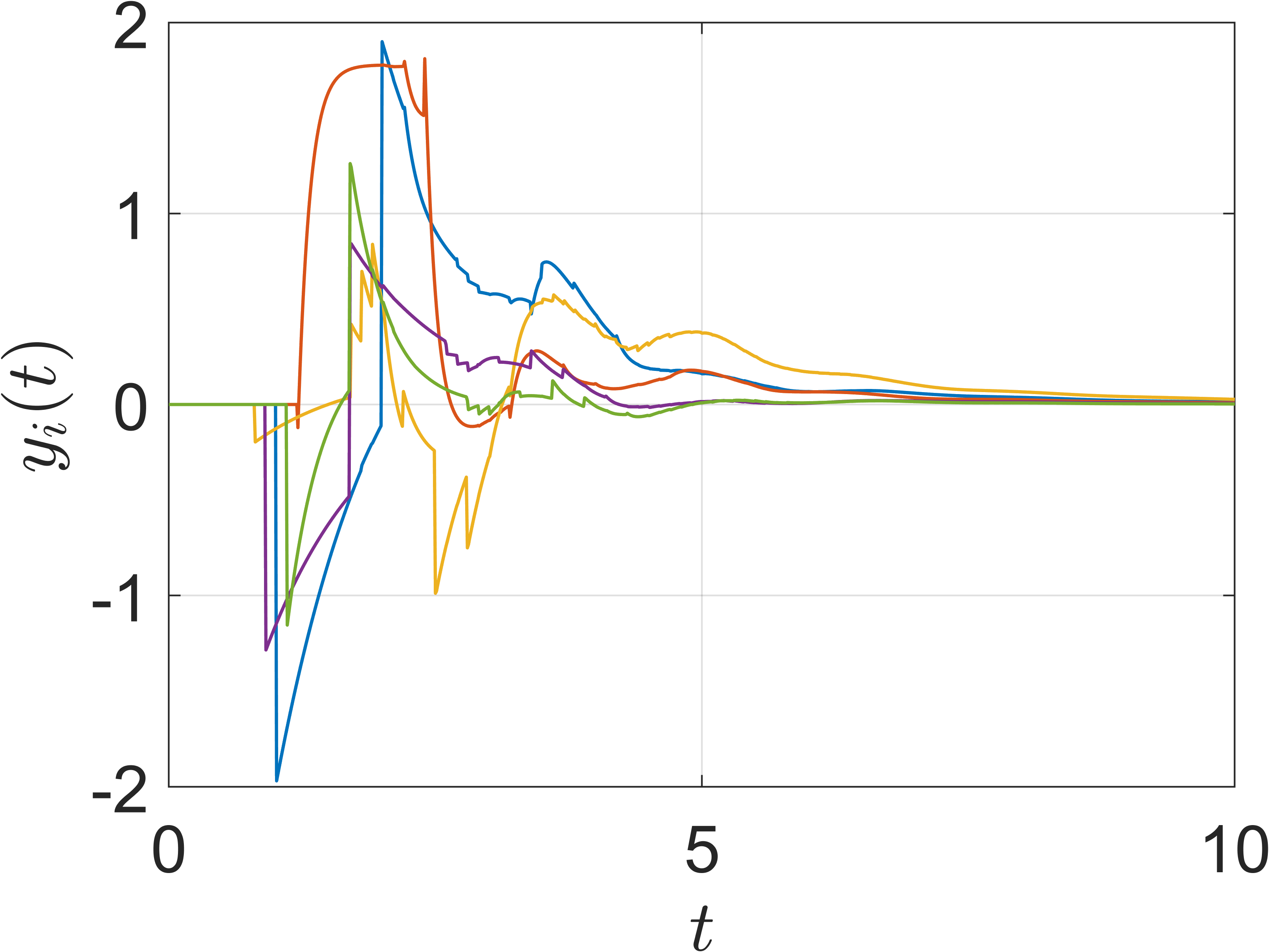}
        \caption{Output $y(t)$ with optimal $M$ under a soft graph constraint.}
        \label{Fig:NSC1Case4y}
    \end{subfigure}
    \caption{\textbf{NSC 1:} 
    \textbf{(a)} An example NSC 1 implemented in Simulink by interconnecting the controllers (subsystems) $\{\Sigma_i:i\in\N_5\}$ in \eqref{Eq:NumResSubsystems} via an interconnection matrix $M\in\R^{5\times 5}$.
    \textbf{(b)} The excitation signal used to perturb the subsystem initial conditions. 
    The observed output signals under: \textbf{(c)} $M=\I$, \textbf{(d)} optimally synthesized $M$ subject to a hard graph constraint, and \textbf{(e)} optimally synthesized $M$ subject to soft graph constraint.}
    \label{Fig:NSC1}
\end{figure}

\subsubsection{\textbf{NSC 2}}
To validate the Prop. \ref{Pr:NSC2Synthesis}, we next use the previous networked system (shown in Fig. \ref{Fig:NSC1Simulink}) with an added input port ($w$) and an output port ($z$) - making a networked system of the form NSC 2 shown in Fig. \ref{Fig:NSC2Simulink}. This networked system, as shown in Fig. \ref{Fig:NSC2Case1w}, is unstable under the interconnection matrix choice $M= \scriptsize \bm{\I & \I \\ \I & \I}$ \eqref{Eq:NSC2Interconnection}. Note also that it is desirable to make this networked system passive as it then can be used to control another non-passive networked system. This motivates the need to synthesize $M$ such that this networked system (shown in Fig. \ref{Fig:NSC2Simulink}) is maximally passive (from $w$ to $z$). For this purpose, we used Prop. \ref{Pr:NSC2Synthesis} (see also Rm. \ref{Rm:EstimationOFPassivityIndices}) and synthesized $M$ that optimized the input feedforward and output feedback passivity indices (i.e., $\nu$ and $\rho$, respectively) of the networked system. The obtained optimal passivity indices are as follows:
\begin{equation}
    \nu^* = 0 \ \ \mbox{ and } \ \ \rho^*=4.78.
\end{equation}
The output trajectories of the passivated networked system are shown in Figs. \ref{Fig:NSC2Case2y} and \ref{Fig:NSC2Case2z}. 

To further verify the obtained passivity measures of the networked system, we next consider the simple example scenario shown in Fig. \ref{Fig:NSC2ApplicationSimulink}. In there, the passivated networked system (obtained above) is connected in feedback with a memoryless system $K_{sys}\I$ where $K_{sys}\in\R$. According to \cite[Th. 4]{Xia2014}, the composite system is passive if $K_{sys}+\rho^*>0$, or non-passive otherwise. Figures \ref{Fig:NSC2Case3z} and \ref{Fig:NSC2Case4z} show output trajectories observed when $K_{sys}$ is selected such that $K_{sys}+\rho^* = 1$ (hence passive) and $K_{sys}+\rho^* = -1$ (hence non-passive), respectively. Their opposite nature implies the change of the passivity property of the composite system.

\begin{figure}[!t]
    \centering
    \begin{subfigure}{0.48\textwidth}
    \centering
        \includegraphics[width=0.75\linewidth]{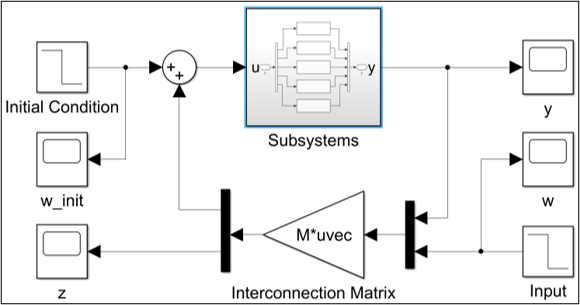}
        \caption{NSC 2: Simulink implementation using the subsystems \eqref{Eq:NumResSubsystems}.}
        \label{Fig:NSC2Simulink}
    \end{subfigure}\\
    \begin{subfigure}{0.23\textwidth}
        \includegraphics[width=\linewidth]{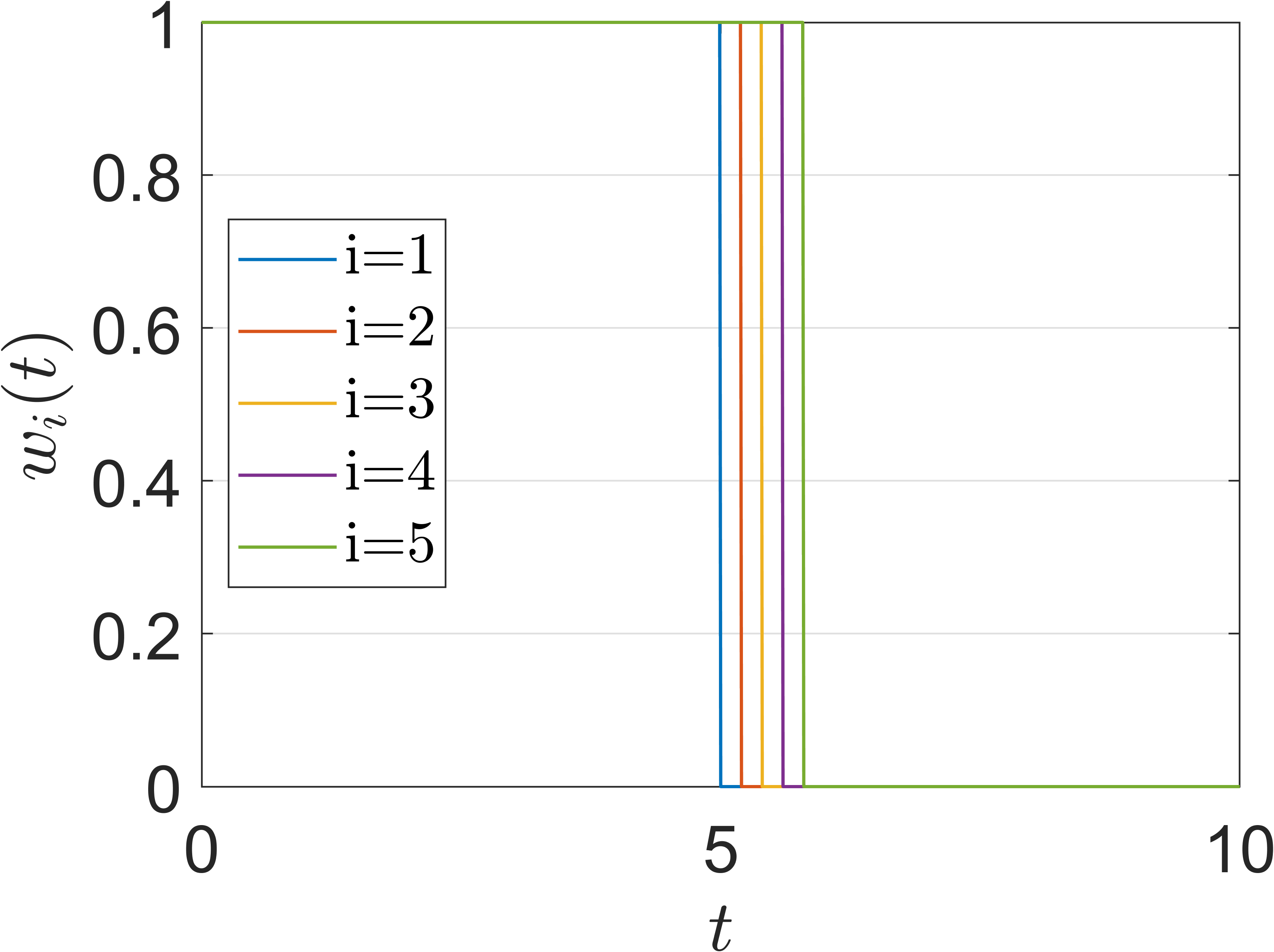}
        \caption{Exogenous input $w(t)$.}
        \label{Fig:NSC2Case1w}
    \end{subfigure}
    \hfill
    \begin{subfigure}{0.23\textwidth}
        \includegraphics[width=\linewidth]{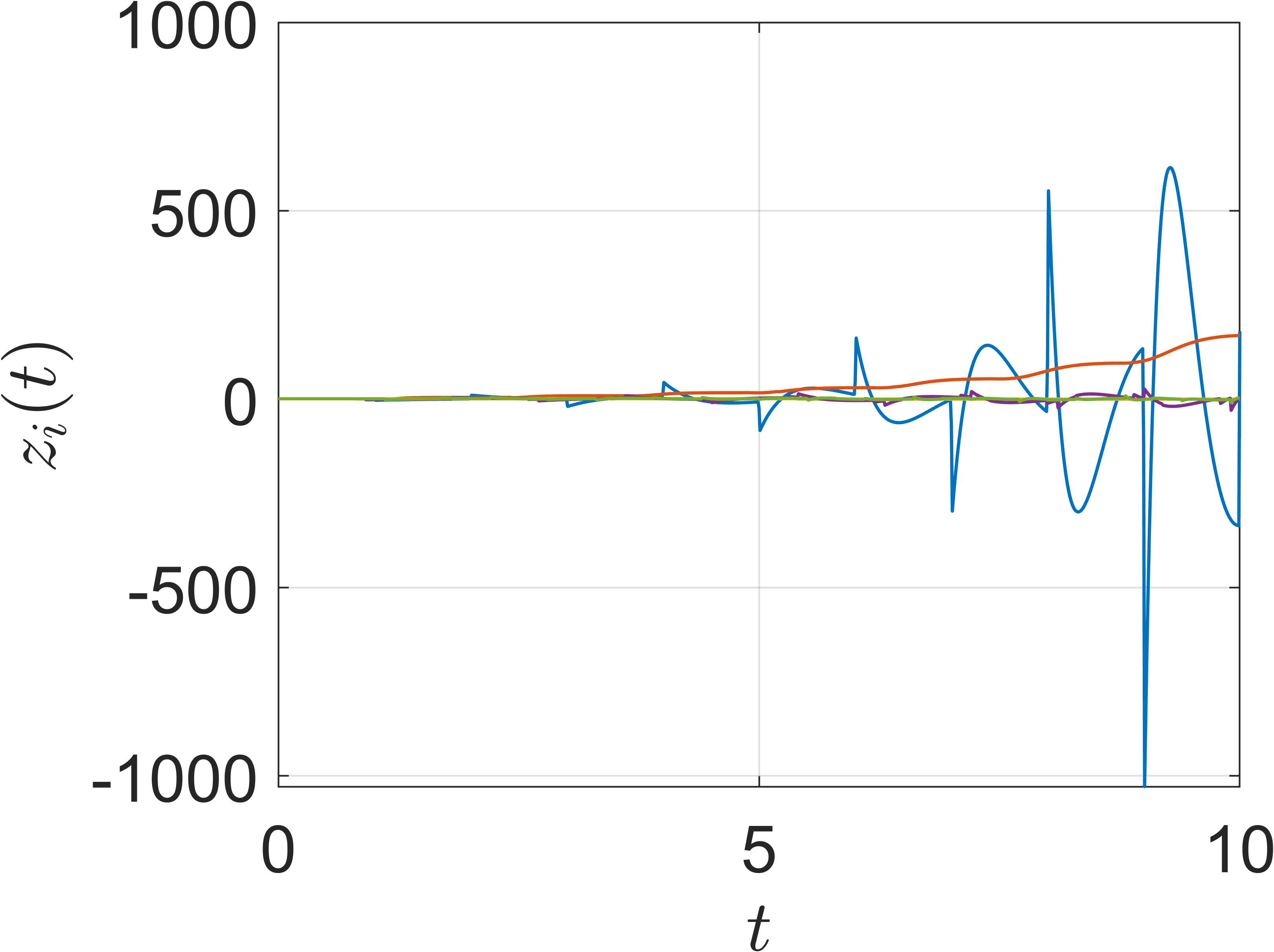}
        \caption{Output with $M= \scriptsize \bm{\I & \I \\ \I & \I}$.}
        \label{Fig:NSC2Case1z}
    \end{subfigure}
    \begin{subfigure}{0.23\textwidth}
        \includegraphics[width=\linewidth]{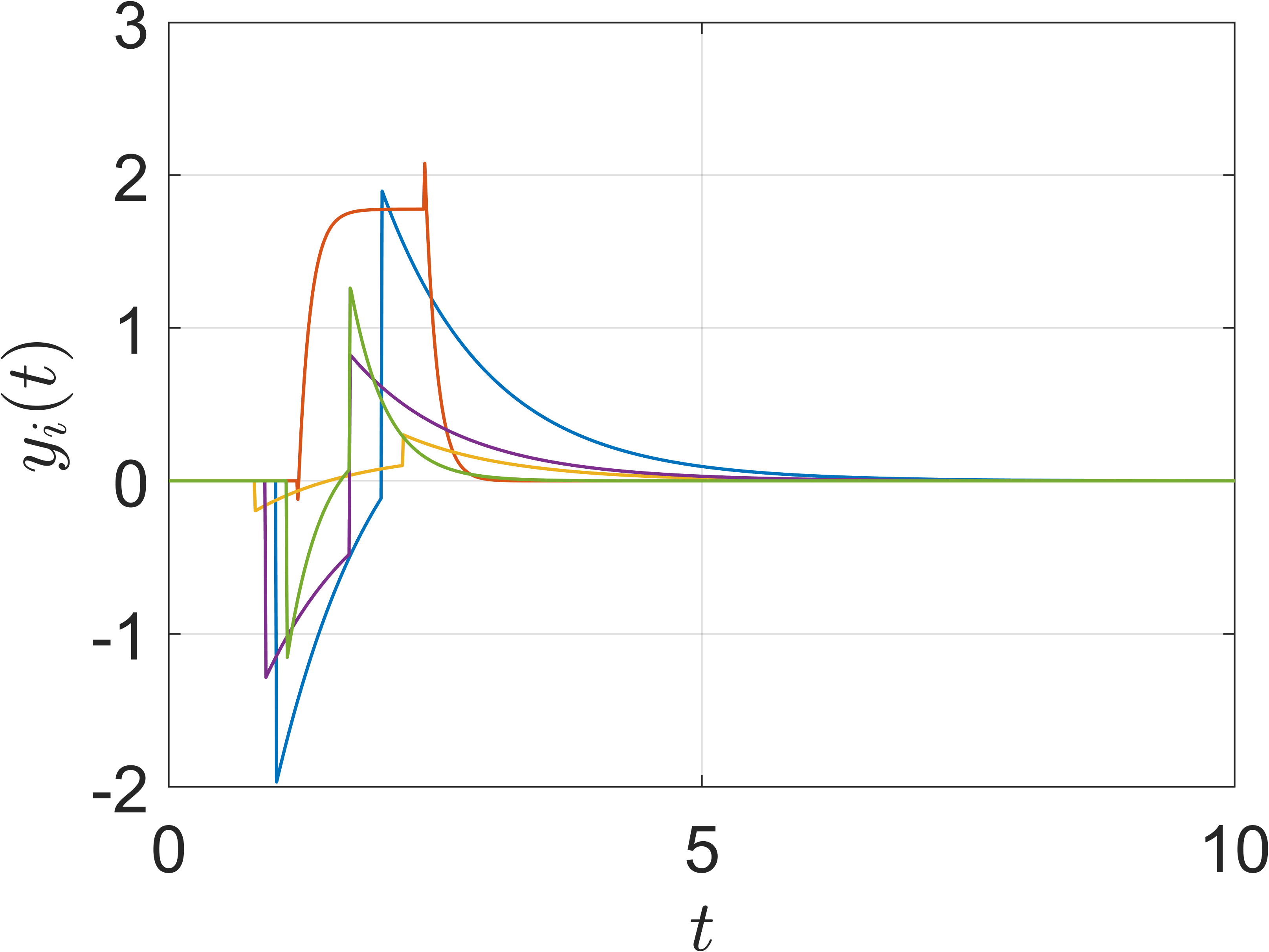}
        \caption{Subsystem outputs with optimal $M$.}
        \label{Fig:NSC2Case2y}
    \end{subfigure}
    \hfill
    \begin{subfigure}{0.23\textwidth}
        \includegraphics[width=\linewidth]{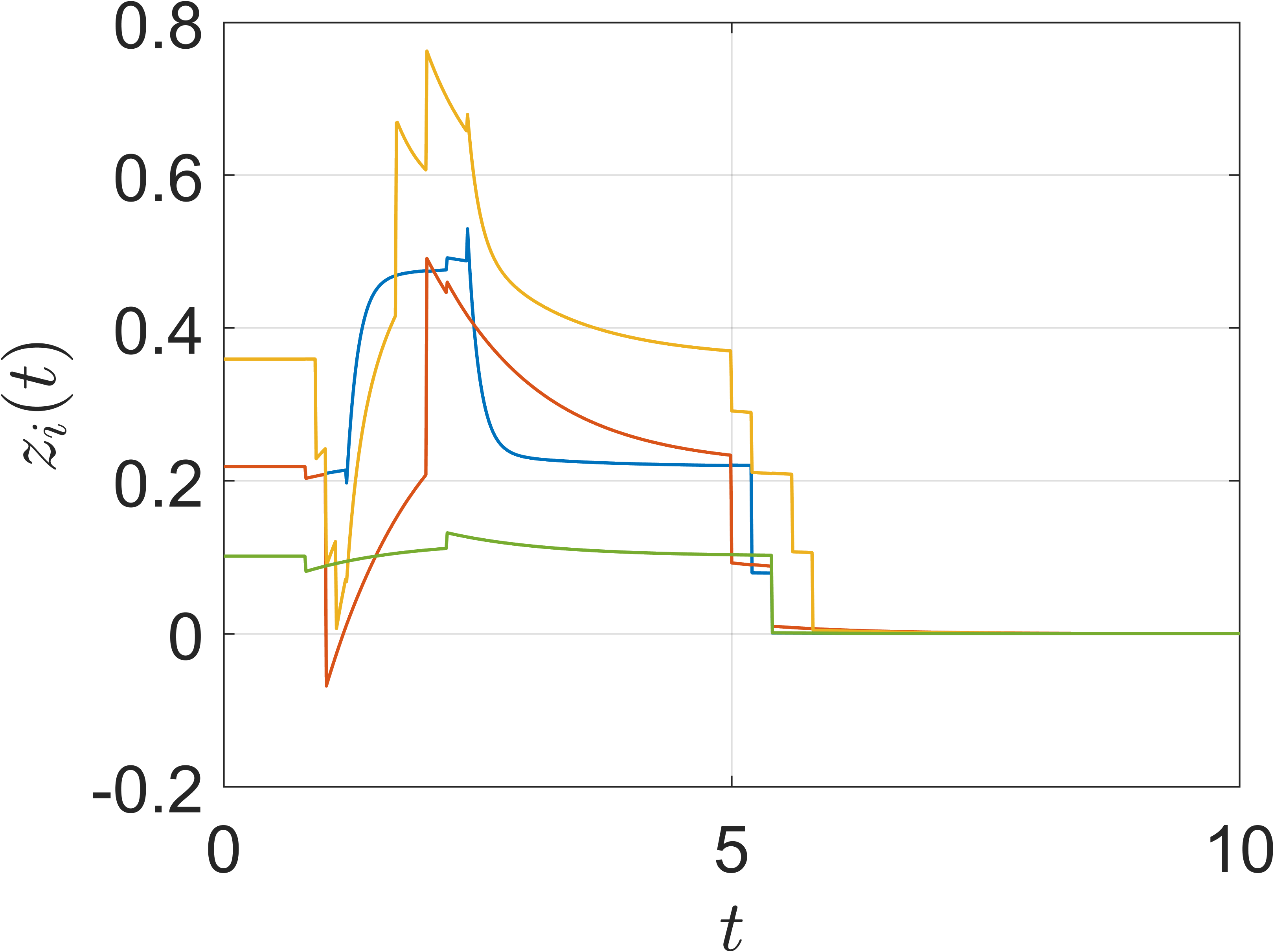}
        \caption{Output with optimal $M$.}
        \label{Fig:NSC2Case2z}
    \end{subfigure}
    \caption{\textbf{NSC 2:} \textbf{(a)} An example NSC 2 implemented in Simulink using the controllers \eqref{Eq:NumResSubsystems} and an interconnection matrix $M\in\R^{10 \times 10}$. \textbf{(b)} Used exogenous input signal (in addition to the initial excitation signal shown in Fig. \ref{Fig:NSC1w}). \textbf{(c)} Observed output under $M= \scriptsize \bm{\I & \I \\ \I & \I}$. Observed: \textbf{(d)} subsystem outputs and \textbf{(e)} the networked system output under optimally synthesized $M$.}
    \label{Fig:NSC2}
\end{figure}

\begin{figure}[!t]
    \centering
    \begin{subfigure}{0.48\textwidth}
    \centering
        \includegraphics[width=0.75\linewidth]{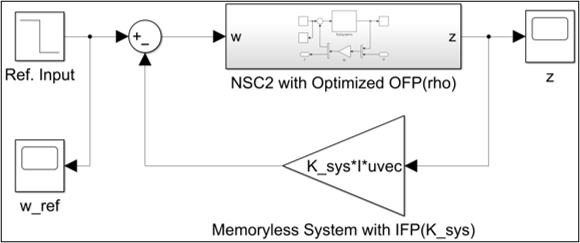}
        \caption{NSC 2 in Fig. \ref{Fig:NSC2} with a feedback connection.}
        \label{Fig:NSC2ApplicationSimulink}
    \end{subfigure}\\
    \begin{subfigure}{0.23\textwidth}
        \includegraphics[width=\linewidth]{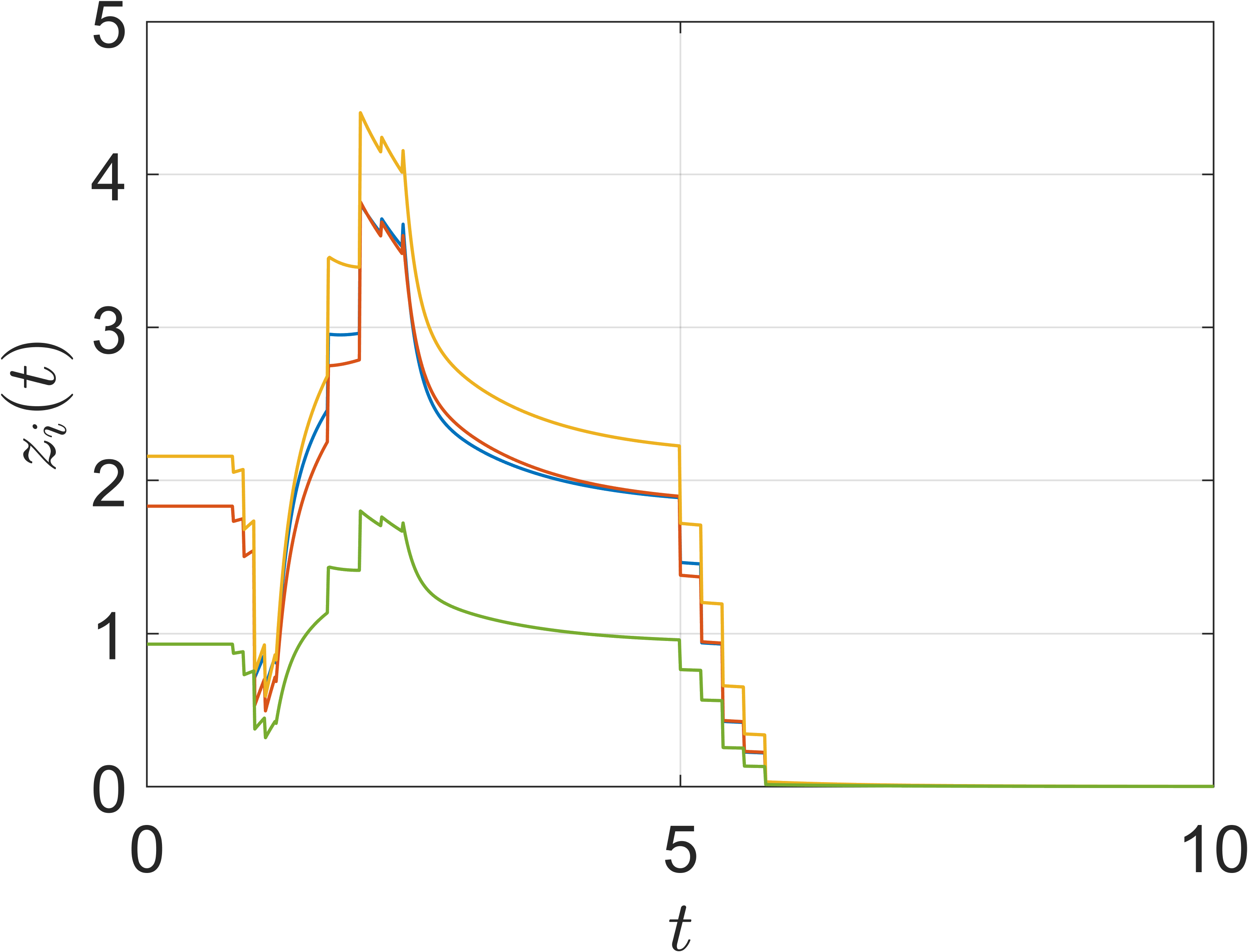}
        \caption{Output $z(t)$ with $K_{sys} = - \rho^* + 1$ (passive).}
        \label{Fig:NSC2Case3z}
    \end{subfigure}
    \hfill 
    \begin{subfigure}{0.23\textwidth}
        \includegraphics[width=\linewidth]{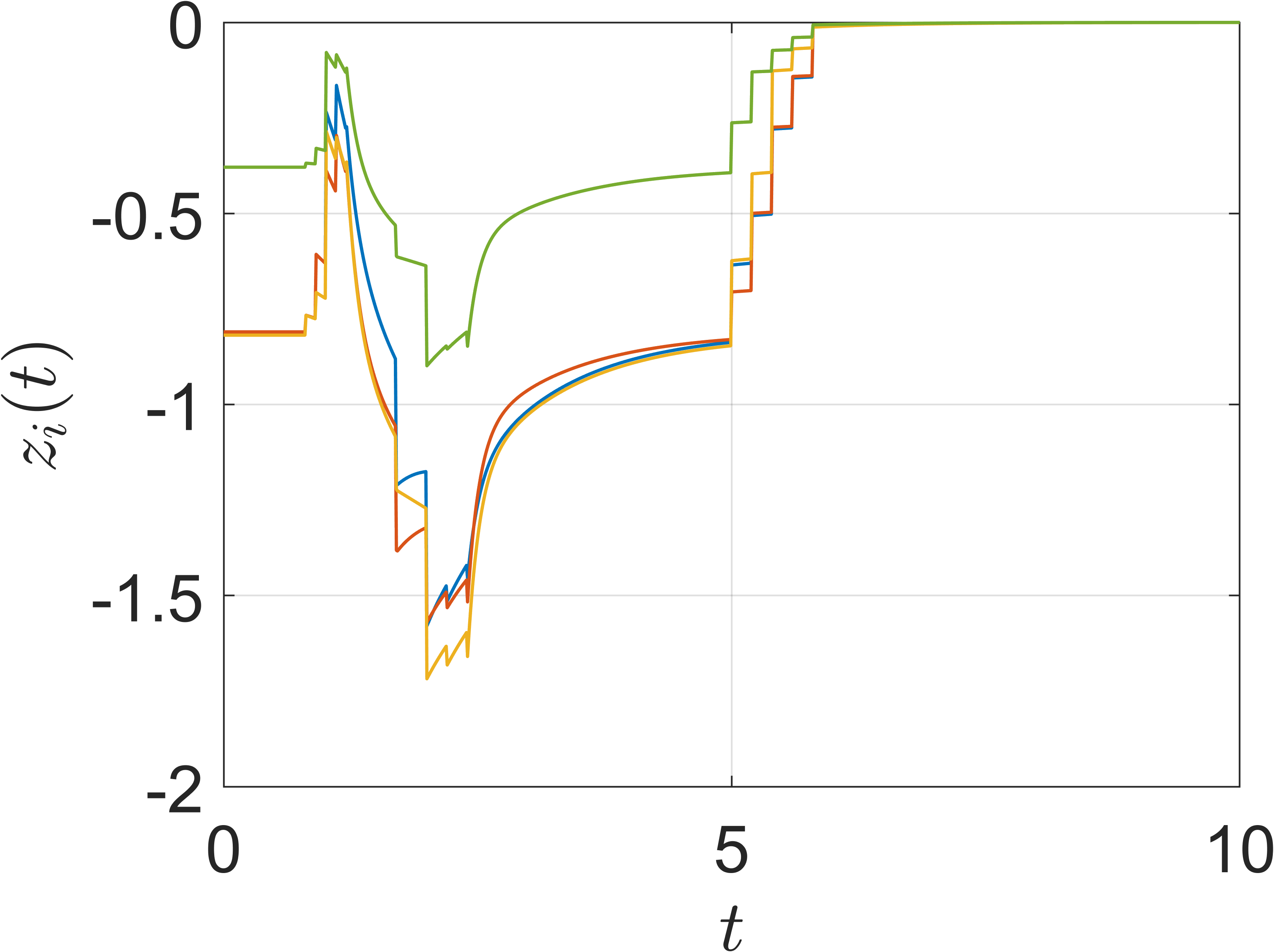}
        \caption{Output $z(t)$ with $K_{sys} = - \rho^* - 1$ (non-passive).}
        \label{Fig:NSC2Case4z}
    \end{subfigure}
    \caption{\textbf{NSC 2 Application:} A verification of the obtained passivity properties of the networked system.}
    \label{Fig:NSC2Application}
\end{figure}

\begin{figure}[!t]
    \centering
    \begin{subfigure}{0.48\textwidth}
    \centering
        \includegraphics[width=0.75\linewidth]{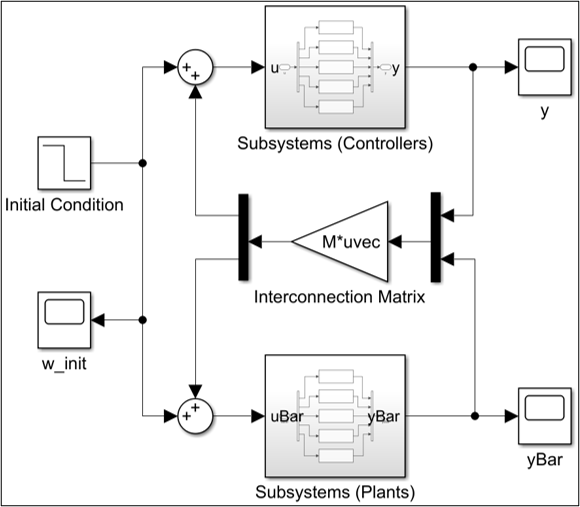}
        \caption{NSC 3: Simulink implementation using the subsystems \eqref{Eq:NumResSubsystems} (as controllers) and \ref{Eq:NumResSubsystemsBar} (as plants).}
        \label{Fig:NSC3Simulink}
    \end{subfigure}\\
    \begin{subfigure}{0.23\textwidth}
        \includegraphics[width=\linewidth]{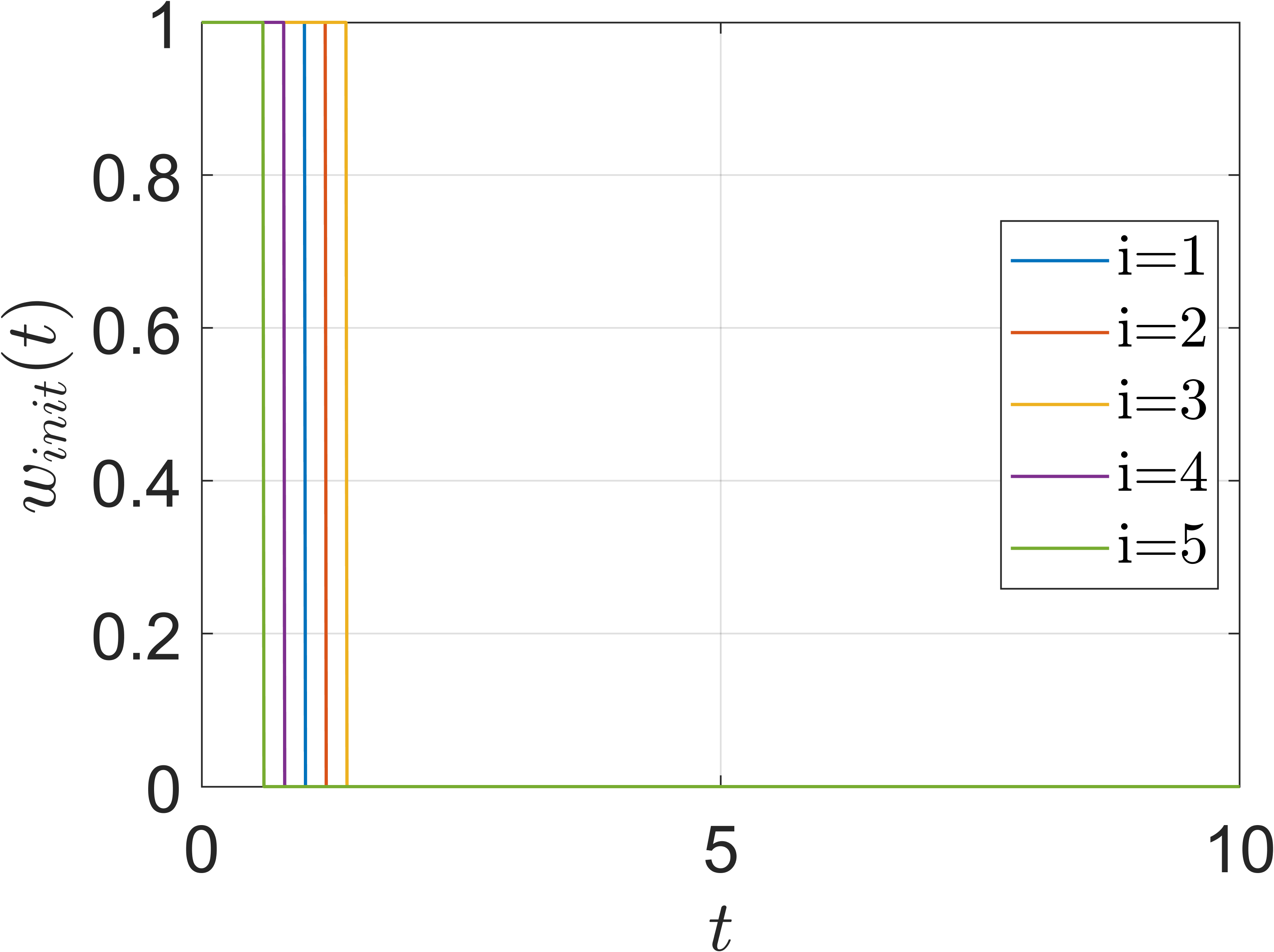}
        \caption{Initial excitation signal.}
        \label{Fig:NSC3w}
    \end{subfigure}
    \begin{subfigure}{0.23\textwidth}
        \includegraphics[width=\linewidth]{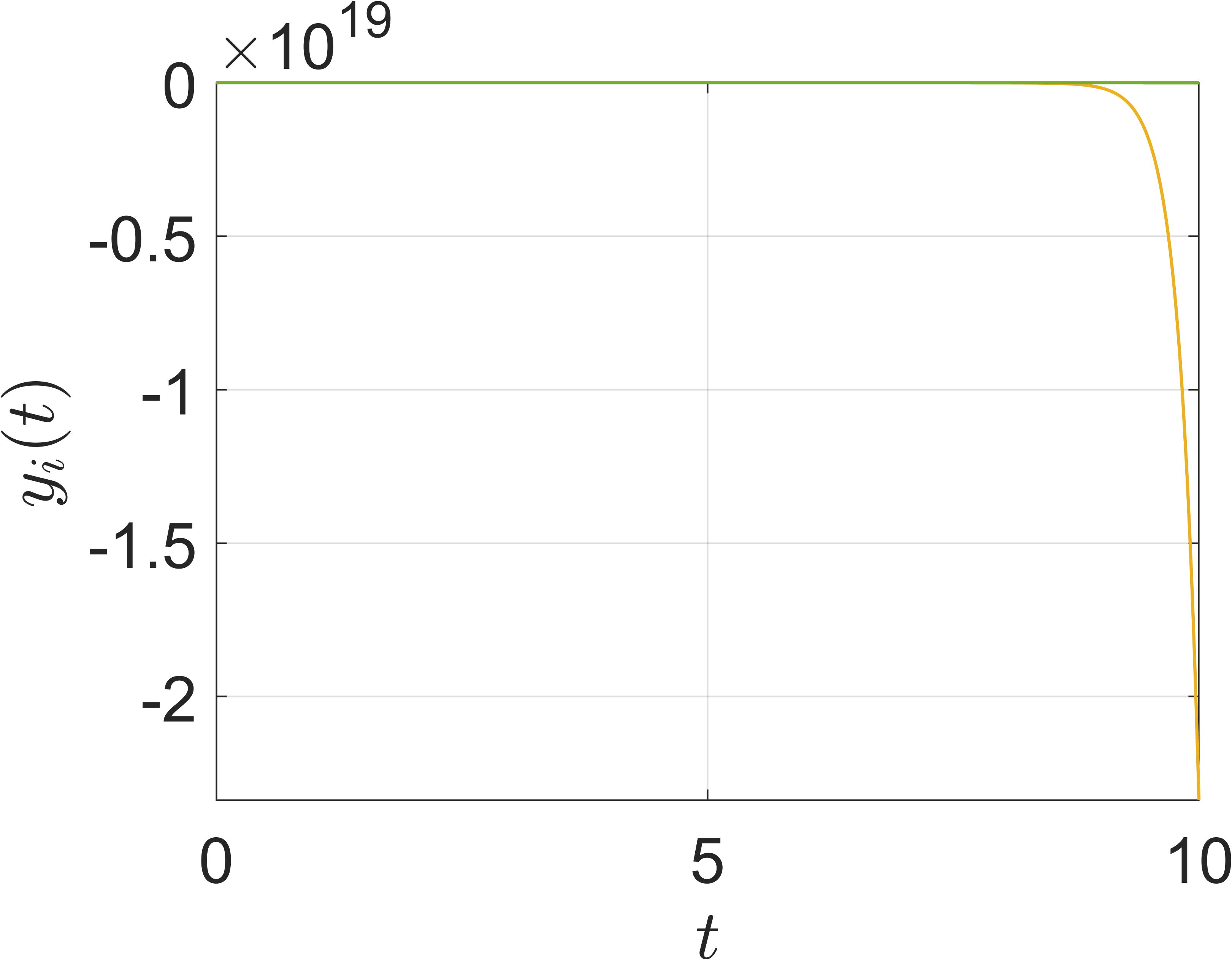}
        \caption{Output with $M=\scriptsize\bm{\0 & \I \\ \I & \0}$.}
        \label{Fig:NSC3Case1y}
    \end{subfigure}
    \begin{subfigure}{0.23\textwidth}
        \includegraphics[width=\linewidth]{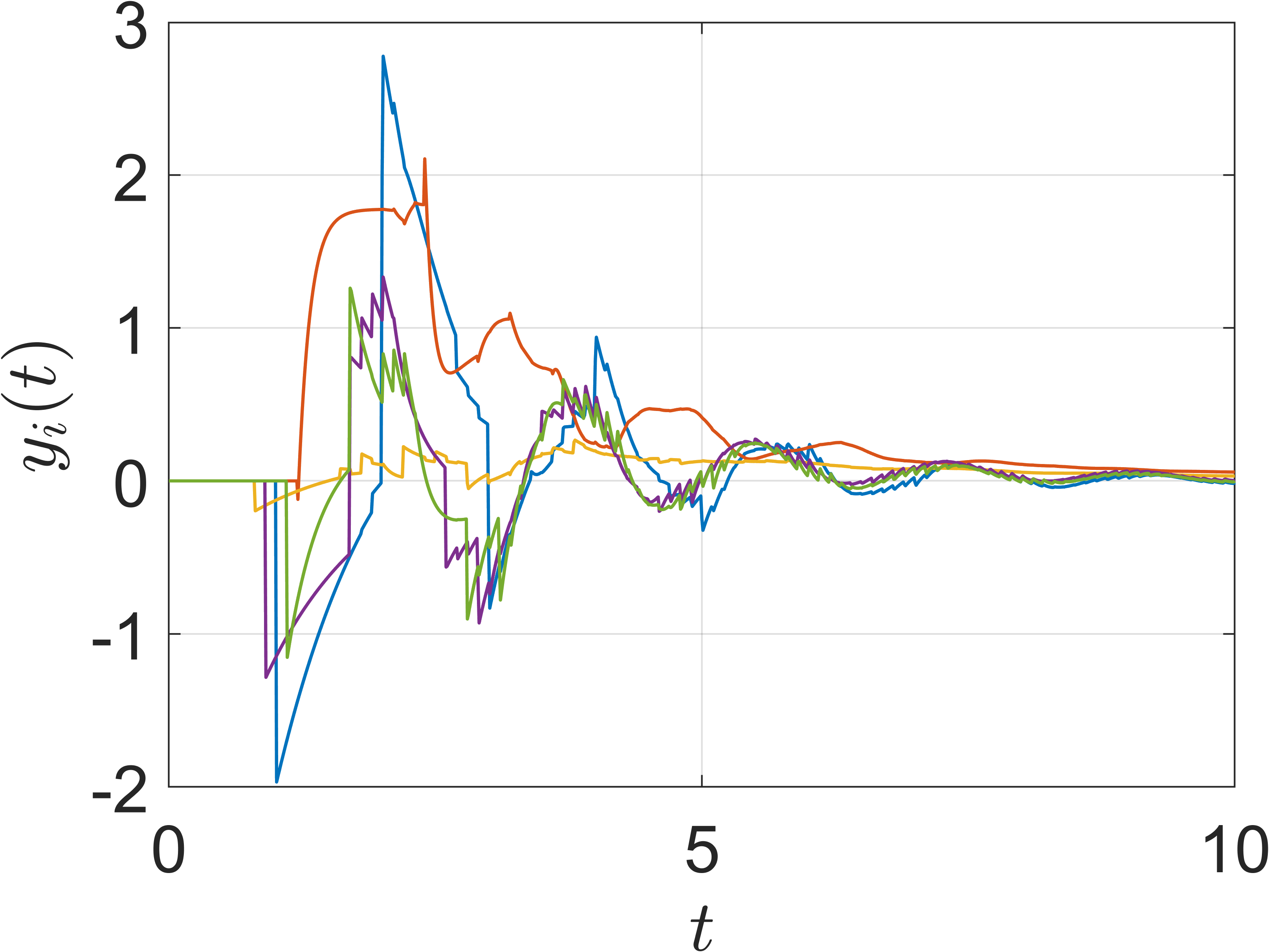}
        \caption{Output $y$ with optimal $M$.}
        \label{Fig:NSC3Case2y}
    \end{subfigure}
    \begin{subfigure}{0.23\textwidth}
        \includegraphics[width=\linewidth]{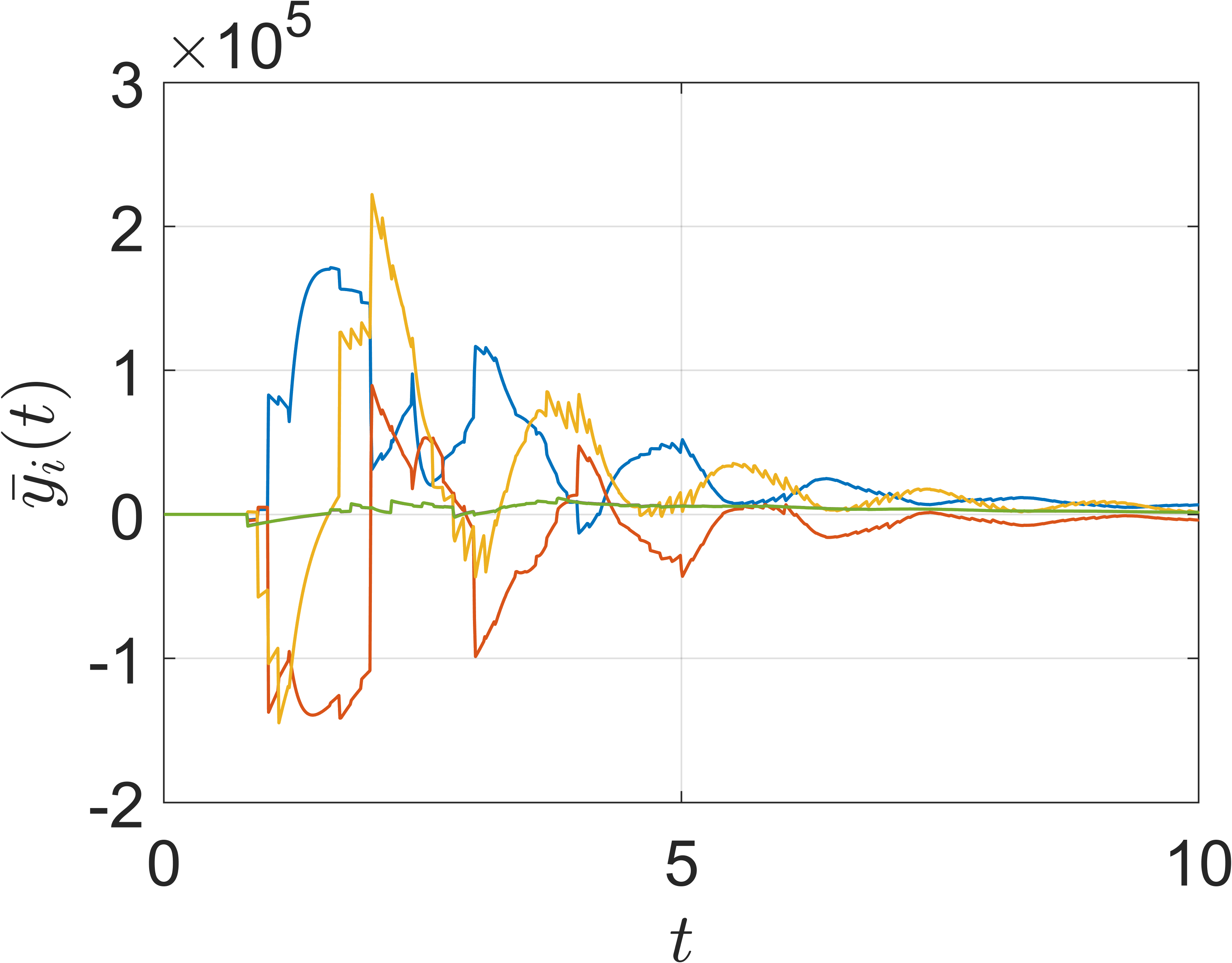}
        \caption{Output $\bar{y}$ with optimal $M$.}
        \label{Fig:NSC3Case2yBar}
    \end{subfigure}
    \caption{\textbf{NSC 3:} \textbf{(a)} An example NSC 3 implemented in Simulink by interconnecting the controllers \eqref{Eq:NumResSubsystems} and the plants \eqref{Eq:NumResSubsystemsBar} via an interconnection matrix $M\in\R^{10 \times 10}$ \eqref{Eq:NSC3Interconnection}. \textbf{(b)} The excitation signal used to purturb the subsystem initial conditions. \textbf{(c)} Observed controller output signal $y(t)$ under $M=\scriptsize\bm{\0 & \I \\ \I & \0}$. Observed: \textbf{(d)} controller and \textbf{(e)} plant output signals under optimally synthesized $M$.
    }
    \label{Fig:NSC3}
\end{figure}

\begin{figure}[!t]
    \centering
    \begin{subfigure}{0.48\textwidth}
    \centering
        \includegraphics[width=0.75\linewidth]{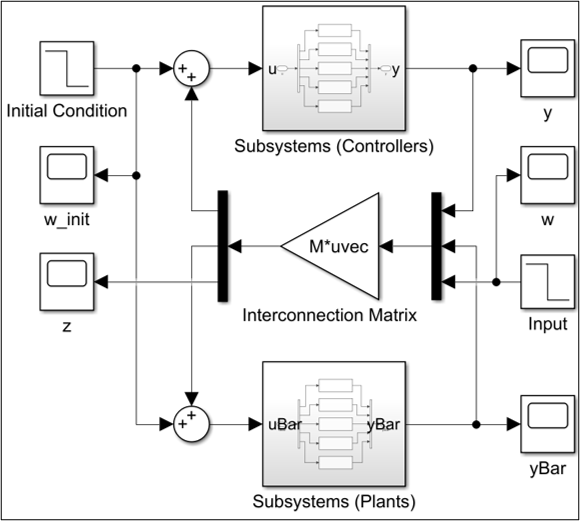}
        \caption{NSC 4: Simulink implementation using the subsystems \eqref{Eq:NumResSubsystems} (as controllers) and \eqref{Eq:NumResSubsystemsBar} (as plants).}
        \label{Fig:NSC4Simulink}
    \end{subfigure}\\
    \begin{subfigure}{0.23\textwidth}
        \includegraphics[width=\linewidth]{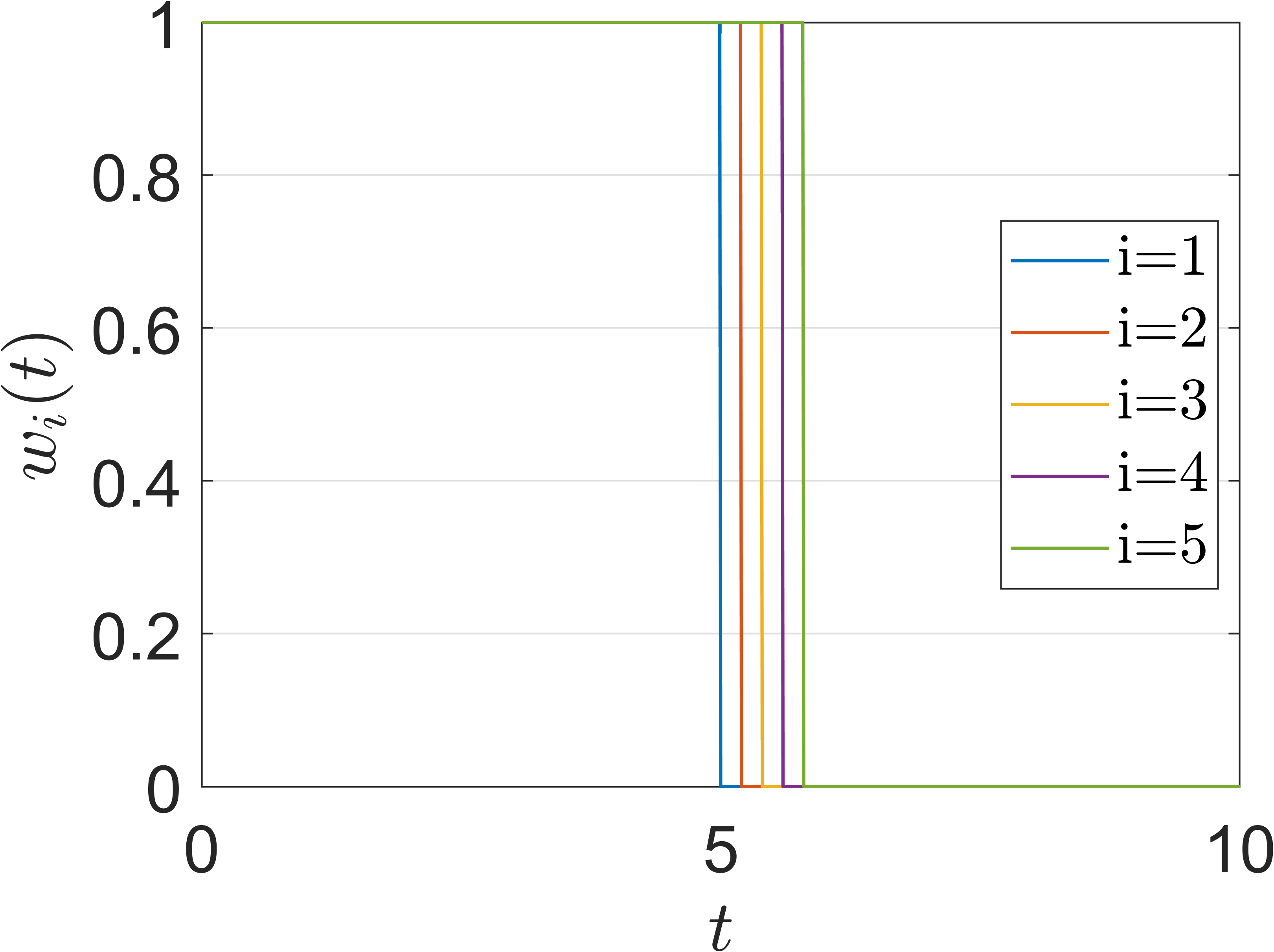}
        \caption{Exogenous input $w(t)$.}
        \label{Fig:NSC4w}
    \end{subfigure}
    \begin{subfigure}{0.23\textwidth}
        \includegraphics[width=\linewidth]{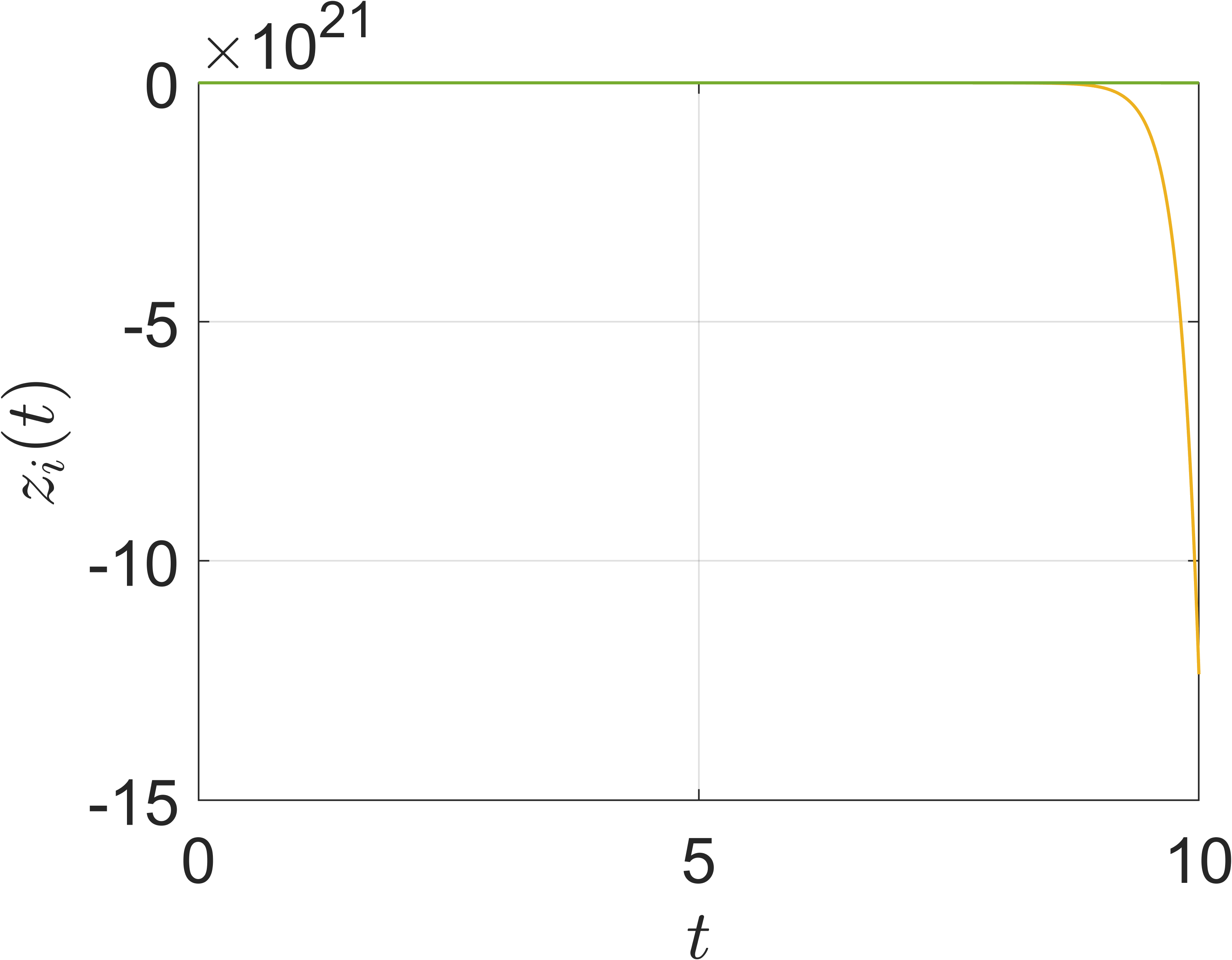}
        \caption{Output with $M=\scriptsize \bm{\0 & \0 & \I \\ \I & \0 & \0 \\ \I & -\I & \0}$.}
        \label{Fig:NSC4Case1z}
    \end{subfigure}
    \begin{subfigure}{0.23\textwidth}
        \includegraphics[width=\linewidth]{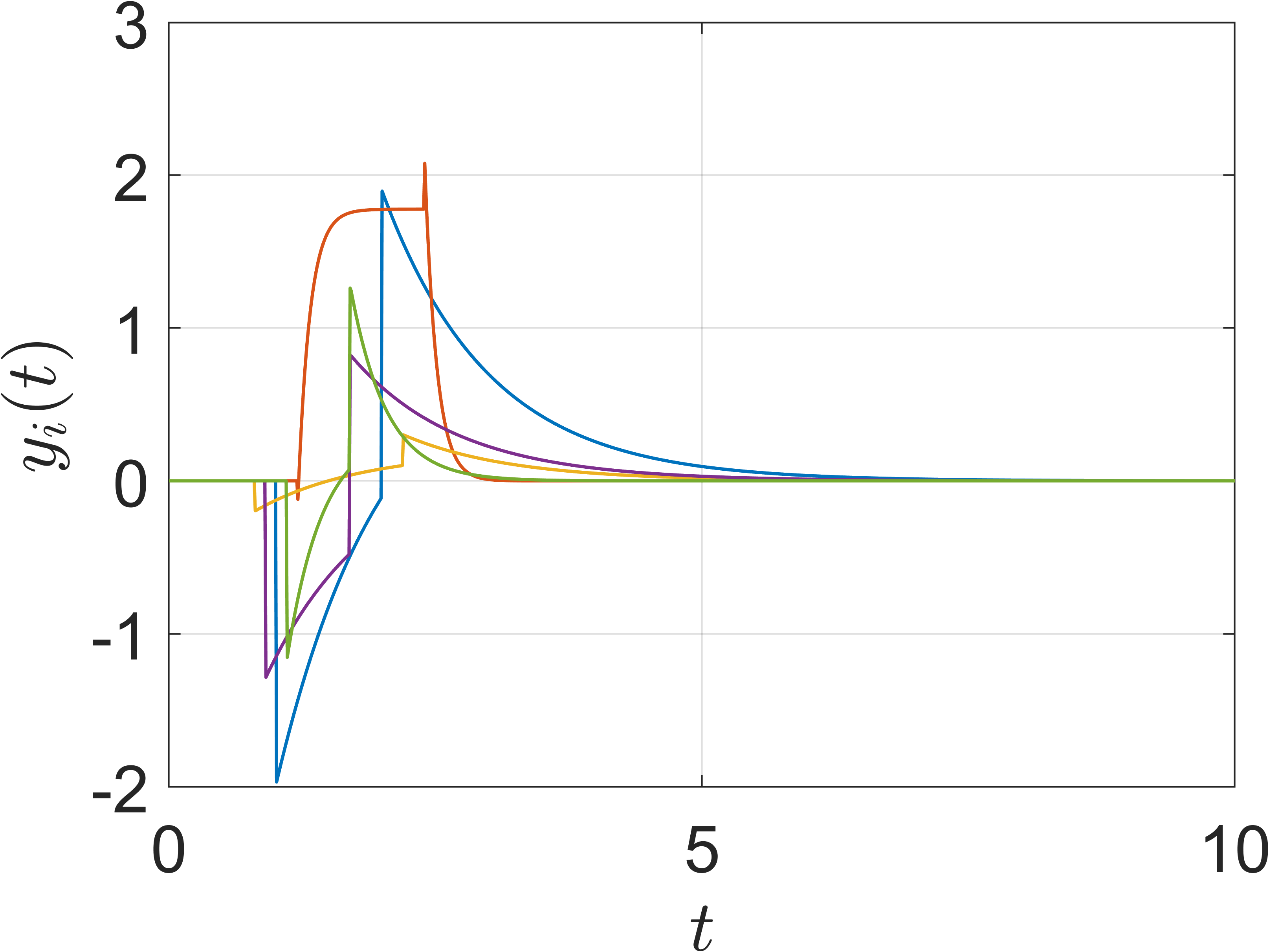}
        \caption{Subsystem outputs with optimal $M$.}
        \label{Fig:NSC4Case2y}
    \end{subfigure}
    \begin{subfigure}{0.23\textwidth}
        \includegraphics[width=\linewidth]{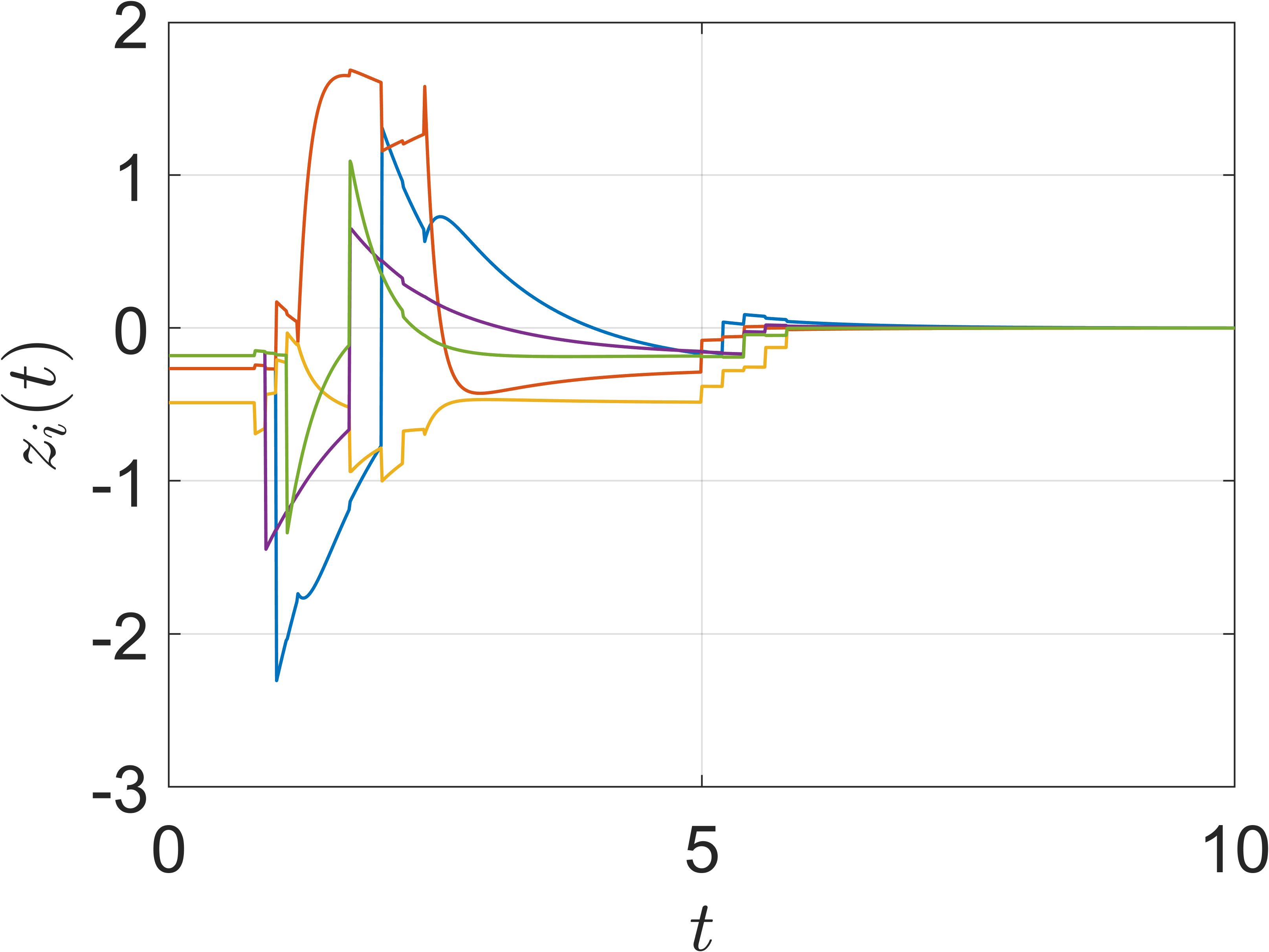}
        \caption{Outputs with optimal $M$.}
        \label{Fig:NSC4Case2z}
    \end{subfigure}
    \caption{\textbf{NSC 4:} \textbf{(a)} An example NSC 4 implemented in matlab using the controllers \eqref{Eq:NumResSubsystems}, plants \eqref{Eq:NumResSubsystemsBar} and an interconnection matrix $M\in\R^{15\times15}$. \textbf{(b)} Used exogenous input signal (in addition to the initial excitation signal shown in Fig. \ref{Fig:NSC3w}). \textbf{(c)} Observed output under $M=\scriptsize \bm{\0 & \0 & \I \\ \I & \0 & \0 \\ \I & -\I & \0}$. 
    Observed: \textbf{(d)} subsystem outputs $y(t)$ and \textbf{(e)} networked system output $z(t)$, under optimally synthesized $M$.}
    \label{Fig:NSC4}
\end{figure}

\subsubsection{\textbf{NSC 3}}
Next, to validate the Prop. \ref{Pr:NSC3Synthesis}, we use the previous networked system (shown in Fig. \ref{Fig:NSC2Simulink}) and connect it to the collection of subsystems (i.e., plants \eqref{Eq:NumResSubsystemsBar}) - making a networked system of the form NSC 3 as shown in Fig. \ref{Fig:NSC3Simulink}. As shown in Fig. \ref{Fig:NSC3Case1y}, this networked system is unstable under the interconnection matrix choice $M= \scriptsize \bm{\I & \I \\ \I & \I}$ \eqref{Eq:NSC3Interconnection}. This motivates the need to synthesize $M$ such that it stabilizes the networked system (i.e., stabilizes the group of unstable plants \eqref{Eq:NumResSubsystemsBar} via the group of controllers \eqref{Eq:NumResSubsystems}). Figures \ref{Fig:NSC3Case2y} and \ref{Fig:NSC3Case2yBar} show the observed output signals when the synthesized $M$ provided by the Prop. \ref{Pr:NSC3Synthesis} is used to interconnect the involved subsystems (i.e., plants and controllers).

\subsubsection{\textbf{NSC 4}}
Finally, to validate the Prop. \ref{Pr:NSC4Synthesis}, we use the previous networked system (shown in Fig. \ref{Fig:NSC3Simulink}) with an added input port ($w$) and an output port ($z$) - making a networked system of the form NSC 4 as shown in Fig. \ref{Fig:NSC4Simulink}. When the interconnection matrix is chosen as $M=\scriptsize \bm{\0 & \0 & \I \\ \I & \0 & \0 \\ \I & -\I & \0}$, as shown in Fig. \ref{Fig:NSC4Case1z}, this networked system amplifies the input signal eventually leading to an unstable behavior. To counter this, we need to synthesize $M$ such that the networked system become finite-gain $L_2$ stable with a gain $\gamma<1$. For this purpose, we used Prop. \eqref{Pr:NSC4Synthesis} and synthesized $M$ that minimizes the $L_2$-gain value of the networked system. Note that, to simplify the problem, we used the ``approximate Simulation'' $M$ matrix format given in Tab. \ref{Tab:Configurations}. The obtained optimal $L_2$-gain value is 
\begin{equation}
    \gamma^* = 0.3351.
\end{equation}
The output trajectories of the resulting finite-gain $L_2$ stable networked system are shown in Figs. \ref{Fig:NSC4Case2y} and \ref{Fig:NSC4Case2z}.

\section{Conclusion}\label{Sec:Conclusion}
In this paper, we considered several widely occurring networked system configurations - each comprised of an interconnected set of non-linear subsystems with known dissipativity properties. For these NSCs, exploiting the known subsystem dissipativity properties, centralized network analysis and network topology synthesis techniques were developed as LMI problems. Then, for these centralized techniques, decentralized counterparts were developed so that they can be executed in a decentralized and compositional manner (i.e., in a way that new/existing subsystems can be added/removed conveniently). Since each proposed technique only uses subsystem dissipativity properties and takes the form of an LMI problem, they can be efficiently implemented and solved. Consequently, the proposed techniques in this paper are ideal for higher-level large-scale networked system design. Finally, we discussed several illustrative numerical examples. Future work aims to extend the proposed techniques for switched and hybrid networked systems.

\bibliographystyle{IEEEtran}
\bibliography{References}

\end{document}